\newtheorem{theorem}{Theorem}[section]
\theoremstyle{definition}
\newtheorem*{Proof}{Proof}
\newtheorem{lemma}{Lemma}[section]
\theoremstyle{remark}
\newtheorem{remark}[theorem]{Remark}
\renewenvironment{proof}{\begin{Proof}}{\end{Proof}}
\global\long\def\SP{SP}
\global\long\def\ASEW{ASP}
\global\long\def\SPDP{PDP}
\global\long\def\ONVW{OVDP}
\global\long\def\ONVWOne{OVDP1}
\global\long\def\ONVWTwo{OVDP2}
\global\long\def\ONVWThree{OVDP3}
\global\long\def\SPs{SP }
\global\long\def\ASEWs{ASP }
\global\long\def\SPDPs{PDP }
\global\long\def\ONVWs{OVDP }
\global\long\def\ONVWOnes{OVDP1 }
\global\long\def\ONVWTwos{OVDP2 }
\global\long\def\ONVWThrees{OVDP3 }
\newcommand{\argmin}{\mathop{\mathrm{argmin}}\limits}
\global\long\def\st{\mathrm{s.t.}}
\global\long\def\RealNumSet{\mathbb{R}}
\global\long\def\FuncIndi#1#2{\iota_{#1}(#2)}
\global\long\def\NormOp#1{\left\|#1\right\|_{\mathrm{op}}}
\global\long\def\NormOpNoResize#1{\|#1\|_{\mathrm{op}}}
\global\long\def\NormET#1{\left\|#1\right\|_{2}}
\global\long\def\NormETNoResize#1{\|#1\|_{2}}
\global\long\def\Precon{\mathbf{\Gamma}}
\global\long\def\PreconSca{\Gamma}
\global\long\def\PreconOpNorm{\mu}
\global\long\def\PreconOpNormSP{\mu_{\SP}}
\global\long\def\ConstASPSPrimal{\sigma}
\global\long\def\ConstASPSDual{\tau}
\global\long\def\IndASPS{l}
\global\long\def\ParamPDS{\gamma}
\global\long\def\ParamSPDP{\tau}
\global\long\def\ParamSPDStheta{\theta}
\global\long\def\ParamONP{\beta}
\global\long\def\VarPrimalElem{x}
\global\long\def\VarDualElem{y}
\global\long\def\VarPrimal{\mathbf{\VarPrimalElem}}
\global\long\def\VarDual{\mathbf{\VarDualElem}}
\global\long\def\FuncPrimal{f}
\global\long\def\FuncDual{g}
\global\long\def\LinOpe{\mathfrak{L}}
\global\long\def\MatLinOpe{\mathbf{L}}
\global\long\def\NumPrimal{N}
\global\long\def\NumDual{M}
\global\long\def\IndPrimal{i}
\global\long\def\IndDual{j}
\global\long\def\NumElemPrimal{n}
\global\long\def\NumElemDual{m}
\global\long\def\NumElemPrimalAll{\tilde{n}}
\global\long\def\NumElemDualAll{\tilde{m}}
\global\long\def\ZeroOpe{\mathbf{O}}
\global\long\def\prox{\mathrm{prox}}
\global\long\def\InnerIter{t}
\global\long\def\NumVerPixUnm{N_{1}}
\global\long\def\NumHorPixUnm{N_{2}}
\global\long\def\NumBandUnm{N_{3}}
\global\long\def\NumEndmember{N_{e}}
\global\long\def\IndEndmember{e}
\global\long\def\NumPixUnm{\NumVerPixUnm\NumHorPixUnm}
\global\long\def\IndPixUnm{i}
\global\long\def\MatEndmemberSymbUnm{E}
\global\long\def\MatEndmemberUnm{\mathbf{\MatEndmemberSymbUnm}}
\global\long\def\MatEndmemberExpUnm{\widetilde{\MatEndmemberUnm}}
\global\long\def\VecEmdmemberSymbUnm{e}
\global\long\def\VecEmdmemberUnm{\mathbf{\VecEmdmemberSymbUnm}}
\global\long\def\AbundanceSymbUnm{a}
\global\long\def\AbundanceUnm{\mathbf{\AbundanceSymbUnm}}
\global\long\def\NoiseUnm{\mathbf{n}}
\global\long\def\NNRealNumSet{\RealNumSet_{+}}
\global\long\def\ObsHSIUnm{\mathbf{v}}
\global\long\def\ParamFidelUnm{\varepsilon}
\global\long\def\BallFidelUnm{B_{2, \ParamFidelUnm}^{\ObsHSIUnm}}
\global\long\def\VarDualUnm{\mathbf{z}}
\global\long\def\StanDivGaussUnm{\sigma}
\global\long\def\SymbGraph{\mathcal{G}}
\global\long\def\Vertex{\mathcal{V}}
\global\long\def\Edge{\mathcal{E}}
\global\long\def\MatWeightElem{W}
\global\long\def\MatWeight{\mathbf{\MatWeightElem}}
\global\long\def\NumVertex{N_{\SymbGraph}}
\global\long\def\NumSampVertex{M_{\SymbGraph}}
\global\long\def\IndVertex{i}
\global\long\def\IndVertexCor{j}
\global\long\def\NeiVertexI#1{\mathcal{N}(#1)}
\global\long\def\VarOneElemGS{x}
\global\long\def\VarTwoElemGS{y}
\global\long\def\VarOneGS{\mathbf{\VarOneElemGS}}
\global\long\def\VarTwoGS{\mathbf{\VarTwoElemGS}}
\global\long\def\GTV#1{\|#1\|_{\mathrm{GTV}}}
\global\long\def\ObsGS{\mathbf{v}}
\global\long\def\GTGS{\mathbf{u}}
\global\long\def\NoiseGS{\mathbf{n}}
\global\long\def\MatSampElemGS{\Phi}
\global\long\def\MatSampGS{\mathbf{\MatSampElemGS}}
\global\long\def\DiffGS{\mathbf{D}_{\SymbGraph}}
\global\long\def\ParmFidelGS{\varepsilon}
\global\long\def\BallFidelGS{B_{2, \ParmFidelGS}^{\ObsGS}}
\global\long\def\VarDualGS{\mathbf{z}}
\global\long\def\StanDivGaussGS{\sigma}
\global\long\def\NumVerPixMNR{N_{1}}
\global\long\def\NumHorPixMNR{N_{2}}
\global\long\def\NumBandMNR{N_{3}}
\global\long\def\ObsMNR{\mathbf{v}}
\global\long\def\GTMNR{\mathbf{u}}
\global\long\def\SparseMNR{\mathbf{s}}
\global\long\def\StripeMNR{\mathbf{l}}
\global\long\def\NoiseMNR{\mathbf{n}}
\global\long\def\DiffvSymb{\mathfrak{D}_{v}}
\global\long\def\DiffhSymb{\mathfrak{D}_{h}}
\global\long\def\DiffbSymb{\mathfrak{D}_{b}}
\global\long\def\Diffv#1{\DiffvSymb(#1)}
\global\long\def\Diffh#1{\DiffhSymb(#1)}
\global\long\def\Diffb#1{\DiffbSymb(#1)}
\global\long\def\ZeroElem{\mathbf{0}}
\global\long\def\ParamFidelMNR{\varepsilon}
\global\long\def\ParamSparMNR{\eta_{\SparseMNR}}
\global\long\def\BallFidelMNR{B_{2, \ParamFidelMNR}^{\ObsMNR}}
\global\long\def\BallSparMNR{B_{1, \ParamSparMNR}^{\ZeroElem}}
\global\long\def\ParamBalanceMNR{\lambda}
\global\long\def\RateSparse{p_{\SparseMNR}}
\global\long\def\VarDualMNR{\mathbf{z}}
\global\long\def\StanDivGaussMNR{\sigma}
\def\bstctlcite{\@ifnextchar[{\@bstctlcite}{\@bstctlcite[@auxout]}}
\def\@bstctlcite[#1]#2{\@bsphack
	\@for\@citeb:=#2\do{%
		\edef\@citeb{\expandafter\@firstofone\@citeb}%
		\if@filesw\immediate\write\csname #1\endcsname{\string\citation{\@citeb}}\fi}%
	\@esphack}
\begin{document}
\bstctlcite{IEEEexample:BSTcontrol}
%
\title{Variable-Wise Diagonal Preconditioning for Primal-Dual Splitting: Design and Applications}
%
%
%

\author{Kazuki~Naganuma,~\IEEEmembership{Student~Member,~IEEE,}
        Shunsuke~Ono,~\IEEEmembership{Member,~IEEE,}
\thanks{Manuscript received XXX, XXX; revised XXX XXX, XXX.}%
\thanks{K. Naganuma is with the Department of Computer Science, Tokyo Institute of Technology, Yokohama, 226-8503, Japan (e-mail: naganuma.k.aa@m.titech.ac.jp).}
\thanks{S. Ono is with the Department of Computer Science, Tokyo Institute of Technology, Yokohama, 226-8503, Japan (e-mail: ono@c.titech.ac.jp).}
\thanks{This work was supported Grant-in-Aid for JSPS Fellows under Grant 23KJ0912, in part by JST PRESTO under Grant JPMJPR21C4 and JST AdCORP under Grant JPMJKB2307, and in part by JSPS KAKENHI under Grant 22H03610, 22H00512, and 23H01415.}}

%
%

\markboth{Journal of \LaTeX\ Class Files,~Vol.~XX, No.~X, August~20XX}%
{Shell \MakeLowercase{\textit{et al.}}: Bare Demo of IEEEtran.cls for IEEE Journals}
%



\maketitle

\begin{abstract}
This paper proposes a method for designing diagonal preconditioners for a preconditioned primal-dual splitting method (P-PDS), an efficient algorithm that solves nonsmooth convex optimization problems. To speed up the convergence of P-PDS, a design method has been proposed to automatically determine appropriate preconditioners from the problem structure. However, the existing method has two limitations. One is that it directly accesses all elements of matrices representing linear operators involved in a given problem, which is inconvenient for handling linear operators implemented as procedures rather than matrices. The other is that it takes an element-wise preconditioning approach, which turns certain types of proximity operators into analytically intractable forms. To overcome these limitations, we establish an Operator norm-based design method of Variable-wise Diagonal Preconditioning (\ONVW). First, \ONVWs constructs diagonal preconditioners using only (upper bounds) of the operator norms of linear operators, thus eliminating the need for their explicit matrix representations. Furthermore, since \ONVWs takes a variable-wise preconditioning approach, it keeps any proximity operator analytically computable. We also prove that our preconditioners satisfy the convergence condition of P-PDS. Finally, we demonstrate the effectiveness and usefulness of \ONVWs through applications to mixed noise removal of hyperspectral images, hyperspectral unmixing, and graph signal recovery.
\end{abstract}

\begin{IEEEkeywords}
Primal-dual splitting method (PDS), diagonal preconditioning, convex optimization, signal estimation
\end{IEEEkeywords}

%
\IEEEpeerreviewmaketitle

\section{Introduction}
\label{sec:intro}

Many signal estimation and processing problems, such as denoising, interpolation, decomposition, and reconstruction, have been resolved by casting them as convex optimization problems~\cite{parikh2014proximal,combettes2021fixed} of the form:
\vspace{-2mm}
\begin{align}
	\label{prob:general_form_of_optimization}
	\min_{\substack{\VarPrimal_{1}, \ldots, \VarPrimal_{\NumPrimal}, \\ 
			\VarDual_{1}, \ldots, \VarDual_{\NumDual}}} \:
	& \sum_{\IndPrimal=1}^{\NumPrimal}\FuncPrimal_{\IndPrimal}(\VarPrimal_{\IndPrimal}) 
	+ \sum_{\IndDual=1}^{\NumDual}\FuncDual_{\IndDual}\left(\VarDual_{\IndDual}\right) \nonumber \\ 
	\mathrm{s.t.} \:  
	& \VarDual_{1} = \sum_{\IndPrimal=1}^{\NumPrimal}\LinOpe_{1,\IndPrimal}(\VarPrimal_{\IndPrimal}), 
	\ldots, 
	\VarDual_{\NumDual}=\sum_{\IndPrimal=1}^{\NumPrimal}\LinOpe_{\NumDual,\IndPrimal}(\VarPrimal_{\IndPrimal}),
\end{align}
where $\FuncPrimal_{\IndPrimal}:\RealNumSet^{\NumElemPrimal_{\IndPrimal}}\rightarrow(-\infty,+\infty]$  and $\FuncDual_{\IndDual}:\RealNumSet^{\NumElemDual_{\IndDual}}\rightarrow(-\infty,+\infty]$ are proximable\footnotemark ~proper lower-semicontinuous convex functions, and $ \LinOpe_{\IndDual,\IndPrimal}:\RealNumSet^{\NumElemPrimal_{\IndPrimal}}\rightarrow\RealNumSet^{\NumElemDual_{\IndDual}}$ are linear operators ($\forall \IndPrimal=1, \ldots, \NumPrimal$ and $\forall \IndDual=1, \ldots, \NumDual$). The variables $\VarPrimal_{1},\ldots,\VarPrimal_{\NumPrimal}$ represent estimated signals or components, and $\VarDual_{1},\ldots,\VarDual_{\NumDual}$ are auxiliary variables for splitting.
\footnotetext{If an efficient computation of the proximity operator (see. Eq.~\eqref{eq:proximity_operator}) of $f$ is available, we call $f$ proximable.}

As a method for solving Prob.~\eqref{prob:general_form_of_optimization}, a primal-dual splitting method (PDS)~\cite{PDS_1} has attracted attention~\cite{PDS_appl_Condat_2014,AdaptivePDS_Goldstein_2015,PDS_appl_Ono_2015,PDS_review_2015,ono2017primal,PDS_Linesearch_Malitsky_2018,PDS_appl_Kyochi_2021,GPDS_He_2022,PDS_Linesearch_gold_Chang_2022} due to its simple implementation without operator inversions.\footnote{This algorithm has been generalized by Condat~\cite{PDS_2} and Vu~\cite{PDS_3}, where smooth convex functions are optimized by using their Lipschitzian gradients.} 
To improve the convergence speed of PDS, a preconditioned PDS (P-PDS) has been studied~\cite{DP-PDS,DP_PDS_grad,pmlr-v108-ye20a,SPDP}.
P-PDS is a generalization of the standard PDS, where the scalar-valued stepsizes of PDS are replaced by (positive definite) matrix-valued preconditioners.
The theoretical convergence of P-PDS is established in a primal-dual space equipped with a skewed metric, which is determined by the linear operators involved in the optimization problem and the preconditioners used (see~\cite{PDS_2,DP-PDS,ono_2015} for details). 
Preconditioning can be viewed as the selection of an appropriate metric for optimization algorithms and is a crucial long-standing issue not only in P-PDS but also in various proximal algorithms~\cite{COMBETTES201317,BStephen_quasiNewton_2019}.

The appropriate preconditioners that accelerate the convergence of P-PDS vary greatly depending on the structure of the target optimization problem (see Section~\ref{sec:experiments} for detailed examples). 
To automatically determine such preconditioners, the authors in~\cite{DP-PDS} have proposed a diagonal-preconditioner design method.
The elements of the diagonal preconditioners consist of the row/column absolute sum of the elements of the explicit matrices representing the linear operators $\LinOpe_{j,i}$ in~\eqref{prob:general_form_of_optimization}, and thus the resulting diagonal elements of the preconditioners can be different for each element in one variable. 

Although this design method determines reasonable diagonal preconditioners, there exist two limitations that are considerable in real-world applications. First, the method is difficult to apply in the case where (some of) the linear operators $\LinOpe_{j,i}$ in Prob.~\eqref{prob:general_form_of_optimization} are not implemented as explicit matrices because it requires access to the entire elements of the matrices to construct the preconditioners. We often encounter such situations, especially in imaging applications, where the linear operators are implemented not as explicit matrices but as procedures that compute forward and adjoint operations in an efficient manner, e.g., difference operators~\cite{chambolle2010introduction,difference_operator_1} and frame transforms~\cite{Frame1,Frame2,Frame3}. Second, some proximable functions $\FuncPrimal_{\IndPrimal}$ and $\FuncDual_{\IndDual}$ are not completely separable for each element of the input variables $\VarPrimal_{\IndPrimal}$ and $\VarDual_{\IndDual}$, e.g., mixed norms and the indicator functions of norm balls~\cite{Proximity_Operator_Repository}. For such functions, the element-wise preconditioning might make the functions non-proximable.

To address the above issues, this paper proposes an Operator-norm-based design method of Variable-wise Diagonal Preconditioning (\ONVW). 
Specifically, we introduce a new general form of P-PDS preconditioners, and then propose specific preconditioners based on this general form. 
We also prove that the sequence generated by P-PDS with \ONVWs converges to an optimal solution of Prob.~\eqref{prob:general_form_of_optimization}.
 
Our method has two features preferred in many real-world applications. 
First, our preconditioners can be computed from (upper bounds of) the operator norms of the linear operators $\LinOpe_{j,i}$, meaning that our method does not need their explicit matrix representations. 
This is because (upper bounds of) the operator norms are often known or can be estimated without matrix implementation for typical linear operators used in signal processing applications, including the ones mentioned above. 
Second, the elements of the diagonal preconditioners obtained by our method take the same value for all the elements of each variable, i.e., variable-wise preconditioning. 
This maintains the proximablity of the functions $\FuncPrimal_{\IndPrimal}$ and $\FuncDual_{\IndDual}$ in Prob.~\eqref{prob:general_form_of_optimization}.

Comprehensive experiments are conducted by applying our method to three signal estimation problems: mixed noise removal of hyperspectral images, hyperspectral unmixing, and graph signal recovery. 
By discussing the convergence in these three optimization problems, which have very different structures, we demonstrate the effectiveness and usefulness of our method.

This paper is organized as follows.
Section~\ref{sec:preliminaries} gives preliminaries on mathematical tools, the description of P-PDS, and reviews of existing preconditioner design methods. 
In Section~\ref{sec:proposed}, we present \ONVWs and prove the convergence theorem of P-PDS with \ONVW. 
Their applications to mixed noise removal hyperspectral images, hyperspectral unmixing, and graph signal recovery are given in Section~\ref{sec:experiments}. 
Finally, we conclude the paper in Section~\ref{sec:conclusion}.

The preliminary version of this work, without the generalization of our method, applications to various signal estimation tasks, or deeper discussion, has appeared in conference proceedings~\cite{P-PDS_naganuma_2022}.

\section{Preliminaries}
\label{sec:preliminaries}
\subsection{Notations}
\label{ssec:Notations}
In this paper, 
vectors and matrices are denoted by lowercase and uppercase bold letters, for example, $\mathbf{x}$ and $\mathbf{X}$, respectively.
For a vector $\mathbf{x} = [x_{1}, \ldots, x_{N}]^{\top} \in \RealNumSet^{N}$, each scalar value $x_{i}$ $(1\leq i\leq N)$ is called the $i$th element of $\mathbf{x}$ and the $\ell_{p}$ norm of $\mathbf{x}$ is defined by $\|\mathbf{x}\|_{p} = (\sum_{i = 1}^{N} |x_{i}|^{p})^{1/p}$ for $p\geq 1$.
Similarly, for a matrix $\mathbf{X} = [x_{j,i}]_{1\leq j \leq M, 1\leq i \leq N}$, each scalar value $x_{j,i}$ $(1\leq j \leq M, 1\leq i \leq N)$ is called the $(j, i)$th element of $\mathbf{X}$.
We denote a matrix $\mathbf{X}\in \RealNumSet^{\NumElemDualAll\times \NumElemPrimalAll}$ ($\NumElemDualAll = \sum_{\IndDual = 1}^{\NumDual} \NumElemDual_{\IndDual}, \NumElemPrimalAll = \sum_{\IndPrimal = 1}^{\NumPrimal} \NumElemPrimal_{\IndPrimal}$) consisting of block matrices $\mathbf{X}_{j,i} \in \RealNumSet^{\NumElemDual_{\IndDual} \times \NumElemPrimal_{\IndPrimal}}$ ($j = 1, \ldots , M$ and $i = 1, \ldots , N$) by $\mathbf{X} = [\mathbf{X}_{j,i}]_{1\leq j \leq M, 1\leq i \leq N}$.

Let $\mathfrak{L}:\RealNumSet^{N}\rightarrow \RealNumSet^{M}$ be a linear operator. 
We denote the adjoint operator of $\mathfrak{L}$ as $\mathfrak{L}^*$, which satisfies  $\langle\mathfrak{L}(\mathbf{x}),\mathbf{y}\rangle=\langle\mathbf{x},\mathfrak{L}^{*}(\mathbf{y})\rangle$ for any $\mathbf{x}\in\mathbb{R}^{N}$ and  $\mathbf{y}\in\mathbb{R}^{M}$.

\subsection{Mathematical Tools}
\label{ssec:Mathematical_Tools}

Let $f : \RealNumSet^{N} \rightarrow (-\infty,\infty]$ be a proximable proper lower-semicontinuous convex function and $\mathbf{G} \in \RealNumSet^{N\times N}$ be a symmetric and
positive definite matrix. 
The proximity operator of $f$ relative to the metric induced by $\mathbf{G}$ is defined as 
\begin{equation}
	\label{eq:skewed_proximity_operator}
	\prox_{\mathbf{G},f}(\mathbf{x}):=\argmin_{\mathbf{y}}\frac{1}{2}\langle \mathbf{x - y}, \mathbf{G(x-y)}\rangle + f(\mathbf{y}),
\end{equation}
where $\langle \cdot ,\cdot \rangle$ is the Euclidean inner product. If $\mathbf{G}$ is a positive scalar matrix, i.e., $\mathbf{G}=\alpha\mathbf{I}$ $(\alpha>0)$, the proximity operator is identical to the standard proximity operator:
\begin{equation}
	\label{eq:proximity_operator}
	\prox_{\mathbf{G},f}(\mathbf{x}) = \argmin_{\mathbf{y}}\frac{1}{2}\|\mathbf{x-y}\|_{2}^{2}+\frac{1}{\alpha}f(\mathbf{y}).
\end{equation}
In this paper, the proximity operator relative to the metric induced by a positive matrix that is not scalar matrix is called the \textit{skewed proximity operator}. We would like to note that the standard proximity operators of some popular convex functions, such as the mixed $\ell_{1,2}$-norm and the indicator functions of norm balls, have analytic solutions but their computations are not completely separable element by element. 
In such cases, even if $\mathbf{G}$ is diagonal (with different elements), the computation of the skewed proximity operator becomes difficult.

\textit{The Fenchel--Rockafellar conjugate function} of $f$ is defined as
\begin{equation}
	\label{eq:def_FRconjugate_function}
	f^{*}(\mathbf{x}) := \max_{\mathbf{y}} \langle \mathbf{x}, \mathbf{y} \rangle  - f(\mathbf{y}).
\end{equation}
Thanks to the generalization of Moreau's Identity~\cite[Theorem 3.1 (ii)]{G_Moreau_Identity_2013}, the skewed proximity operator of $f^{*}$ is calculated as
\begin{equation}
	\prox_{\mathbf{G}, f^{*}}(\mathbf{x}) = \mathbf{x} - \mathbf{G}^{-1}\prox_{\mathbf{G}^{-1}, f}(\mathbf{G}\mathbf{x}).
\end{equation}

For a given nonempty closed convex set $C\subset \RealNumSet^{N}$, the indicator function of $C$ is defined by 
\begin{equation}
	\label{eq:indicator_function}
\iota_{C}(\mathcal{X})
:=
\begin{cases}
	0, & \mathrm{if} \: \mathcal{X}\in C; \\
	\infty, & \mathrm{otherwise}.
\end{cases} 
\end{equation}
The proximity operator of the indicator function $\iota_{C}$ is equivalent to the convex projection onto $C$.
The following convex sets are useful in signal processing applications.
\begin{itemize}
	\item For $\mathbf{c} \in \RealNumSet^{N}$, the $\mathbf{c}$-centered $\ell_{p}$-ball ($p=1$ or $2$) with the radius $\alpha > 0$ defined by
	\begin{equation}
		\label{eq:norm_ball}
		B_{p, \alpha}^{\mathbf{c}} := \{ \mathbf{x} \in \RealNumSet^{N} \: | \: \|\mathbf{x} - \mathbf{c} \|_{p} \leq \alpha \}.
	\end{equation}
	\item The nonnegative orthant $\NNRealNumSet^{N} := [0, +\infty)^{N}$.
\end{itemize}

For a linear operator $\mathfrak{L}$, the operator norm $\NormOpNoResize{\mathfrak{L}}$ is defined by
\begin{equation}
	\label{eq:operator_norm}
	\NormOpNoResize{\mathfrak{L}} := \sup_{\mathbf{x} \neq \ZeroElem} \frac{\|\mathfrak{L}(\mathbf{x})\|_{2}}{\|\mathbf{x}\|_{2}}.
\end{equation}
For a matrix $\mathbf{A}$, its operator norm satisfies
\begin{equation}
	\NormOp{\mathbf{A}}:=\sup_{\mathbf{x}\neq\ZeroElem}\frac{\|\mathbf{Ax}\|_{2}}{\|\mathbf{x}\|_{2}}=\sigma_{1}(\mathbf{A}),
\end{equation}
where $\sigma_{1}(\mathbf{A})$ is the maximum singular value of $\mathbf{A}$.
Let $\mathfrak{L}_{1}\circ\mathfrak{L}_{2}$ be the composition of linear operators $\mathfrak{L}_{1}$ and $\mathfrak{L}_{2}$.
The operator norm of $\mathfrak{L}_{1}\circ\mathfrak{L}_{2}$ satisfies that
\begin{equation}
	\label{eq:submultiplicity}
	\NormOpNoResize{\mathfrak{L}_{1}\circ\mathfrak{L}_{2}} \leq \NormOpNoResize{\mathfrak{L}_{1}}\NormOpNoResize{\mathfrak{L}_{2}}.
\end{equation}
This property is called \textit{the submultiplicity}.

\subsection{Preconditioned PDS (P-PDS)}
For Prob.~\eqref{prob:general_form_of_optimization}, let $\VarPrimal = [\VarPrimal_{1}^{\top}, \ldots, \VarPrimal_{\NumPrimal}^{\top}]^{\top} \in \RealNumSet^{\NumElemPrimalAll}$ ($\NumElemPrimalAll = \sum_{\IndPrimal = 1}^{\NumPrimal} \NumElemPrimal_{\IndPrimal}$), 
$\VarDual = [\VarDual_{1}^{\top}, \ldots, \VarDual_{\NumDual}^{\top}]^{\top} \in \RealNumSet^{\NumElemDualAll}$ ($\NumElemDualAll = \sum_{\IndDual = 1}^{\NumDual} \NumElemDual_{\IndDual}$),
$\FuncPrimal(\VarPrimal) = \sum_{\IndPrimal = 1}^{\NumPrimal} \FuncPrimal_{\IndPrimal}(\VarPrimal_{\IndPrimal})$, 
$\FuncDual(\VarDual) = \sum_{\IndDual = 1}^{\NumDual} \FuncDual_{\IndDual}(\VarDual_{\IndDual})$,
and 
\begin{equation}
	\label{eq:jointed_linear_operator}
	\LinOpe:=
	\begin{bmatrix} 
		\LinOpe_{1,1} & \LinOpe_{1,2} & \cdots & \LinOpe_{1,\NumPrimal} \\ 
		\LinOpe_{2,1} & \LinOpe_{2,2} & \cdots & \LinOpe_{2,\NumPrimal} \\ 
		\vdots & \vdots & \ddots & \vdots \\ 
		\LinOpe_{\NumDual, 1} & \LinOpe_{\NumDual, 2} & \cdots & \LinOpe_{\NumDual,\NumPrimal} 
	\end{bmatrix}.
\end{equation}
P-PDS~\cite{DP-PDS} computes an optimal solution of Prob.~\eqref{prob:general_form_of_optimization} by the following iterative procedures:
\begin{align}
	\left\lfloor
	\begin{array}{l}
		\VarPrimal^{(\InnerIter + 1)} \leftarrow \prox_{\Precon_{1}^{-1}, \FuncPrimal} (\VarPrimal^{(\InnerIter)} - \Precon_{1}\LinOpe^{*}(\VarDual^{(\InnerIter)})), \\
		\VarDual^{(\InnerIter + 1)} \leftarrow \prox_{\Precon_{2}^{-1}, \FuncDual^{*}} (\VarDual^{(\InnerIter)} + \Precon_{2}\LinOpe(2\VarPrimal^{(\InnerIter + 1)} - \VarPrimal^{(\InnerIter)})),
	\end{array}
	\right.
	\label{eq:P_PDS}
\end{align}
where $\Precon_{1} \in \RealNumSet^{\NumElemPrimalAll\times \NumElemPrimalAll}$ and $\Precon_{2} \in \RealNumSet^{\NumElemDualAll \times \NumElemDualAll}$ are symmetric and positive definite matrices called \textit{preconditioners}.

If $\Precon_{1}$ and $\Precon_{2}$ are block-diagonal matrices, that is, $\Precon_{1} = \mathrm{diag}(\Precon_{1,1}, \ldots, \Precon_{1,\NumPrimal})$ and $\Precon_{2} = \mathrm{diag}(\Precon_{2,1}, \ldots, \Precon_{2,\NumDual})$ for matrices $\Precon_{1,1}, \ldots, \Precon_{1,\NumPrimal}, \Precon_{2,1}, \ldots, \Precon_{2,\NumDual}$ corresponding to $\VarPrimal_{1}, \ldots, \VarPrimal_{\NumPrimal}, \VarDual_{1}, \ldots, \VarDual_{\NumDual}$, the procedures in~\eqref{eq:P_PDS} can be rewritten as the following equivalent form:
\begin{align}
\left\lfloor
\begin{array}{l}
	\VarPrimal_{1}^{(\InnerIter + 1)} 
	\leftarrow 
	\prox_{\Precon_{1,1}^{-1}, \FuncPrimal_{1}} 
	(\VarPrimal_{1}^{(\InnerIter)} 
	- \Precon_{1,1} \sum_{\IndDual = 1}^{\NumDual} \LinOpe_{\IndDual, 1}^{*}(\VarDual_{\IndDual}^{(\InnerIter)})), \\
	\vdots \\
	\VarPrimal_{\NumPrimal}^{(\InnerIter + 1)} 
	\leftarrow 
	\prox_{\Precon_{1,\NumPrimal}^{-1}, \FuncPrimal_{\NumPrimal}} 
	(\VarPrimal_{\NumPrimal}^{(\InnerIter)} 
	- \Precon_{1,\NumPrimal} \sum_{\IndDual = 1}^{\NumDual} \LinOpe_{\IndDual, \NumPrimal}^{*}(\VarDual_{\IndDual}^{(\InnerIter)})), \\
	\VarDual_{1}^{(\InnerIter + 1)} 
	\leftarrow \\ 
	~~~ \prox_{\Precon_{2,1}^{-1}, \FuncDual_{1}^{*}}(\VarDual_{1}^{(\InnerIter)} + \Precon_{2,1} \sum_{\IndPrimal = 1}^{\NumPrimal}\LinOpe_{1, \IndPrimal}(2\VarPrimal_{\IndPrimal}^{(\InnerIter + 1)} - \VarPrimal_{\IndPrimal}^{(\InnerIter)})), \\
	\vdots \\
	\VarDual_{\NumDual}^{(\InnerIter + 1)} 
	\leftarrow \\ 
	~~~ \prox_{\Precon_{2,\NumDual}^{-1}, \FuncDual_{\NumDual}^{*}}(\VarDual_{\NumDual}^{(\InnerIter)} + \Precon_{2,\NumDual} \sum_{\IndPrimal = 1}^{\NumPrimal}\LinOpe_{\NumDual, \IndPrimal}(2\VarPrimal_{\IndPrimal}^{(\InnerIter + 1)} - \VarPrimal_{\IndPrimal}^{(\InnerIter)})). 
\end{array}
\right.
\label{eq:P_PDS_BD}
\end{align}
Compared with~\eqref{eq:P_PDS}, the procedures in~\eqref{eq:P_PDS_BD} can easily be calculated because it avoids the computations of the skewed proximity operators and linear operators over the entire variables.

Here, we introduce the convergence theorem of P-PDS.

\begin{theorem}{ \cite[Theorem~1]{DP-PDS} }
	\label{theo:convergence_property_of_DP_PDS}
	Let $\Precon_{1}$ and $\Precon_{2}$ be symmetric and
	positive definite matrices satisfying
	\begin{equation}
		\label{eq:convergence_equation}
		\NormOp{\Precon_{2}^{\frac{1}{2}}\circ\LinOpe\circ\Precon_{1}^{\frac{1}{2}}}^{2} < 1.
	\end{equation}
	Then, the sequence $(\VarPrimal_{1}^{(\InnerIter)},\ldots,\VarPrimal_{\NumPrimal}^{(\InnerIter)},\VarDual_{1}^{(\InnerIter)},\ldots,\VarDual_{\NumDual}^{(\InnerIter)})$ generated by~\eqref{eq:P_PDS} converges to an optimal solution $(\VarPrimal_{1}^{*},\ldots,\VarPrimal_{\NumPrimal}^{*},\VarDual_{1}^{*},\ldots,$ $\VarDual_{\NumDual}^{*})$ of Prob.~\eqref{prob:general_form_of_optimization}.
\end{theorem}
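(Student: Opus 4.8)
The plan is to recognize the P-PDS recursion~\eqref{eq:P_PDS} as a \emph{preconditioned proximal point iteration} for the monotone inclusion encoding the optimality conditions of Prob.~\eqref{prob:general_form_of_optimization}, and then to invoke the convergence theory of firmly nonexpansive operators in a skewed metric determined by the preconditioners. First I would cast the problem as a monotone inclusion: writing $\mathbf{z} = (\VarPrimal, \VarDual)$, the saddle-point optimality conditions read $-\LinOpe^{*}\VarDual \in \partial\FuncPrimal(\VarPrimal)$ and $\LinOpe\VarPrimal \in \partial\FuncDual^{*}(\VarDual)$, i.e. $\ZeroElem \in \mathbf{M}\mathbf{z}$ for the operator
\begin{equation}
	\mathbf{M} : (\VarPrimal, \VarDual) \mapsto \big(\partial\FuncPrimal(\VarPrimal) + \LinOpe^{*}\VarDual,\ \partial\FuncDual^{*}(\VarDual) - \LinOpe\VarPrimal\big).
\end{equation}
Since $\partial\FuncPrimal \times \partial\FuncDual^{*}$ is maximal monotone and the linear part is skew-symmetric (hence monotone and everywhere continuous), $\mathbf{M}$ is maximal monotone, and its zeros coincide with the optimal solutions of Prob.~\eqref{prob:general_form_of_optimization}.

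Next I would introduce the block metric matrix
\begin{equation}
	\mathbf{P} := \begin{bmatrix} \Precon_{1}^{-1} & -\LinOpe^{*} \\ -\LinOpe & \Precon_{2}^{-1} \end{bmatrix},
\end{equation}
and reduce~\eqref{eq:P_PDS} to proximal-point form. Reading off the first-order conditions of the two (skewed) proximity steps gives $\ZeroElem \in \partial\FuncPrimal(\VarPrimal^{(\InnerIter+1)}) + \Precon_{1}^{-1}(\VarPrimal^{(\InnerIter+1)} - \VarPrimal^{(\InnerIter)}) + \LinOpe^{*}\VarDual^{(\InnerIter)}$ and $\ZeroElem \in \partial\FuncDual^{*}(\VarDual^{(\InnerIter+1)}) + \Precon_{2}^{-1}(\VarDual^{(\InnerIter+1)} - \VarDual^{(\InnerIter)}) - \LinOpe(2\VarPrimal^{(\InnerIter+1)} - \VarPrimal^{(\InnerIter)})$; inserting $\pm\LinOpe^{*}\VarDual^{(\InnerIter+1)}$ and $\pm\LinOpe\VarPrimal^{(\InnerIter+1)}$ and collecting the difference terms yields exactly $\ZeroElem \in \mathbf{M}\mathbf{z}^{(\InnerIter+1)} + \mathbf{P}(\mathbf{z}^{(\InnerIter+1)} - \mathbf{z}^{(\InnerIter)})$, so that $\mathbf{z}^{(\InnerIter+1)} = (\mathbf{I} + \mathbf{P}^{-1}\mathbf{M})^{-1}\mathbf{z}^{(\InnerIter)}$. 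The cross terms $-\LinOpe^{*}$ and $-\LinOpe$ in $\mathbf{P}$ are precisely what absorb the over-relaxation term $2\VarPrimal^{(\InnerIter+1)} - \VarPrimal^{(\InnerIter)}$ and the stale dual $\VarDual^{(\InnerIter)}$ into a symmetric proximal step.

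The heart of the argument, and the step I expect to be the main obstacle, is showing that $\mathbf{P}$ is symmetric positive definite \emph{exactly} under hypothesis~\eqref{eq:convergence_equation}. Substituting $\VarPrimal = \Precon_{1}^{\frac{1}{2}}\mathbf{a}$, $\VarDual = \Precon_{2}^{\frac{1}{2}}\mathbf{b}$ into the quadratic form and writing $\mathbf{K} := \Precon_{2}^{\frac{1}{2}}\circ\LinOpe\circ\Precon_{1}^{\frac{1}{2}}$, the symmetry of $\Precon_{1}^{\frac{1}{2}}, \Precon_{2}^{\frac{1}{2}}$ and the Cauchy--Schwarz inequality give
\begin{equation}
	\langle \mathbf{z}, \mathbf{P}\mathbf{z} \rangle = \|\mathbf{a}\|_{2}^{2} - 2\langle \mathbf{K}\mathbf{a}, \mathbf{b}\rangle + \|\mathbf{b}\|_{2}^{2} \geq (1 - \NormOp{\mathbf{K}})(\|\mathbf{a}\|_{2}^{2} + \|\mathbf{b}\|_{2}^{2}),
\end{equation}
which is strictly positive for $\mathbf{z} \neq \ZeroElem$ precisely when $\NormOp{\mathbf{K}} < 1$, i.e. $\NormOp{\mathbf{K}}^{2} < 1$, the stated condition.

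Finally, with $\mathbf{P} \succ 0$ the bilinear form $\langle \cdot, \mathbf{P}\,\cdot\rangle$ is an inner product; in the Hilbert space it induces, $\mathbf{P}^{-1}\mathbf{M}$ is maximal monotone, so its resolvent $(\mathbf{I} + \mathbf{P}^{-1}\mathbf{M})^{-1}$ is firmly nonexpansive. The P-PDS iteration is thus a fixed-point iteration of a firmly nonexpansive operator whose fixed-point set equals the zero set of $\mathbf{M}$, assumed nonempty. Standard Krasnosel'ski{\u\i}--Mann / proximal-point convergence then gives $\mathbf{z}^{(\InnerIter)} \to \mathbf{z}^{*}$ with $\ZeroElem \in \mathbf{M}\mathbf{z}^{*}$, and in finite dimensions this convergence is in norm. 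Translating back, the limit $(\VarPrimal_{1}^{*}, \ldots, \VarDual_{\NumDual}^{*})$ is an optimal solution of Prob.~\eqref{prob:general_form_of_optimization}, as claimed.
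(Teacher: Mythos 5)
Your proposal is correct, and it is essentially the proof of this result as given in the cited source: the paper itself does not prove Theorem~\ref{theo:convergence_property_of_DP_PDS} but imports it by citation from \cite{DP-PDS}, where the argument is exactly the one you reconstruct --- rewriting the iteration~\eqref{eq:P_PDS} as a proximal-point step $\ZeroElem \in \mathbf{M}\mathbf{z}^{(\InnerIter+1)} + \mathbf{P}(\mathbf{z}^{(\InnerIter+1)} - \mathbf{z}^{(\InnerIter)})$ for the maximally monotone saddle-point operator, showing that condition~\eqref{eq:convergence_equation} makes the skewed metric $\mathbf{P}$ positive definite, and invoking firmly nonexpansive/proximal-point convergence in that metric. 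Your handling of the key step (the substitution $\VarPrimal = \Precon_{1}^{\frac{1}{2}}\mathbf{a}$, $\VarDual = \Precon_{2}^{\frac{1}{2}}\mathbf{b}$ plus Cauchy--Schwarz to get $\langle \mathbf{z}, \mathbf{P}\mathbf{z}\rangle \geq (1 - \NormOpNoResize{\mathbf{K}})(\|\mathbf{a}\|_{2}^{2} + \|\mathbf{b}\|_{2}^{2})$) is sound, and the remaining steps (maximal monotonicity of the sum, nonemptiness of the zero set matching the theorem's implicit existence assumption, norm convergence in finite dimensions) are all justified.
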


\subsection{Existing Preconditioner Design Methods}
\label{ssec:existing_preconditioners}
\subsubsection{Scalar Preconditioning (\SP)}
The standard PDS~\cite{PDS_1} can be recovered by setting the preconditioners to be scalar matrices, i.e., 
\begin{equation}
	\label{eq:precon_SP}
	\Precon_{1}=\ParamPDS_{1}\mathbf{I}, 
	\Precon_{2}=\ParamPDS_{2}\mathbf{I}.
\end{equation}
The parameters $\ParamPDS_{1}$ and $\ParamPDS_{2}$ are positive scalars that satisfy~\eqref{eq:convergence_equation}, that is, 
\begin{equation}
	\label{eq:PDS_condition}
	\ParamPDS_{1}\ParamPDS_{2}\NormOp{\LinOpe}^{2} < 1.
\end{equation}
In practice, the parameter $\ParamPDS_{2}$ is often set as
\begin{equation}
	\label{eq:SP_expriment}
	\ParamPDS_{2} = \frac{1}{\PreconOpNormSP^2\gamma_{1}},
\end{equation}
where $\PreconOpNormSP$ is an upper bound of $\NormOpNoResize{\LinOpe}$.
Since $\NormOpNoResize{\LinOpe} < \PreconOpNormSP$, the parameters $\ParamPDS_{1}$ and $\ParamPDS_{2}$ in~\eqref{eq:SP_expriment} satisfy the inequality in~\eqref{eq:PDS_condition}.
We note that the parameter $\ParamPDS_{1}$ needs to be manually adjusted for accelerating the convergence of P-PDS.

\subsubsection{Row/Column Absolute Sum-Based Element-Wise Preconditioning (\ASEW)}
Let $\MatLinOpe_{\IndDual, \IndPrimal}$ be the representation matrix of $\LinOpe_{\IndDual, \IndPrimal}$. The authors of~\cite{DP-PDS} present a design method of constructing the preconditioners $\Precon_{1} = \mathrm{diag}(\Precon_{1,1}, \ldots, \Precon_{1,\NumPrimal})$ and $\Precon_{2} = \mathrm{diag}(\Precon_{2,1}, \ldots, \Precon_{2,\NumDual})$ as follows: 
\begin{align}
	\Precon_{1, \IndPrimal} = 
	& \mathrm{diag} \left(
	\frac{1}{\ConstASPSPrimal_{\IndPrimal,1}}, 
	\ldots, 
	\frac{1}{\ConstASPSPrimal_{\IndPrimal,\NumElemPrimal_{\IndPrimal}}} \right), 
	~ (\forall \IndPrimal = 1, \ldots, \NumPrimal), \nonumber \\
	\Precon_{2, \IndDual} 
	= & \mathrm{diag} \left(
	\frac{1}{\ConstASPSDual_{\IndDual,1}}, 
	\ldots,
	\frac{1}{\ConstASPSDual_{\IndDual,\NumElemDual_{\IndDual}}} \right),
	~ (\forall \IndDual = 1, \ldots, \NumDual),
	\label{eq:precon_ASEW}
\end{align}
where 
\begin{align}
	\ConstASPSPrimal_{\IndPrimal,\IndASPS} = 
	& \sum_{\IndDual = 1}^{\NumDual} \sum_{k = 1}^{\NumElemDual_{\IndDual}}|[\MatLinOpe_{\IndDual, \IndPrimal}]_{k, \IndASPS}|, 
	~ (\forall \IndASPS = 1, \ldots, \NumElemPrimal_{\IndPrimal}), \nonumber \\
	\ConstASPSDual_{\IndDual,\IndASPS} = 
	& \sum_{\IndPrimal = 1}^{\NumPrimal} \sum_{k = 1}^{\NumElemPrimal_{\IndPrimal}} |[\MatLinOpe_{\IndDual, \IndPrimal}]_{\IndASPS, k}|,
	~ (\forall \IndASPS = 1, \ldots, \NumElemDual_{\IndDual}).
\end{align}
Each $\Precon_{1,\IndPrimal}$ (or $\Precon_{2,\IndDual}$) is a diagonal matrix consisting of the row/column absolute sums of the elements of $\MatLinOpe_{\IndDual,\IndPrimal}$ (see~\cite[Lemma 2]{DP-PDS}). This means that the diagonal elements of one $\Precon_{1,\IndPrimal}$ (and $\Precon_{2,\IndDual}$) may take different values, i.e., the diagonal elements of the preconditioners will be different for each element for one variable in~\eqref{prob:general_form_of_optimization}.

\subsubsection{Positive-Definite Preconditioning (\SPDP)}
The authors in~\cite{SPDP} proposed to determine the preconditioners as
\begin{equation}
	\label{eq:precon_SPDP}
	\Precon_{1}
	= \ParamSPDP\mathbf{I}, 
	\Precon_{2} 
	= \frac{1}{\ParamSPDP} (\MatLinOpe\MatLinOpe^{\top} + \ParamSPDStheta\mathbf{I})^{-1},
\end{equation}
where $\MatLinOpe$ is the representation matrix of $\LinOpe$ and $\ParamSPDP > 0$ is a parameter.
Since the preconditioners in~\eqref{eq:precon_SPDP} are not block-diagonal matrices in general, P-PDS with them results in the procedures given in~\eqref{eq:P_PDS}.
 
If the number of dual variables is two ($\NumDual = 2$), the preconditioners are set as
\begin{equation}
	\label{eq:precon_SPDP_2}
	\Precon_{1} 
	= \frac{\ParamSPDP}{2}\mathbf{I}, \:
	\Precon_{2} = 
	\begin{bmatrix}
		\Precon_{2, 1} & \mathbf{O} \\
		\mathbf{O} & \Precon_{2, 2}
	\end{bmatrix},
\end{equation}
where
\begin{equation}
	\Precon_{2, \IndDual} 
	= \frac{1}{\ParamSPDP} \left(\sum_{\IndPrimal = 1}^{\NumPrimal} \MatLinOpe_{\IndDual, \IndPrimal}\MatLinOpe_{\IndDual, \IndPrimal}^{\top} + \ParamSPDStheta\mathbf{I}\right)^{-1}, ~ (\forall \IndDual = 1, 2).
\end{equation}
Since $\Precon_{1}$ and $\Precon_{2}$ in~\eqref{eq:precon_SPDP_2} are block-diagonal matrices, P-PDS with them can solve the Prob.~\eqref{prob:general_form_of_optimization} by the procedures given in~\eqref{eq:P_PDS_BD}.

We note that the parameters $\ParamSPDP$ and $\ParamSPDStheta$ affect the convergence speed of P-PDS. 
Therefore, the parameters $\ParamSPDP$ and $\ParamSPDStheta$ need to be manually adjusted.

\section{Proposed Operator Norm-Based Variable-Wise Diagonal Preconditioning (\ONVW)}
\label{sec:proposed}
This section is devoted to the establishment of a novel diagonal preconditioning method, \ONVW, for P-PDS.
First, we introduce a general form of our preconditioners as follows: for all $\IndPrimal=1, \ldots, \NumPrimal$ and $\IndDual=1,\ldots,\NumDual$
\begin{align}
	\label{eq:ONP_gen}
	\Precon_{1, \IndPrimal} & = \PreconSca_{1, \IndPrimal} \mathbf{I} = \frac{1}{\sum_{\IndDual=1}^{\NumDual}\PreconOpNorm_{\IndDual, \IndPrimal}^{2 - \ParamONP}}\mathbf{I}, \nonumber \\
	\Precon_{2,\IndDual} & = \PreconSca_{2, \IndDual} \mathbf{I} = \frac{1}{\sum_{\IndPrimal=1}^{\NumPrimal}\PreconOpNorm_{\IndDual,\IndPrimal}^{\ParamONP}}\mathbf{I},
	\: (\ParamONP \in [0, 2])
\end{align}
where each  $\PreconOpNorm_{\IndDual,\IndPrimal}$ is an upper bound of the operator norm of each $\LinOpe_{\IndDual,\IndPrimal}$, i.e., 
\begin{equation}
	\label{assump:gamma}
	\PreconOpNorm_{\IndDual,\IndPrimal}\in[\NormOp{\LinOpe_{\IndDual,\IndPrimal}}, \infty).
\end{equation} 
By changing the choice of $\ParamONP$, \ONVWs gives three design ways.
\begin{itemize}
	\item If we choose $\ParamONP = 0$, the preconditioners by \ONVWs (\ONVWOne) become
	\begin{equation}
		\label{eq:ONP1}
		\Precon_{1, \IndPrimal}=\frac{1}{\sum_{\IndDual=1}^{\NumDual}\PreconOpNorm_{\IndDual, \IndPrimal}^{2}}\mathbf{I}, \: \Precon_{2,\IndDual}=\frac{1}{\NumPrimal}\mathbf{I}.
	\end{equation}
	\item If we choose $\ParamONP = 1$, the preconditioners by \ONVWs (\ONVWTwo) become
	\begin{equation}
		\label{eq:ONP2}
		\Precon_{1,\IndPrimal} = \frac{1}{\sum_{\IndDual=1}^{\NumDual}\PreconOpNorm_{\IndDual, \IndPrimal}}\mathbf{I}, \:
		\Precon_{2,\IndDual} = \frac{1}{\sum_{\IndPrimal=1}^{\NumPrimal}\PreconOpNorm_{\IndDual,\IndPrimal}}\mathbf{I}. 
	\end{equation}
	\item If we choose $\ParamONP = 2$, the preconditioners by \ONVWs (\ONVWThree) become
	\begin{equation}
		\label{eq:ONP3}
		\Precon_{1,\IndPrimal} = \frac{1}{\NumDual}\mathbf{I}, \:
		\Precon_{2,\IndDual} = \frac{1}{\sum_{\IndPrimal=1}^{\NumPrimal}\PreconOpNorm_{\IndDual,\IndPrimal}^{2}}\mathbf{I}, 
	\end{equation}
\end{itemize}

\begin{remark}[Two Features of Our Method] \hfill
	\begin{itemize}
		\item Our preconditioners can be calculated by only using (upper bounds of) the operator norms of the linear operators $\LinOpe_{\IndDual,\IndPrimal}$. This implies that \ONVWs does not require direct access to the elements of the explicit matrices representing $\LinOpe_{\IndDual,\IndPrimal}$ as long as some $\PreconOpNorm_{\IndPrimal,\IndDual}$ are available. 
		\item In addition, the diagonal elements of one $\Precon_{1,\IndPrimal}$ take the same value ($\Precon_{2,\IndDual}$ as well), i.e., our method is a variable-wise preconditioning method, which maintains the proximability of the functions in Prob.~\eqref{prob:general_form_of_optimization}.
	\end{itemize}
\end{remark}

Before showing the convergence theorem of P-PDS with \ONVWs defined in~\eqref{eq:ONP_gen}, we give the following lemma on matrix decomposition.

\begin{algorithm}[t]
	\caption{P-PDS with \ONVWs for solving~\eqref{prob:general_form_of_optimization}}
	\label{algo:P_PDS}
	\begin{algorithmic}[1]
		\Require{$\VarPrimal_{1}^{(0)},\ldots,\VarPrimal_{\NumPrimal}^{(0)},\VarDual_{1}^{(0)},\ldots,\VarDual_{\NumDual}^{(0)}$ }
		\Ensure{$\VarPrimal_{1}^{(\InnerIter)},\ldots,\VarPrimal_{\NumPrimal}^{(\InnerIter)},\VarDual_{1}^{(\InnerIter)},\ldots,\VarDual_{\NumDual}^{(\InnerIter)}$}
		\State Initialize $\InnerIter = 0$;
		\State Set $\Precon_{1,1},\ldots,\Precon_{1,\NumPrimal},\Precon_{2,1},\ldots,\Precon_{2,\NumDual}$ as in~\eqref{eq:ONP_gen};
		\While {A stopping criterion is not satisfied}
		
		\For{$\IndPrimal=1,\cdots,\NumPrimal$}
		
		\State $\VarPrimal_{\IndPrimal}^{\prime} \leftarrow \sum_{\IndDual = 1}^{\NumDual}\LinOpe_{\IndDual,\IndPrimal}^{*}(\VarDual_{\IndDual}^{(\InnerIter)})$;
		\State $\VarPrimal_{\IndPrimal}^{(\InnerIter + 1)} \leftarrow \prox_{\Precon_{1,\IndPrimal}^{-1}, \FuncPrimal_{\IndPrimal}} (\VarPrimal_{\IndPrimal}^{(\InnerIter)}-\Precon_{1,\IndPrimal}\VarPrimal_{\IndPrimal}^{\prime})$;
		
		\EndFor
		
		\For{$\IndDual = 1, \cdots, \NumDual$}
		\State $\VarDual_{\IndDual}^{\prime} \leftarrow \sum_{\IndPrimal = 1}^{\NumPrimal} \LinOpe_{\IndDual,\IndPrimal}(2\VarPrimal_{\IndPrimal}^{(\InnerIter + 1)} - \VarPrimal_{\IndPrimal}^{(\InnerIter)})$;
		\State $\VarDual_{\IndDual}^{(\InnerIter+1)} \leftarrow \prox_{\Precon_{2,\IndDual}^{-1}, \FuncDual_{\IndDual}^{*}}(\VarDual_{\IndDual}^{(\InnerIter)}+\Precon_{2,\IndDual}\VarDual_{\IndDual}^{\prime})$;
		\EndFor
		\State $\InnerIter \leftarrow \InnerIter + 1$;
		\EndWhile
	\end{algorithmic}
\end{algorithm}

\begin{lemma}
	\label{lem:matrix_decomp}
	An arbitrary matrix $\mathbf{A}\in\RealNumSet^{m\times n}$ can be decomposed into matrices $\mathbf{B}$ and $\mathbf{C}$ (i.e., $\mathbf{A} = \mathbf{BC}$) that satisfy for any $\ParamONP \in [0,1]$
	\begin{align}
		\label{prpty:opnorm}
		& \NormOp{\mathbf{B}} = \NormOp{\mathbf{A}}^{1 - \ParamONP} (= \sigma_{1}(\mathbf{A})^{1 - \ParamONP}), \nonumber \\ 
		& \NormOp{\mathbf{C}} = \NormOp{\mathbf{A}}^{\ParamONP} (= \sigma_{1}(\mathbf{A})^{\ParamONP}).
	\end{align}
\end{lemma}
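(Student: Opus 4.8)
The plan is to build the factorization directly from the singular value decomposition (SVD) of $\mathbf{A}$, which is the natural tool here because both the operator norm and its powers are governed by the singular values. I would write $\mathbf{A} = \mathbf{U}\mathbf{\Sigma}\mathbf{V}^{\top}$, where $\mathbf{U}\in\RealNumSet^{m\times m}$ and $\mathbf{V}\in\RealNumSet^{n\times n}$ are orthogonal and $\mathbf{\Sigma}\in\RealNumSet^{m\times n}$ is the rectangular diagonal matrix carrying the singular values $\sigma_{1}\geq\cdots\geq\sigma_{r}\geq 0$ (with $r=\min(m,n)$) on its main diagonal, so that $\sigma_{1}=\NormOpNoResize{\mathbf{A}}$. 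The whole idea is to split only the middle factor $\mathbf{\Sigma}$ according to the exponents $1-\ParamONP$ and $\ParamONP$, while distributing the orthogonal factors to the two pieces.

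Concretely, I would introduce rectangular diagonal matrices $\mathbf{\Sigma}_{B}\in\RealNumSet^{m\times r}$ and $\mathbf{\Sigma}_{C}\in\RealNumSet^{r\times n}$ whose $i$th diagonal entries are $\sigma_{i}^{1-\ParamONP}$ and $\sigma_{i}^{\ParamONP}$, respectively (setting the entry to $0$ whenever $\sigma_{i}=0$, so that no $0^{0}$ ambiguity arises), and then set
\begin{equation*}
	\mathbf{B} := \mathbf{U}\mathbf{\Sigma}_{B}, \qquad \mathbf{C} := \mathbf{\Sigma}_{C}\mathbf{V}^{\top}.
\end{equation*}
Multiplying the two rectangular diagonal matrices entrywise along the shared diagonal gives $\mathbf{\Sigma}_{B}\mathbf{\Sigma}_{C}=\mathbf{\Sigma}$, because $\sigma_{i}^{1-\ParamONP}\sigma_{i}^{\ParamONP}=\sigma_{i}$ for each $i$; hence $\mathbf{B}\mathbf{C}=\mathbf{U}\mathbf{\Sigma}_{B}\mathbf{\Sigma}_{C}\mathbf{V}^{\top}=\mathbf{U}\mathbf{\Sigma}\mathbf{V}^{\top}=\mathbf{A}$, which establishes the required factorization.

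It then remains to compute the two operator norms. Since $\mathbf{U}$ and $\mathbf{V}$ are orthogonal, they preserve the Euclidean norm, so left/right multiplication by them leaves the operator norm unchanged; this reduces $\NormOpNoResize{\mathbf{B}}$ and $\NormOpNoResize{\mathbf{C}}$ to $\NormOpNoResize{\mathbf{\Sigma}_{B}}$ and $\NormOpNoResize{\mathbf{\Sigma}_{C}}$. The operator norm of a rectangular diagonal matrix equals its largest diagonal entry in absolute value, and since $t\mapsto t^{1-\ParamONP}$ and $t\mapsto t^{\ParamONP}$ are nondecreasing on $[0,\infty)$ for $\ParamONP\in[0,1]$, both maxima are attained at $\sigma_{1}$. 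This yields $\NormOpNoResize{\mathbf{B}}=\sigma_{1}^{1-\ParamONP}=\NormOpNoResize{\mathbf{A}}^{1-\ParamONP}$ and $\NormOpNoResize{\mathbf{C}}=\sigma_{1}^{\ParamONP}=\NormOpNoResize{\mathbf{A}}^{\ParamONP}$, as claimed.

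The argument carries no deep obstacle; the only points requiring care are bookkeeping rather than substance. The first is keeping the rectangular shapes consistent so that the product $\mathbf{\Sigma}_{B}\mathbf{\Sigma}_{C}$ reproduces $\mathbf{\Sigma}$ exactly, and the second is the degenerate case $\mathbf{A}=\ZeroElem$ (equivalently $\sigma_{1}=0$), where the exponent identities collapse to the $0^{0}$ convention; this is vacuous for the intended application, since the operators $\LinOpe_{\IndDual,\IndPrimal}$ of interest are nonzero, so I would simply assume $\mathbf{A}\neq\ZeroElem$. The two elementary facts I would invoke explicitly are that multiplication by an orthogonal matrix preserves the operator norm and that a rectangular diagonal matrix has operator norm equal to its largest diagonal entry.
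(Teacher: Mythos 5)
Your proof is correct and takes essentially the same route as the paper's: both build $\mathbf{B}$ and $\mathbf{C}$ from the SVD of $\mathbf{A}$ by splitting the singular values into the powers $1-\ParamONP$ and $\ParamONP$, then read off the operator norms from the largest entry $\sigma_{1}(\mathbf{A})$. The only differences are cosmetic bookkeeping — the paper uses the compact rank-$r$ SVD and inserts an arbitrary $r\times r$ unitary factor $\mathbf{W}$ between the two pieces (your construction is the case $\mathbf{W}=\mathbf{I}$ with the full SVD), and your explicit handling of zero singular values and of the degenerate case $\mathbf{A}=\ZeroElem$ addresses edge cases the paper leaves implicit.
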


The proof is in Appendix.

Then, the following theorem guarantees the convergence of P-PDS with \ONVW.

\begin{theorem}
	\label{theo:new-preconditioner_gen}
	If the preconditioners are set as~\eqref{eq:ONP_gen}, then the following inequality holds: 
	\begin{equation}
		\label{eq:convergence_equation_ours}
		\NormOp{\Precon_{2}^{\frac{1}{2}}\circ\LinOpe\circ\Precon_{1}^{\frac{1}{2}}}^{2} \leq 1.
	\end{equation}
\end{theorem}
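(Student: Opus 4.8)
The plan is to bound the operator norm of the block matrix $\mathbf{M} := \Precon_{2}^{\frac{1}{2}}\circ\LinOpe\circ\Precon_{1}^{\frac{1}{2}}$ by factoring it as a product of two matrices whose norms are each at most $1$, and then invoking submultiplicativity~\eqref{eq:submultiplicity}. Since the preconditioners are block-scalar ($\Precon_{1,\IndPrimal}=\PreconSca_{1,\IndPrimal}\mathbf{I}$ and $\Precon_{2,\IndDual}=\PreconSca_{2,\IndDual}\mathbf{I}$), the $(\IndDual,\IndPrimal)$-block of $\mathbf{M}$ is simply $\sqrt{\PreconSca_{2,\IndDual}\PreconSca_{1,\IndPrimal}}\,\MatLinOpe_{\IndDual,\IndPrimal}$, where $\MatLinOpe_{\IndDual,\IndPrimal}$ is the representation matrix of $\LinOpe_{\IndDual,\IndPrimal}$ (which exists since the spaces are finite-dimensional, so working with it costs nothing in a purely theoretical bound).

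First I would apply Lemma~\ref{lem:matrix_decomp} to each block with its exponent chosen as $1-\tfrac{\ParamONP}{2}$, which lies in $[0,1]$ precisely because $\ParamONP\in[0,2]$. This writes $\MatLinOpe_{\IndDual,\IndPrimal}=\mathbf{B}_{\IndDual,\IndPrimal}\mathbf{C}_{\IndDual,\IndPrimal}$ with $\NormOp{\mathbf{B}_{\IndDual,\IndPrimal}}^{2}=\NormOp{\MatLinOpe_{\IndDual,\IndPrimal}}^{\ParamONP}\le\PreconOpNorm_{\IndDual,\IndPrimal}^{\ParamONP}$ and $\NormOp{\mathbf{C}_{\IndDual,\IndPrimal}}^{2}=\NormOp{\MatLinOpe_{\IndDual,\IndPrimal}}^{2-\ParamONP}\le\PreconOpNorm_{\IndDual,\IndPrimal}^{2-\ParamONP}$, the bounds following from~\eqref{assump:gamma}. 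The exponents are deliberately aligned with the two sums in~\eqref{eq:ONP_gen}: the exponent $\ParamONP$ matches the sum defining $\PreconSca_{2,\IndDual}$, and the exponent $2-\ParamONP$ matches the sum defining $\PreconSca_{1,\IndPrimal}$.

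Next I would assemble global factors through an auxiliary space indexed by the pairs $(\IndDual,\IndPrimal)$: let $\mathbf{C}$ be arranged so that its $\IndPrimal$-th block-column stacks $\mathbf{C}_{1,\IndPrimal},\ldots,\mathbf{C}_{\NumDual,\IndPrimal}$ vertically, and let $\mathbf{B}$ be arranged so that its $\IndDual$-th block-row stacks $\mathbf{B}_{\IndDual,1},\ldots,\mathbf{B}_{\IndDual,\NumPrimal}$ horizontally. A direct block multiplication then yields $\mathbf{B}\mathbf{C}=\LinOpe$, hence $\mathbf{M}=(\Precon_{2}^{\frac{1}{2}}\mathbf{B})(\mathbf{C}\Precon_{1}^{\frac{1}{2}})$. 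Because $\Precon_{2}^{\frac{1}{2}}\mathbf{B}$ is block-diagonal across the dual index $\IndDual$ and $\mathbf{C}\Precon_{1}^{\frac{1}{2}}$ is block-diagonal across the primal index $\IndPrimal$ (up to a permutation of the auxiliary coordinates), their operator norms reduce to maxima over $\IndDual$ and $\IndPrimal$ of single row/column stacks. Using the elementary stack estimates $\NormOp{[\mathbf{A}_{1}\;\cdots\;\mathbf{A}_{k}]}^{2}\le\sum_{l}\NormOp{\mathbf{A}_{l}}^{2}$ and its vertical analogue, together with the block bounds above, gives
\begin{align*}
	\NormOp{\Precon_{2}^{\frac{1}{2}}\mathbf{B}}^{2} &\le \max_{\IndDual}\PreconSca_{2,\IndDual}\sum_{\IndPrimal=1}^{\NumPrimal}\PreconOpNorm_{\IndDual,\IndPrimal}^{\ParamONP}=1, \\
	\NormOp{\mathbf{C}\Precon_{1}^{\frac{1}{2}}}^{2} &\le \max_{\IndPrimal}\PreconSca_{1,\IndPrimal}\sum_{\IndDual=1}^{\NumDual}\PreconOpNorm_{\IndDual,\IndPrimal}^{2-\ParamONP}=1,
\end{align*}
where the final equalities are exactly the definitions of $\PreconSca_{2,\IndDual}$ and $\PreconSca_{1,\IndPrimal}$ in~\eqref{eq:ONP_gen}. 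Submultiplicativity then gives $\NormOp{\mathbf{M}}\le\NormOp{\Precon_{2}^{\frac{1}{2}}\mathbf{B}}\,\NormOp{\mathbf{C}\Precon_{1}^{\frac{1}{2}}}\le 1$, which is~\eqref{eq:convergence_equation_ours}.

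I expect the main obstacle to be the bookkeeping of the third step: verifying that $\mathbf{B}\mathbf{C}$ truly reproduces $\LinOpe$ block-by-block and, more importantly, that after absorbing the preconditioners the two factors are genuinely block-diagonal, so that their norms collapse to the per-row and per-column stacks rather than to something larger. The choice of the Lemma exponent $1-\tfrac{\ParamONP}{2}$ is the other delicate point: it is the unique choice that sends the squared block-norms onto the exponents $\ParamONP$ and $2-\ParamONP$ canceling the two normalizing sums, and it is also what forces the hypothesis $\ParamONP\in[0,2]$ through the Lemma's requirement that its exponent lie in $[0,1]$. Alternatively, one could bypass the Lemma and bound $\NormOp{\mathbf{M}}$ directly by the operator norm of the nonnegative matrix $[\sqrt{\PreconSca_{2,\IndDual}\PreconSca_{1,\IndPrimal}}\,\PreconOpNorm_{\IndDual,\IndPrimal}]_{\IndDual,\IndPrimal}$ followed by a single Cauchy--Schwarz step, but the factored route makes the role of Lemma~\ref{lem:matrix_decomp} transparent.
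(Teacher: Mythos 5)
Your proof is correct, and it shares the paper's two essential ingredients: Lemma~\ref{lem:matrix_decomp} applied to each $\LinOpe_{\IndDual,\IndPrimal}$ with exactly the split into factors of norm $\NormOp{\LinOpe_{\IndDual,\IndPrimal}}^{\ParamONP/2}$ and $\NormOp{\LinOpe_{\IndDual,\IndPrimal}}^{1-\ParamONP/2}$ (which is what forces $\ParamONP\in[0,2]$), and the cancellation of the exponents $\ParamONP$ and $2-\ParamONP$ against the sums defining $\PreconSca_{2,\IndDual}$ and $\PreconSca_{1,\IndPrimal}$ in~\eqref{eq:ONP_gen}, via~\eqref{assump:gamma}. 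Where you genuinely differ is the assembly. The paper never forms a global factorization: it fixes $\VarPrimal$, expands $\NormETNoResize{\Precon_{2}^{\frac{1}{2}}\circ\LinOpe\circ\Precon_{1}^{\frac{1}{2}}\VarPrimal}^{2}$ blockwise, applies the per-block decomposition and submultiplicativity term by term (Eqs.~\eqref{eq:transform_step_2_gen}--\eqref{eq:transform_step_4_gen}), and then closes the estimate with the scalar inequality $\sum_{l}a_{l}^{2}\leq(\sum_{l}a_{l})^{2}$ followed by Cauchy--Schwarz over the primal index (Eq.~\eqref{eq:transform_step_5_gen}). You instead lift $\LinOpe$ through an auxiliary space indexed by the pairs $(\IndDual,\IndPrimal)$, write $\Precon_{2}^{\frac{1}{2}}\circ\LinOpe\circ\Precon_{1}^{\frac{1}{2}}=(\Precon_{2}^{\frac{1}{2}}\mathbf{B})(\mathbf{C}\Precon_{1}^{\frac{1}{2}})$ with both factors block-diagonal up to a coordinate permutation, bound each factor by $1$ via the row/column stack estimates, and finish with a single use of~\eqref{eq:submultiplicity}; your bookkeeping concerns (that $\mathbf{B}\mathbf{C}$ reproduces $\LinOpe$ blockwise, and that the permutation is norm-preserving) do check out. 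Your route buys transparency---the conclusion is literally a product of two norm-at-most-one factors, so the role of the two sums in~\eqref{eq:ONP_gen} is immediate---and it sidesteps a loosely stated step of the paper: as written, \eqref{eq:transform_step_2_gen} bounds the squared norm of a sum of vectors by the sum of their squared norms, which the triangle inequality does not give (the paper's chain is rescued because the correct bound $\sum_{\IndDual}(\sum_{\IndPrimal}\|\cdot\|_{2})^{2}$ re-enters it at \eqref{eq:transform_step_5_gen}); your factorization never incurs this issue. The paper's route, in exchange, needs no auxiliary space or permutation argument, and the Cauchy--Schwarz content that you hide inside the horizontal stack estimate is simply executed on vectors. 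Finally, your closing aside is also viable and arguably the leanest variant of all: writing $\PreconOpNorm_{\IndDual,\IndPrimal}=\PreconOpNorm_{\IndDual,\IndPrimal}^{\ParamONP/2}\PreconOpNorm_{\IndDual,\IndPrimal}^{1-\ParamONP/2}$ inside one Cauchy--Schwarz shows the matrix of block norms has operator norm at most $1$, and that version dispenses with Lemma~\ref{lem:matrix_decomp} entirely.
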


\begin{proof}
	Since $\Precon_{1}$ and $\Precon_{2}$ are positive-definite and diagonal, their powers of one-half are 
	\begin{align}
		\Precon_{1}^{\frac{1}{2}}=\mathrm{diag}\left(\Precon_{1,1}^{\frac{1}{2}},\ldots,\Precon_{1,\NumPrimal}^{\frac{1}{2}}\right), \nonumber \\
		\Precon_{2}^{\frac{1}{2}}=\mathrm{diag}\left(\Precon_{2,1}^{\frac{1}{2}}\ldots,\Precon_{2,\NumDual}^{\frac{1}{2}}\right).
		\label{eq:half_power_gen}
	\end{align}
	By matrix multiplication and Eq.~\eqref{eq:half_power_gen}, we have
	\begin{equation}
			\label{eq:transform_step_1_gen}
			\Precon_{2}^{\frac{1}{2}}\circ\LinOpe\circ\Precon_{1}^{\frac{1}{2}} = \left[\,\Precon_{2,\IndDual}^{\frac{1}{2}}\circ\mathfrak{L}_{\IndDual,\IndPrimal}\circ\Precon_{1,\IndPrimal}^{\frac{1}{2}}\,\right]_{1 \leq \IndPrimal \leq \NumPrimal, 1 \leq \IndDual \leq \NumDual}. 
		\end{equation}
	For all $\VarPrimal = [\VarPrimal_{1}^{\top}, \ldots, \VarPrimal_{\NumPrimal}^{\top}]^{\top} \in \RealNumSet^{\NumElemPrimalAll}$, the triangle inequality yields
	\begin{equation}
		\left\| \Precon_{2}^{\frac{1}{2}}\circ\LinOpe\circ\Precon_{1}^{\frac{1}{2}} \VarPrimal \right\|_{2}^{2} 
		\leq  \sum_{\IndDual=1}^{\NumDual} \sum_{\IndPrimal=1}^{\NumPrimal}  \left\| \Precon_{2,\IndDual}^{\frac{1}{2}}\circ\LinOpe_{\IndDual,\IndPrimal}\circ\Precon_{1,\IndPrimal}^{\frac{1}{2}} \VarPrimal_{\IndPrimal} \right\|_{2}^{2} .
		\label{eq:transform_step_2_gen}
	\end{equation}
	Since $\LinOpe_{\IndDual,\IndPrimal}$ $(\IndPrimal = 1, \ldots, \NumPrimal, \IndDual = 1, \ldots, \NumDual)$ can be represented by matrices, from Lemma~\ref{lem:matrix_decomp}, there exist linear operators $\LinOpe_{\IndDual,\IndPrimal}^{\tfrac{\ParamONP}{2}}$ and  $\LinOpe_{\IndDual,\IndPrimal}^{1 - \tfrac{\ParamONP}{2}}$ that satisfy for any $\ParamONP \in [0,2]$,
	\begin{align}
		\LinOpe_{\IndDual,\IndPrimal} 
		& =  \LinOpe_{\IndDual,\IndPrimal}^{1 - \tfrac{\ParamONP}{2}} \circ \LinOpe_{\IndDual,\IndPrimal}^{\tfrac{\ParamONP}{2}}, \nonumber \\
		\NormOpNoResize{\LinOpe_{\IndDual,\IndPrimal}^{1 - \tfrac{\ParamONP}{2}}} 
		& = \NormOp{\LinOpe_{\IndDual,\IndPrimal}}^{1 - \tfrac{\ParamONP}{2}}, \nonumber \\
		\NormOpNoResize{\LinOpe_{\IndDual,\IndPrimal}^{\tfrac{\ParamONP}{2}}} 
		& = \NormOp{\LinOpe_{\IndDual,\IndPrimal}}^{\tfrac{\ParamONP}{2}}.
		\label{eq:transform_step_3_gen}
	\end{align}
	Thus, it follows, from Eq.~\eqref{eq:transform_step_3_gen} and the definition and the submultiplicity of operator norms, that
	\begin{align}
		\mathrm{Eq.}~(32) 
		& = \sum_{\IndDual=1}^{\NumDual} \sum_{\IndPrimal=1}^{\NumPrimal} \left\| \Precon_{2,\IndDual}^{\frac{1}{2}} \circ \LinOpe_{\IndDual,\IndPrimal}^{1 - \tfrac{\ParamONP}{2}} \circ \LinOpe_{\IndDual,\IndPrimal}^{\tfrac{\ParamONP}{2}} \circ \Precon_{1,\IndPrimal}^{\frac{1}{2}} \VarPrimal_{\IndPrimal} \right\|_{2}^{2}\nonumber \\
		& \leq \sum_{\IndDual=1}^{\NumDual} \PreconSca_{2, \IndDual} \sum_{\IndPrimal=1}^{\NumPrimal} \PreconSca_{1, \IndPrimal} \NormOp{\LinOpe_{\IndDual,\IndPrimal}}^{2 - \ParamONP} \NormOp{\LinOpe_{\IndDual,\IndPrimal}}^{\ParamONP} \NormET{\VarPrimal_{\IndPrimal}}^{2}.
		\label{eq:transform_step_4_gen}
	\end{align}
	By applying the inequality $\sum_{\IndDual = 1}^{\NumDual} x_{\IndDual}^{2}  \leq (\sum_{\IndDual = 1}^{\NumDual} x_{\IndDual})^{2}$ for any positive real numbers $x_{1}, \ldots, x_{\NumDual}$ and the Cauchy-Schwarz inequality to the right hand side of Eq.~\eqref{eq:transform_step_4_gen}, we obtain
	\begin{align}
		& \mathrm{Eq.}~\eqref{eq:transform_step_4_gen} \nonumber \\
		& \leq 
		\sum_{\IndDual=1}^{\NumDual} \PreconSca_{2, \IndDual} 
		\left( \sum_{\IndPrimal=1}^{\NumPrimal} \sqrt{\PreconSca_{1, \IndPrimal}} \NormOp{\LinOpe_{\IndDual,\IndPrimal}}^{1 - \tfrac{\ParamONP}{2}} \NormOp{\LinOpe_{\IndDual,\IndPrimal}}^{\tfrac{\ParamONP}{2}} \NormET{\VarPrimal_{\IndPrimal}} \right)^{2}  \nonumber \\
		& \leq 
		\sum_{\IndDual=1}^{\NumDual} \PreconSca_{2, \IndDual}
		\left(\sum_{\IndPrimal=1}^{\NumPrimal}  \NormOp{\LinOpe_{\IndDual,\IndPrimal}}^{\ParamONP}\right)
		\left(\sum_{\IndPrimal=1}^{\NumPrimal} \PreconSca_{1, \IndPrimal} \NormOp{\LinOpe_{\IndDual,\IndPrimal}}^{2 - \ParamONP} \NormET{\VarPrimal_{\IndPrimal}}^{2} \right).
		\label{eq:transform_step_5_gen}
	\end{align}
	Then, from the definitions of $\PreconSca_{2, \IndDual}$ and $\PreconSca_{1, \IndPrimal}$ in~\eqref{eq:ONP_gen}, we have $\PreconSca_{2, \IndDual} \sum_{\IndPrimal=1}^{\NumPrimal}\NormOp{\LinOpe_{\IndDual,\IndPrimal}}^{\ParamONP}\leq 1$ for any $\IndDual=1,\ldots,\NumDual$ and $\PreconSca_{1, \IndPrimal} \sum_{\IndDual=1}^{\NumDual}\NormOp{\LinOpe_{\IndDual,\IndPrimal}}^{2 - \ParamONP} \leq 1$ for any $\IndPrimal=1,\ldots,\NumPrimal$, which yields 
	\begin{align}
		\mathrm{Eq.}~\eqref{eq:transform_step_5_gen} 
		& \leq  \sum_{\IndDual=1}^{\NumDual} \sum_{\IndPrimal=1}^{\NumPrimal}  \PreconSca_{1, \IndPrimal} \NormOp{\LinOpe_{\IndDual,\IndPrimal}}^{2 - \ParamONP} \NormET{\VarPrimal_{\IndPrimal}}^{2} \nonumber \\
		& = \sum_{\IndPrimal=1}^{\NumPrimal}  \PreconSca_{1, \IndPrimal}\left( \sum_{\IndDual=1}^{\NumDual} \NormOp{\LinOpe_{\IndDual,\IndPrimal}}^{2 - \ParamONP}\right) \NormET{\VarPrimal_{\IndPrimal}}^{2}\nonumber \\
		& \leq \sum_{\IndPrimal=1}^{\NumPrimal}  \NormET{\VarPrimal_{\IndPrimal}}^{2} 
		= \NormET{\VarPrimal}^{2}.
		\label{eq:transform_step_6}
	\end{align}
	Therefore, we finally obtain
	\begin{equation*}
		\NormOp{\Precon_{2}^{\frac{1}{2}}\circ\LinOpe\circ\Precon_{1}^{\frac{1}{2}}}^{2} 
		= \sup_{\mathbf{x} \neq \ZeroElem} \frac{\NormETNoResize{\Precon_{2}^{\frac{1}{2}}\circ\LinOpe\circ\Precon_{1}^{\frac{1}{2}} \VarPrimal}^{2} }{\|\mathbf{x}\|_{2}^{2}}
		\leq \frac{\|\mathbf{x}\|_{2}^{2}}{\|\mathbf{x}\|_{2}^{2}}
		= 1.
	\end{equation*}
	\begin{flushright} $\square$ \end{flushright}
\end{proof}

\begin{remark}
To guarantee the convergence of P-PDS, inequality~\eqref{eq:convergence_equation} has to be strict, but inequality~\eqref{eq:convergence_equation_ours} is not.
However, we do not observe any convergence issue of P-PDS with our preconditioners in the experiments (see Section~IV). 
This is because, our method separates $\LinOpe$ variable by variable and sums up upper bounds of the operator norms, resulting in setting preconditioners such that $\NormOpNoResize{\Precon_{2}^{\frac{1}{2}}\circ\LinOpe\circ\Precon_{1}^{\frac{1}{2}}}^{2} < 1$ in almost all real-world applications.
\end{remark}

Theorem~\ref{theo:new-preconditioner_gen} asserts that our preconditioners defined in~\eqref{eq:ONP1},~\eqref{eq:ONP2}, and~\eqref{eq:ONP3} satisfy the convergence condition of P-PDS in~\eqref{eq:convergence_equation}. Therefore, P-PDS with \ONVWs generates sequences that converge to an optimal solution of Prob.~\eqref{prob:general_form_of_optimization}. 

Here, each $\PreconOpNorm_{j,i}$ is determined in the following manner.
{
	\setlength{\leftmargini}{15pt}         
	\begin{itemize}
		\setlength{\parskip}{2pt}      
		\item If the operator norm $\NormOp{\LinOpe_{\IndDual,\IndPrimal}}$ is known, we set  $\PreconOpNorm_{\IndDual,\IndPrimal}$ to $\NormOp{\LinOpe_{\IndDual,\IndPrimal}}$.
		\item If $\NormOp{\LinOpe_{\IndDual,\IndPrimal}}$ is unknown, we set $\PreconOpNorm_{\IndDual,\IndPrimal}$ to some known or computatble upper bound of $\NormOp{\LinOpe_{\IndDual,\IndPrimal}}$.
		\item If the linear operator is the composition of two linear operators $\mathfrak{A}$ and $\mathfrak{B}$ whose operator norms (or their upper bounds) are known $(\NormOp{\mathfrak{A}} \leq \alpha_{\mathfrak{A}}, \NormOp{\mathfrak{B}} \leq \alpha_{\mathfrak{B}})$, we set $\PreconOpNorm_{\IndDual,\IndPrimal}$ to $\alpha_{\mathfrak{A}}\alpha_{\mathfrak{B}}$, which is an upper bound of $\NormOp{\mathfrak{A} \circ \mathfrak{B}}$ due to the submultiplicity in~\eqref{eq:submultiplicity}.
	\end{itemize}
}

Finally, we show the detailed procedures of P-PDS with \ONVWs in Algorithm~\ref{algo:P_PDS}.

\begin{table}[t]
	\begin{center}
		\caption{Features of Existing Methods \\ and Our Method (Highlighted in Bold).}
		\label{tab:existing_methods}
		\vspace{-2mm}
		\scalebox{0.95}{
		\begin{tabular}{cccc}
			\toprule
			\multirow{2}{*}{Methods} & Parameters requiring & Maintaining & Avoiding access to\\
			& manual adjustment & proximability & representation matrices\\
			\cmidrule(lr){1-1}\cmidrule(lr){2-4}
			\SP~\cite{PDS_1} & $\ParamPDS_{1}$ & $\checkmark$ & $\checkmark$ \\
			\ASEW~\cite{DP-PDS} & None. & $\times$ & $\times$  \\
			\SPDP~\cite{SPDP} & $\ParamSPDP$ & $\times$ & $\checkmark$  \\
			\textbf{\ONVWOne} & None. & $\checkmark$ & $\checkmark$  \\
			\textbf{\ONVWTwo} & None. & $\checkmark$ & $\checkmark$  \\
			\textbf{\ONVWThree} & None. & $\checkmark$ & $\checkmark$  \\
			\bottomrule
		\end{tabular}
	}
	\end{center}
	\vspace{-3mm}
\end{table}

\begin{table}[t]
	\begin{center}
		\caption{Stopping Criteria.}
		\label{tab:convergence_conditions}
			\vspace{-2mm}
			\begin{tabular}{cc}
				\toprule
				Applications & Stopping criteria \\
				\cmidrule(lr){1-1}\cmidrule(lr){2-2}
				Mixed noise removal & $\mathrm{RMSE} < 0.005$ \\
				Unmixing & $\mathrm{RMSE} < 0.01$ \\ 
				Graph signal recovery & $\mathrm{RMSE} < 0.001$ \\
				\bottomrule
			\end{tabular}
	\end{center}
	\vspace{-3mm}
\end{table}

\begin{figure*}[!t]
	\begin{center}
		\begin{minipage}{0.97\hsize}
			\centerline{\includegraphics[width=\hsize]{./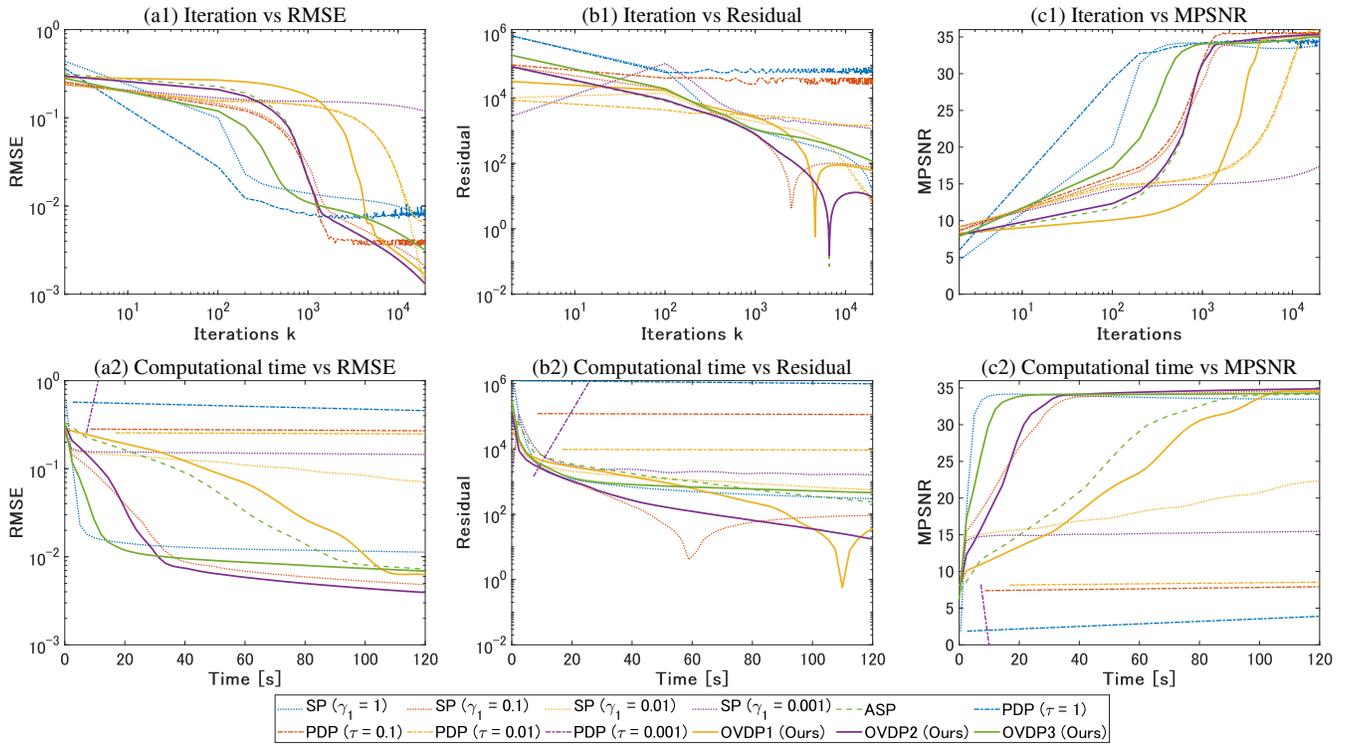}} 
		\end{minipage}
	\end{center}
	
	\vspace{-2mm}
	
	\caption{Convergence profiles of the mixed noise removal experiments. (a): Iterations/computational time versus RMSE. (b): Iterations/computational time versus Residual. (c): Iterations/computational time versus MPSNR. Note that applying P-PDS with \ASEWs (green dotted line) to Prob.~\eqref{prob:sstv_restoration_formulation} is not practical in terms of implementation (the linear operators $\DiffvSymb$, $\DiffhSymb$, and $\DiffbSymb$ are not usually implemented as explicit matrices).}
	\label{fig:graph_MNR}
	\vspace{-2mm}
\end{figure*}

\section{Experiments and Discussion}
\label{sec:experiments}
In this section, we apply our \ONVWs to three signal estimation problems: mixed noise removal of hyperspectral images, hyperspectral unmixing, and graph signal recovery.
Through these applications, we illustrate the effectiveness and usefulness of our method as follows:
\begin{itemize}
	\item P-PDS with \ONVWs is fast on average to obtain an optimal solution of the target optimization problem.
	\item The preconditioners by \ONVWs can be easily calculated by using operator norms even if the target optimiztion problem involves linear operators implemented not as explicit matrices.
	\item P-PDS with \ONVWs is efficiently computed by avoiding the computations of skewed proximity operators.
\end{itemize}

\subsection{Experimental Setup}
We compared \ONVWs with three existing preconditioner design methods (see Tab.~\ref{tab:existing_methods}): the Scalar Preconditioning (\SP)~\cite{PDS_1} in~\eqref{eq:precon_SP}, the row/column Absolute Sum-based element-wise Preconditioning (\ASEW)~\cite{DP-PDS} in~\eqref{eq:precon_ASEW}, and the Positive-Definite Preconditioning (\SPDP)~\cite{SPDP} in~\eqref{eq:precon_SPDP} and in~\eqref{eq:precon_SPDP_2}.
Note that the preconditioners by \SPs and \SPDPs have parameters ($\ParamPDS_{1}$, $\ParamSPDP$, $\ParamSPDStheta$) to be adjusted manually.
For \SP , we set $\ParamPDS_{1}$ and $\ParamPDS_{2}$ in~\eqref{eq:precon_SP} as $\ParamPDS_{1} = 1, 0.1, 0.01, 0.001$, and as in~\eqref{eq:SP_expriment}.
The parameter $\ParamSPDP$ in~\eqref{eq:precon_SPDP} and in~\eqref{eq:precon_SPDP_2} was set as $\ParamSPDP = 1, 0.1, 0.01, 0.001$.
The parameter $\ParamSPDStheta$ in~\eqref{eq:precon_SPDP} and in~\eqref{eq:precon_SPDP_2} was set as $\ParamSPDStheta = 0.01$, which is recommended in~\cite{SPDP}.
To calculate skewed proximity operators in the iterations of P-PDSs with \ASEWs and \SPDP, we used the Fast Iterative Shrinkage-Thresholding Algorithm (FISTA)~\cite{FISTA} initialized with a zero vector.

To check the convergence of P-PDS, we used the Root Mean Square Error (RMSE):
\begin{equation}
	\label{eq:NRMSE}
	\mbox{RMSE}(\VarPrimal_{1}^{(\InnerIter)},\ldots,\VarPrimal_{\NumPrimal}^{(\InnerIter)}):=
	\sqrt{\frac{\sum_{\IndPrimal=1}^{\NumPrimal}\|\VarPrimal_{\IndPrimal}^{(\InnerIter)} - \VarPrimal_{\IndPrimal}^{*}\|_{2}^{2} }
	{\sum_{\IndPrimal=1}^{\NumPrimal} \NumElemPrimal_{\IndPrimal}}},
\end{equation}
and the residual of the function values:
\begin{align}
	& \mbox{Residual}(\VarPrimal_{1}^{(\InnerIter)},\ldots,\VarPrimal_{\NumPrimal}^{(\InnerIter)}) \nonumber \\
	& :=  \left| \left( \sum_{\IndPrimal=1}^{\NumPrimal}\FuncPrimal_{\IndPrimal}(\VarPrimal_{\IndPrimal}^{(\InnerIter)}) 
	+ \sum_{\IndDual=1}^{\NumDual}\FuncDual_{\IndDual} \left(\sum_{\IndPrimal=1}^{\NumPrimal}\LinOpe_{\IndDual,\IndPrimal}(\VarPrimal_{\IndPrimal}^{(\InnerIter)}) \right) \right) \right. \nonumber \\ 
	& \left.
	- \left( \sum_{\IndPrimal=1}^{\NumPrimal}\FuncPrimal_{\IndPrimal}(\VarPrimal_{\IndPrimal}^{*}) 
	+ \sum_{\IndDual=1}^{\NumDual}\FuncDual_{\IndDual} \left(\sum_{\IndPrimal=1}^{\NumPrimal}\LinOpe_{\IndDual,\IndPrimal}(\VarPrimal_{\IndPrimal}^{*})\right)\right) \right|,
	\label{eq:function_diff}
\end{align}
where $\VarPrimal_{1}^{*},\ldots,\VarPrimal_{\NumPrimal}^{*}$ are oracle solutions. 
However, such oracle solutions are not available in the experiments, and therefore, we generated pseudo-oracle solutions by the following procedures. 
We calculated the results through $100,000$ iterations of P-PDS with all the methods in advance, and then selected the best ones among them.

Tab.~\ref{tab:convergence_conditions} shows the stopping criteria with RMSE as the threshold used in the experiments.
Since convergence speeds are different depending on problems, reasonable criteria are also different. 
To determine reasonable criteria, we employed normalized error ($\|\mathbf{x}^{(\InnerIter + 1)} - \mathbf{x}^{(\InnerIter)}\|_{2}/\|\mathbf{x}^{(\InnerIter)}\|_{2}$), which is often used as stopping criteria in real-world applications. Based on the normalized error, we set the stopping criteria as the RMSE values such that $\|\mathbf{x}^{(\InnerIter + 1)} - \mathbf{x}^{(\InnerIter)}\|_{2}/\|\mathbf{x}^{(\InnerIter)}\|_{2} < 10^{-5}$. 

\begin{figure*}[!t]
	\begin{center}
		\begin{minipage}{0.20\hsize}
			\centerline{\includegraphics[width=\hsize]{./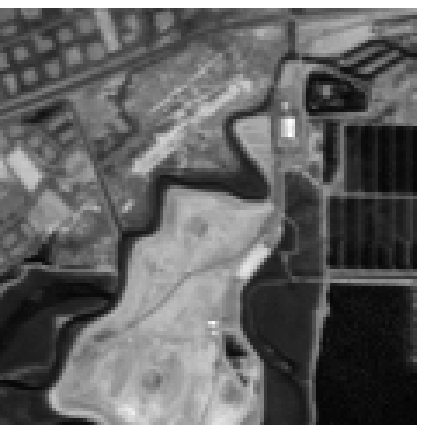}} 
		\end{minipage}
		\begin{minipage}{0.20\hsize}
			\centerline{\includegraphics[width=\hsize]{./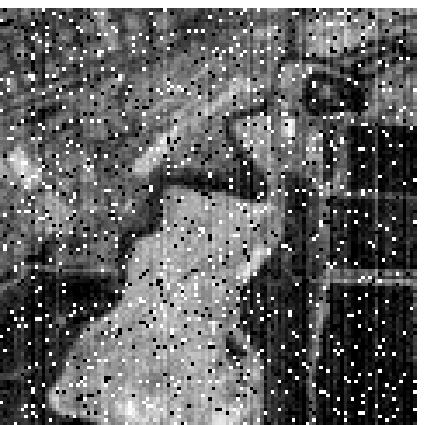}} 
		\end{minipage}
		\begin{minipage}{0.20\hsize}
			\centerline{\includegraphics[width=\hsize]{./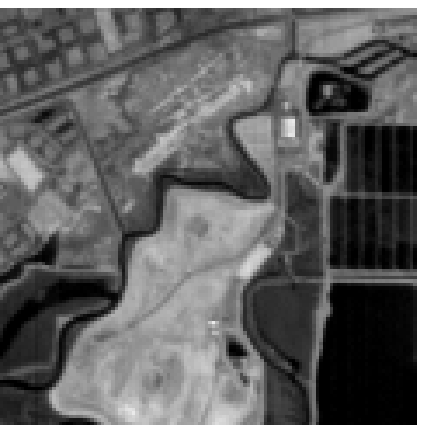}} 
		\end{minipage}
		\begin{minipage}{0.20\hsize}
			\centerline{\includegraphics[width=\hsize]{./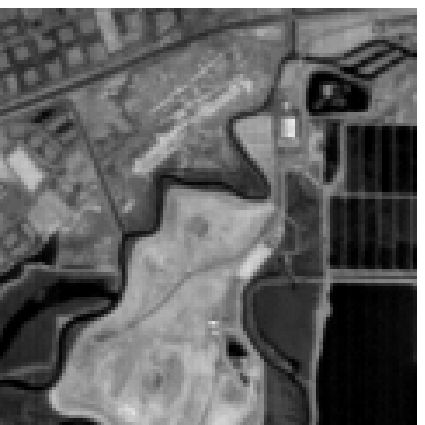}} 
		\end{minipage}
		
		\vspace{1mm}
		
		\begin{minipage}{0.20\hsize}
			\centerline{(a)}
		\end{minipage}
		\begin{minipage}{0.20\hsize}
			\centerline{(b) MPSNR=$14.37$ [dB]}
		\end{minipage}
		\begin{minipage}{0.20\hsize}
			\centerline{(c) MPSNR=$34.60$ [dB]}
		\end{minipage}
		\begin{minipage}{0.20\hsize}
			\centerline{(d) MPSNR=$34.62$ [dB]}
		\end{minipage}
		
		\vspace{1mm}

		\begin{minipage}{0.20\hsize}
			\centerline{\includegraphics[width=\hsize]{./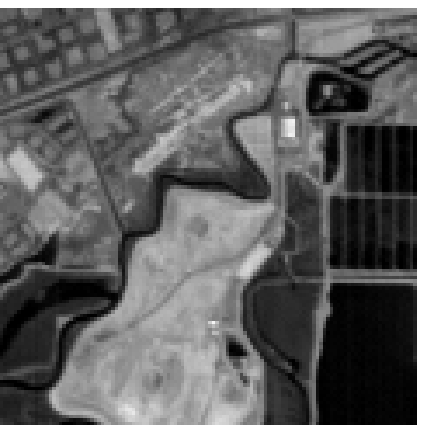}} 
		\end{minipage}
		\begin{minipage}{0.20\hsize}
			\centerline{\includegraphics[width=\hsize]{./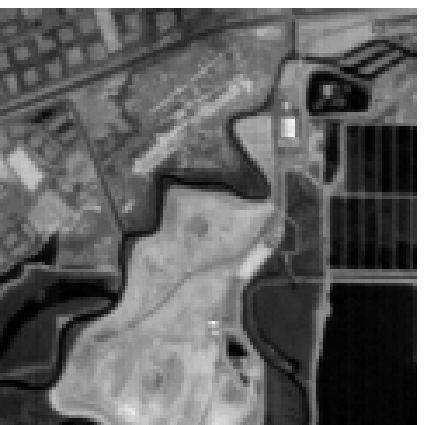}} 
		\end{minipage}
		\begin{minipage}{0.20\hsize}
			\centerline{\includegraphics[width=\hsize]{./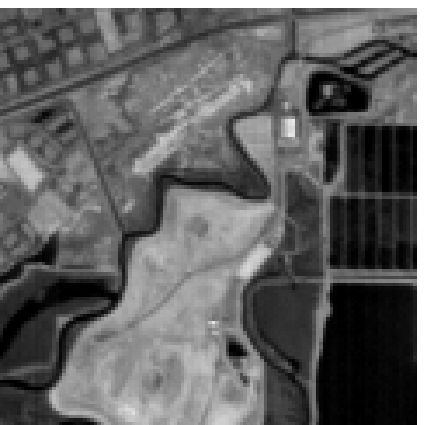}} 
		\end{minipage}
		\begin{minipage}{0.20\hsize}
			\centerline{\includegraphics[width=\hsize]{./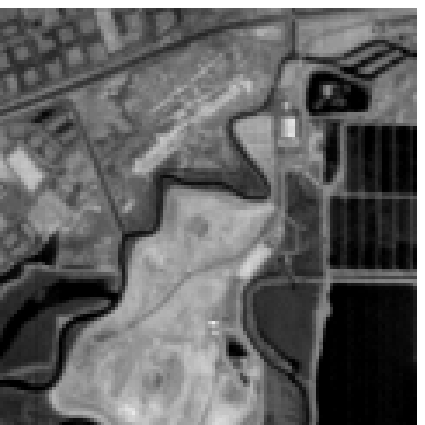}} 
		\end{minipage}
		
		\vspace{1mm}
		
		\begin{minipage}{0.20\hsize}
			\centerline{(e) MPSNR=$35.45$ [dB]}
		\end{minipage}
		\begin{minipage}{0.20\hsize}
			\centerline{(f) MPSNR=$34.81$ [dB]}
		\end{minipage}
		\begin{minipage}{0.20\hsize}
			\centerline{(g) MPSNR=$34.66$ [dB]}
		\end{minipage}	
		\begin{minipage}{0.20\hsize}
			\centerline{(h) MPSNR=$34.62$ [dB]}
		\end{minipage}

	\end{center}

	\vspace{-4mm}
	
	\caption{Mixed noise removal results. (a): The ground truth HS image. (b): The observed HS image. (c): The HS image estimated by P-PDS with \SP~\cite{PDS_1} ($\ParamPDS_{1} = 0.1$). (d): The HS image estimated by P-PDS with \ASEW~\cite{DP-PDS}. (e): The HS image estimated by P-PDS with \SPDP~\cite{SPDP} ($\ParamSPDP = 0.1$). (f): The HS image estimated by P-PDS with \ONVWOnes (Ours). (g): The HS image estimated by P-PDS with \ONVWTwos (Ours). (h): The HS image estimated by P-PDS with \ONVWThrees (Ours).}
	\label{fig:results_MNR}
\end{figure*}

\subsection{Application to Mixed Noise Removal of Hyperspectral Images}
Hyperspectral (HS) images often suffer from various noises, such as random noise, outliers, missing values, and stripe noise, due to environmental and sensor issues~\cite{MNR_LFRR_Zheng2020,MNR_LRTF_Zheng2020,MNR_Zhang2022}.
These noises seriously degrade the performance of subsequent processing, such as HS unmixing~\cite{ghamisi2017advances}, classification~\cite{HSClassification_review_Audebert2019}, and anomaly detection~\cite{HSAnomaly_review_Su2022}.
Therefore, removing mixed noise from HS images is a crucial preprocessing.
Popular mixed noise removal methods adopt the Spatio-Spectral Total Variation (SSTV) regularization~\cite{SSTV,SSTV_LRTD,LRMRSSTV,GLSSTV,SSTV_hybrid,FC_naganuma_2022,GSSTV_Takemoto2022}, which models the spatial piecewise-smoothness and the spectral correlations of HS images.

\subsubsection{Problem Formulation}
Consider that an observed HS image (of size $\NumVerPixMNR \times \NumHorPixMNR \times \NumBandMNR$) $\mathbf{v} \in \mathbb{R}^{\NumVerPixMNR \NumHorPixMNR \NumBandMNR}$ is given by 
\begin{equation}
	\label{eq:observation_model}
	\ObsMNR = \bar{\GTMNR} + \bar{\SparseMNR} + \bar{\StripeMNR} + \NoiseMNR,
\end{equation}
where $\bar{\GTMNR}$, $\bar{\SparseMNR}$, $\bar{\StripeMNR}$, and $\mathbf{n}$ are the true HS image of interest, sparsely distributed noise (e.g. outliers and missing values), stripe noise, and random noise, respectively. 
Based on this observation model, the SSTV-regularized mixed noise removal problem is formulated as the following convex optimization problem:
\begin{align}
	\min_{\GTMNR, \SparseMNR, \StripeMNR} \: & 
	\|\Diffv{\Diffb{\GTMNR}}\|_{1}
	+ \|\Diffh{\Diffb{\GTMNR}}\|_{1}
	+ \ParamBalanceMNR\|\mathbf{\StripeMNR}\|_{1} \nonumber \\
	\st \: &
		\Diffv{\StripeMNR} = \ZeroElem,
		\SparseMNR \in \BallSparMNR, 
		\GTMNR + \SparseMNR + \StripeMNR \in \BallFidelMNR, 
	\label{prob:sstv_restoration_formulation} 
\end{align}
where $\DiffvSymb$, $\DiffhSymb$, and $\DiffbSymb$ are the vertical, horizontal, and spectral difference operators, respectively, with the Neumann boundary condition. 
To reduce computing resources, these difference operators are usually implemented not as matrices but as the following procedures:
\begin{equation}
	[\DiffvSymb(\mathbf{x})]_{i, j, k} 
	:=
	\begin{cases}
		[\mathbf{x}]_{i, j, k} - [\mathbf{x}]_{i + 1, j, k}, & \mathrm{if} \: i < \NumVerPixMNR; \\
		0, & \mathrm{otherwise},
	\end{cases}
\end{equation}
\begin{equation}
	[\DiffhSymb(\mathbf{x})]_{i, j, k} 
	:=
	\begin{cases}
		[\mathbf{x}]_{i, j, k} - [\mathbf{x}]_{i, j + 1, k}, & \mathrm{if} \: j < \NumHorPixMNR; \\
		0, & \mathrm{otherwise},
	\end{cases} 
\end{equation}
\begin{equation}
	[\DiffbSymb(\mathbf{x})]_{i, j, k} 
	:=
	\begin{cases}
		[\mathbf{x}]_{i, j, k} - [\mathbf{x}]_{i, j, k + 1}, & \mathrm{if} \: k < \NumBandMNR; \\
		0, & \mathrm{otherwise},
	\end{cases}
\end{equation}
where $[\mathbf{x}]_{i_{1},i_{2},i_{3}}$ is the value of $\mathbf{x}$ at a location $(i_{1},i_{2},i_{3})$.
Here, $\|\cdot\|_{1}$ is the $\ell_{1}$ norm, and $\BallFidelMNR$ and $\BallSparMNR$ are the $\ell_{2}$ and $\ell_{1}$ norm balls, respectively given by
\begin{align}
	\BallFidelMNR & 
	:= 
	\left\{\mathbf{x}\in\mathbb{R}^{\NumVerPixMNR\NumHorPixMNR\NumBandMNR} \: \middle| \: \| \ObsMNR - \mathbf{x}\|_{2}\leq\varepsilon\right\}, \nonumber \\
	\BallSparMNR & 
	:= 
	\left\{\mathbf{x}\in\mathbb{R}^{\NumVerPixMNR\NumHorPixMNR\NumBandMNR} \: \middle| \: \| \mathbf{x} \|_{1} \leq \ParamSparMNR \right\}.
\end{align}
The term $\|\Diffv{\Diffb{\GTMNR}}\|_{1} + \|\Diffh{\Diffb{\GTMNR}}\|_{1}$ is the SSTV regularization. 
The positive value $\ParamBalanceMNR$ is a balancing parameter between the SSTV regularization and the sparse noise term. The hard constraint guarantees the $\ell_{2}$ data-fidelity to $\ObsMNR$ with the radius $\ParamFidelMNR \geq 0$.\footnote{The original SSTV-regularized denoising formulation proposed in~\cite{SSTV} incorporates an $\ell_2$ data-fidelity term as a part of the objective function, whereas the formulation in~\eqref{prob:sstv_restoration_formulation} imposes data fidelity as an $\ell_{2}$-ball constraint. These two formulations are essentially the same with appropriate hyperparameters, but constrained formulation like~\eqref{prob:sstv_restoration_formulation} is preferred in experimental comparison and real-world applications because it facilitates hyperparameter settings as adopted, e.g., in Refs.~\cite{CSALSA,EPIpre,ono_2015,ono2017primal,ono_2019}}

\begin{figure*}[!t]
	\begin{center}
		\begin{minipage}{0.97\hsize}
			\centerline{\includegraphics[width=\hsize]{./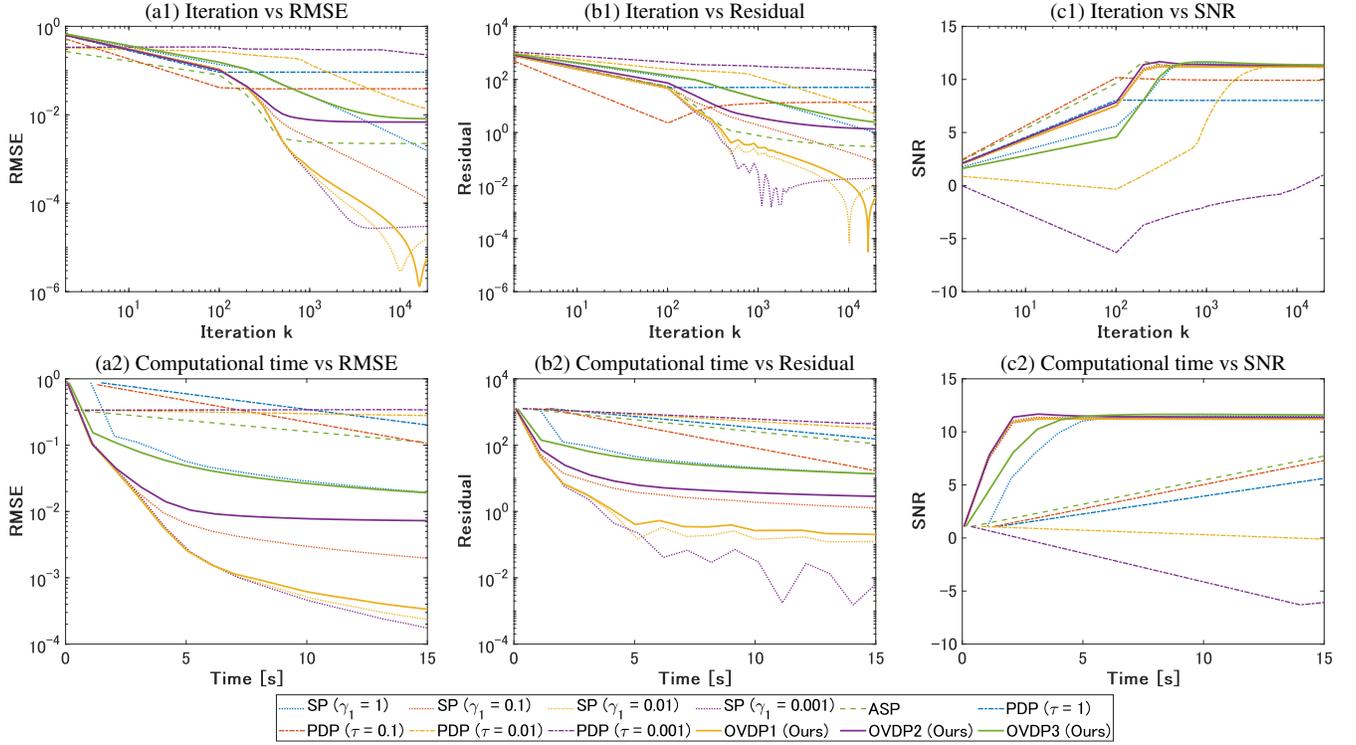}} 
		\end{minipage}
	\end{center}
	
	\vspace{-2mm}
	
	\caption{Convergence profiles of the unmixing experiments. (a): Iterations/computational time versus RMSE. (b): Iterations/computational time versus Residual. (c): Iterations/computational time versus SNR.}
	\label{fig:graph_Unm}
\end{figure*}

By using the indicator function (see Eq.~\eqref{eq:indicator_function}) of $\BallFidelMNR$, Prob.~\eqref{prob:sstv_restoration_formulation} is reduced to Prob.~\eqref{prob:general_form_of_optimization} through the following reformulation:
\begin{align}
	\min_{\substack{\GTMNR, \SparseMNR, \StripeMNR,\\ 
			\VarDualMNR_{1}, \VarDualMNR_{2}, \VarDualMNR_{3}, \VarDualMNR_{4}}} \: & 
	\FuncIndi{\BallSparMNR}{\SparseMNR}
	+ \ParamBalanceMNR\|\StripeMNR\|_{1} \nonumber \\
	& + \|\VarDualMNR_{1}\|_{1}
	+ \|\VarDualMNR_{2}\|_{1}
	+ \FuncIndi{\{\ZeroElem\}}{\VarDualMNR_{3}}
	+ \FuncIndi{\BallFidelMNR}{\VarDualMNR_{4}} \nonumber \\
	\mathrm{s.t.} \: & 
	\begin{cases}
		\VarDualMNR_{1} = \Diffv{\Diffb{\GTMNR}}, \\
		\VarDualMNR_{2} = \Diffh{\Diffb{\GTMNR}}, \\
		\VarDualMNR_{3} = \Diffv{\StripeMNR}, \\	
		\VarDualMNR_{4} = \GTMNR + \SparseMNR + \StripeMNR.
	\end{cases}
	\label{prob:transformed_sstv_restoration_formulation}
\end{align}
Applying Algorithm~\ref{algo:P_PDS} to Prob.~\eqref{prob:transformed_sstv_restoration_formulation}, we can compute an optimal solution of Prob.~\eqref{prob:sstv_restoration_formulation}. Here, since it is satisfied that $\NormOpNoResize{\DiffvSymb\circ\DiffbSymb} \leq 4$, $\NormOpNoResize{\DiffhSymb\circ\DiffbSymb} \leq 4$,\footnote{These are derived from $\NormOpNoResize{\DiffvSymb} \leq 2$, $\NormOpNoResize{\DiffvSymb} \leq 2$, $\NormOpNoResize{\DiffvSymb} \leq 2$~\cite{OP_Diff_Chambolle_2004}, and the submultiplicity of operator norms (Eq.~\eqref{eq:submultiplicity})} and $\NormOpNoResize{\mathbf{I}}=1$, the preconditioners designed by \ONVWs are given in Tab.~\ref{tab:OVDP_MNR}.

\begin{table}[t]
	\begin{center}
		\caption{The Preconditioners by \ONVWs for Mixed Noise Removal.}
		\label{tab:OVDP_MNR}
		\vspace{-2mm}
		\begin{tabular}{cccccccc}
			\toprule
			& $\Precon_{1,1}$ & $\Precon_{1,2}$ & $\Precon_{1,3}$ 
			& $\Precon_{2,1}$ & $\Precon_{2,2}$ & $\Precon_{2,3}$ & $\Precon_{2,4}$ \\
			\cmidrule(lr){2-8}
			
			\vspace{1mm}
			\ONVWOne & $\frac{1}{33}\mathbf{I}$ & $\mathbf{I}$ & $\frac{1}{5}\mathbf{I}$ & $\frac{1}{3}\mathbf{I}$ & $\frac{1}{3}\mathbf{I}$ & $\frac{1}{3}\mathbf{I}$ & $\frac{1}{3}\mathbf{I}$ \\
			
			\vspace{1mm}
			\ONVWTwo & $\frac{1}{9}\mathbf{I}$ & $\mathbf{I}$ & $\mathbf{I}$ & $\frac{1}{4}\mathbf{I}$ & $\frac{1}{4}\mathbf{I}$ & $\frac{1}{33}\mathbf{I}$ & $\frac{1}{3}\mathbf{I}$ \\ 
			
			\ONVWThree & $\frac{1}{33}\mathbf{I}$ & $\frac{1}{33}\mathbf{I}$ & $\frac{1}{33}\mathbf{I}$ & $\frac{1}{33}\mathbf{I}$ & $\frac{1}{33}\mathbf{I}$ & $\frac{1}{33}\mathbf{I}$ & $\frac{1}{33}\mathbf{I}$ \\
			\bottomrule
		\end{tabular}
	\end{center}
	\vspace{-3mm}
\end{table}

\subsubsection{Experimental Results and Discussion}
For \SP, $\PreconOpNormSP$ in~\eqref{eq:SP_expriment} was set as 
\begin{equation}
	\label{eq:PDS_stepsizes}
	\PreconOpNormSP = \sqrt{39},
\end{equation}
because the following inequality holds due to the inequality of the operator norms of block matrices~\cite{norm_inequality}:
\begin{align}
	& \NormOp{
		\begin{bmatrix} 
			\DiffvSymb\circ\DiffbSymb & \ZeroOpe & \ZeroOpe \\ 
			\DiffhSymb\circ\DiffbSymb & \ZeroOpe & \ZeroOpe \\ 
			\ZeroOpe & \ZeroOpe & \DiffvSymb \\ 
			\mathbf{I} & \mathbf{I} & \mathbf{I} 
	\end{bmatrix}}^{2} \nonumber \\
	& \leq 
	\NormOp{\DiffvSymb\circ\DiffbSymb}^{2} 
	+ \NormOp{\DiffhSymb\circ\DiffbSymb}^{2} 
	+ \NormOp{\DiffvSymb}^{2}
	+ 3\NormOp{\mathbf{I}}^{2} \nonumber \\
	& < 4^{2} + 4^{2} + 2^{2} + 3\times 1^{2} = 39,
\end{align}
where $\ZeroOpe$ is a zero operator. 

We also derived the preconditioners in~\eqref{eq:precon_ASEW}, for~\eqref{prob:transformed_sstv_restoration_formulation}. 
Let us remark that since $\DiffvSymb$, $\DiffhSymb$, and $\DiffbSymb$ in~\eqref{prob:transformed_sstv_restoration_formulation} are not usually implemented as explicit matrices, applying \ASEWs to~\eqref{prob:transformed_sstv_restoration_formulation} is not practical in real-world applications. 
Let $\mathbf{x}\in\mathbb{R}^{n_{1}n_{2}n_{3}}$ be a vectorized data cube and $[\mathbf{x}]_{i_{1},i_{2},i_{3}}$ be the value of $\mathbf{x}$ at a location $(i_{1},i_{2},i_{3})$. Then the preconditioners are
\begin{align}
	& \Precon_{1,1} = \mathrm{diag}(\mathbf{g}_{1}),
	\Precon_{1,2} = \mathbf{I},
	\Precon_{1,3} = \mathrm{diag}({\mathbf{g}_{2}}), \nonumber \\ 
	& \Precon_{2,1}=\Precon_{2,2}= \frac{1}{4}, \Precon_{2,3}=\frac{1}{2}, \Precon_{2,4}=\frac{1}{3}\mathbf{I}.
	\label{eq:preconditioner_RCASEW}
\end{align}
Here, $\mathbf{g}_{1}\in\mathbb{R}^{\NumVerPixMNR\NumHorPixMNR\NumBandMNR}$ and $\mathbf{g}_{2}\in\mathbb{R}^{\NumVerPixMNR\NumHorPixMNR\NumBandMNR}$ are given as follows:
\begin{equation}
\label{eq:g}
[\mathbf{g}_{1}]_{i_{1},i_{2},i_{3}} =
\begin{cases}
	\frac{1}{9}, & \mathrm{if} \: i_{1} \in I_{1} \: \mathrm{and}\: i_{2} \in I_{2} \: \mathrm{and}\: i_{3} \in I_{3}; \\
	\frac{1}{3}, & \mathrm{if} \: i_{1} \in E_{1} \: \mathrm{and}\: i_{2} \in E_{2} \: \mathrm{and}\: i_{3} \in E_{3}; \\
	\frac{1}{4}, & \mathrm{if} \: i_{3} \in E_{3} \: \mathrm{and}\: 
	\begin{cases}
		(i_{1} \in E_{1} \: \mathrm{and}\: i_{2} \in I_{2}) \\ 
		\mathrm{or} \\
		(i_{1} \in I_{1} \: \mathrm{and} \: i_{2} \in E_{2}); 
	\end{cases} \\
	\frac{1}{5}, & \mathrm{if} \: i_{1} \in E_{1} \: \mathrm{and}\: i_{2} \in E_{2} \: \mathrm{and}\: i_{3} \in I_{3}; \\
	\frac{1}{7}, & \mathrm{otherwise},
\end{cases} 
\end{equation}
\begin{equation}
	\label{eq:g2}
[\mathbf{g}_{2}]_{i_{1},i_{2},i_{3}} =
\begin{cases}
	\frac{1}{3}, & \mathrm{if} \: i_{1} \in I_{1};  \\
	\frac{1}{2}, & \mathrm{otherwise},
\end{cases}
\end{equation}
where $I_{m}$ and $E_{m}$ for $m = 1,2,3$ are $\{2,\ldots, n_{m} - 1\}$ and $\{1,n_{m}\}$, respectively.
In this case, the skewed proximity operators are separable and thus have analytical solutions. This indicates that P-PDS with \ASEWs does not require FISTA.

As the ground truth HS image, we used Moffett Field~\cite{MoffettField} of size $120 \times 120 \times 176$. 
The observed image was generated by adding white Gaussian noise with the standard deviation $\StanDivGaussMNR = 0.05$ and Salt \& Pepper noise with the ratio $\RateSparse = 0.1$. 
The parameters $\ParamBalanceMNR$, $\ParamSparMNR$, and $\ParamFidelMNR$ were set to $0.005$, $0.5*0.95*\RateSparse*\NumVerPixMNR\NumHorPixMNR\NumBandMNR$, and $0.95\StanDivGaussMNR\sqrt{(1 - \RateSparse)\NumVerPixMNR\NumHorPixMNR\NumBandMNR}$, respectively. 
For the quantitative evaluation of image qualities, we used the Mean Peak Signal-to-Noise Ratio (MPSNR):
\begin{equation}
	\label{eq:PSNR}
	\mbox{MPSNR}(\mathbf{u}^{(\InnerIter)})
	:= \frac{1}{\NumBandMNR} \sum_{b=1}^{\NumBandMNR} 10\log_{10} \left(\frac{\NumVerPixMNR\NumHorPixMNR}{\|\bar{\mathbf{u}}_{b}-\mathbf{u}_{b}^{(\InnerIter)}\|_{2}^{2}}\right),
\end{equation}
where $\bar{\mathbf{u}}_{b}$ and $\mathbf{u}_{b}^{(t)}$ are the $b$th band of the ground-truth image $\bar{\mathbf{u}}$ and the estimated image $\mathbf{u}^{(t)}$.

\begin{figure*}[!t]
	\begin{center}
		\begin{minipage}{0.01\hsize}
			\centerline{\rotatebox{90}{grass}} 
		\end{minipage}
		\begin{minipage}{0.13\hsize}
			\centerline{\includegraphics[width=\hsize]{./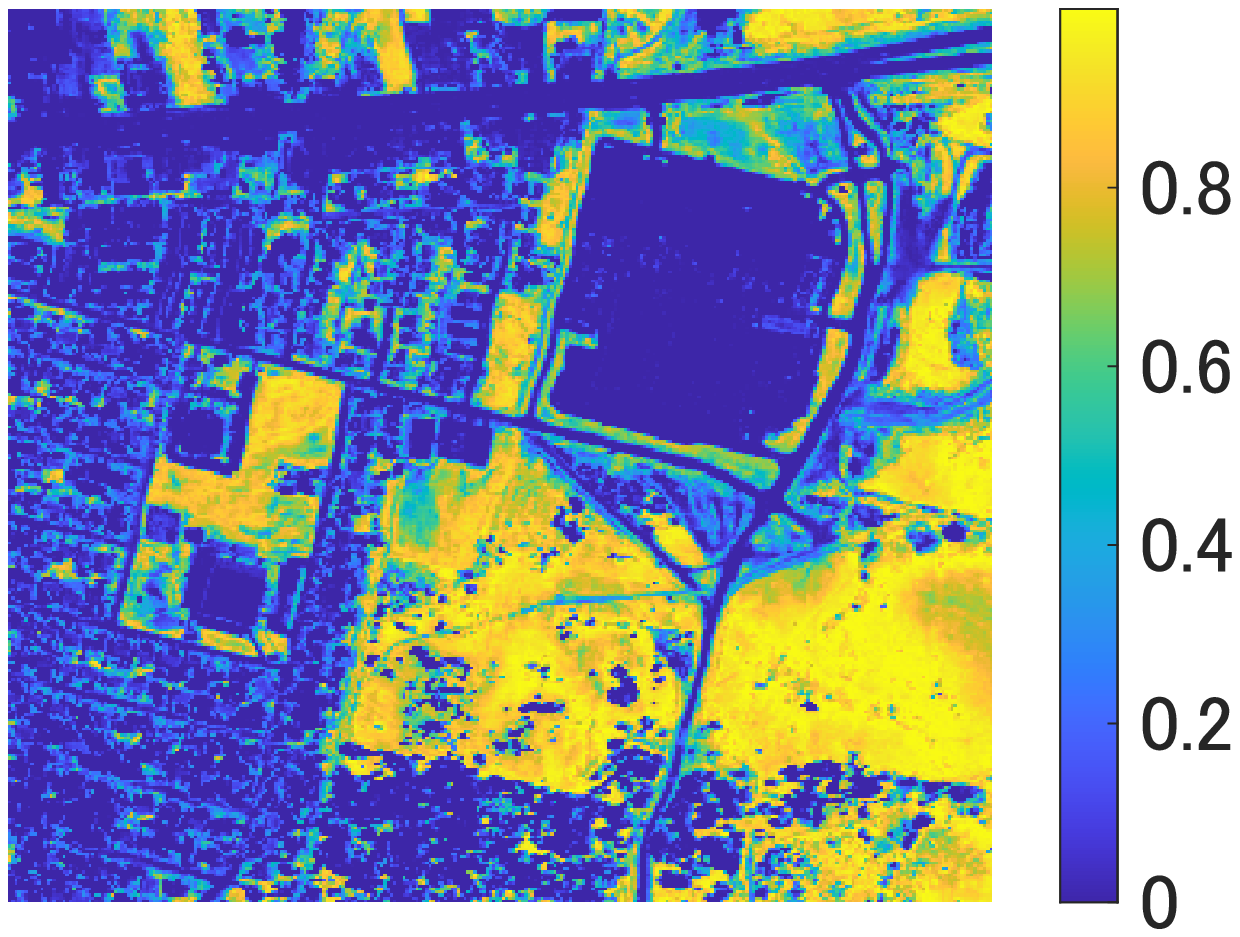}} 
		\end{minipage}
		\begin{minipage}{0.13\hsize}
			\centerline{\includegraphics[width=\hsize]{./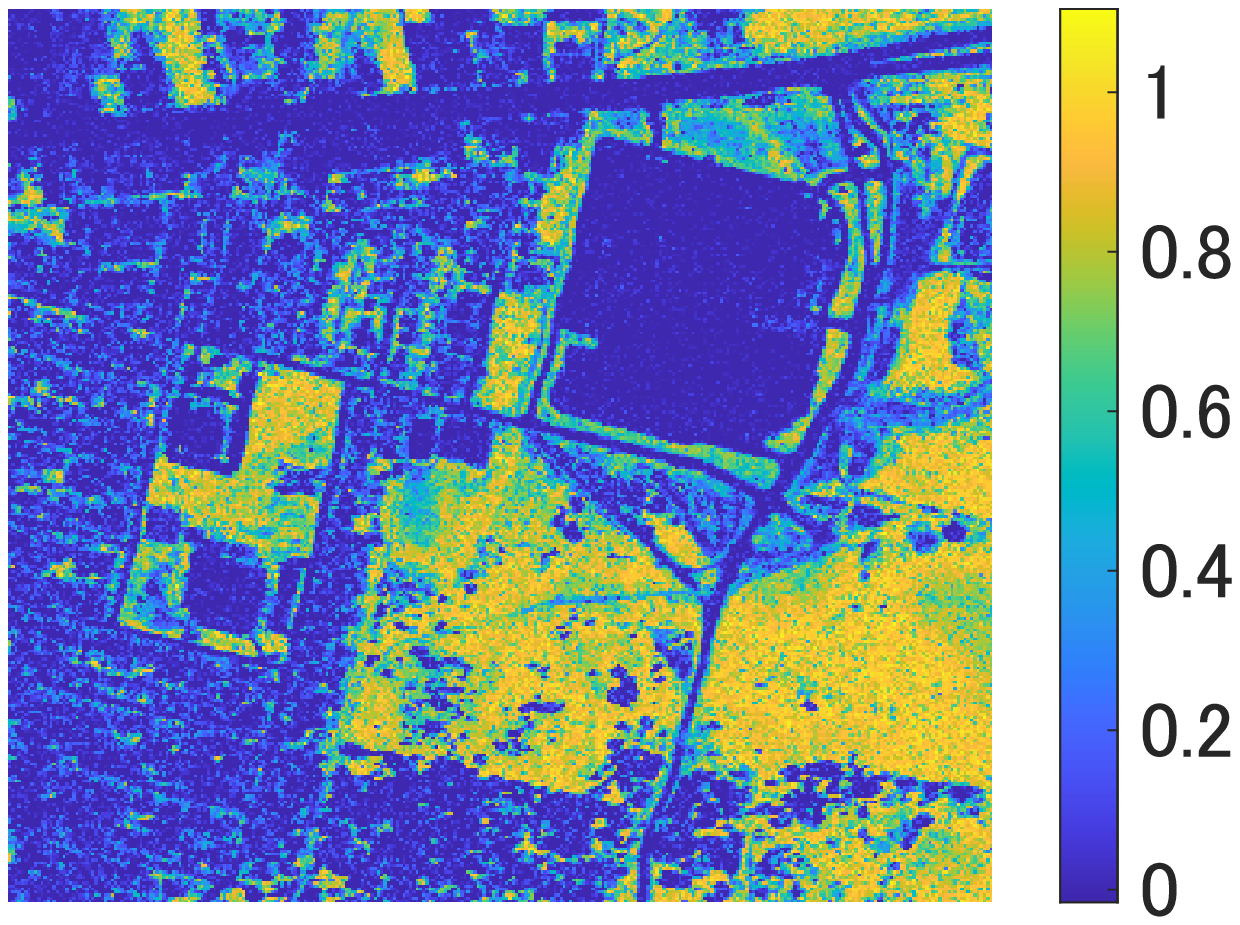}} 
		\end{minipage}
		\begin{minipage}{0.13\hsize}
			\centerline{\includegraphics[width=\hsize]{./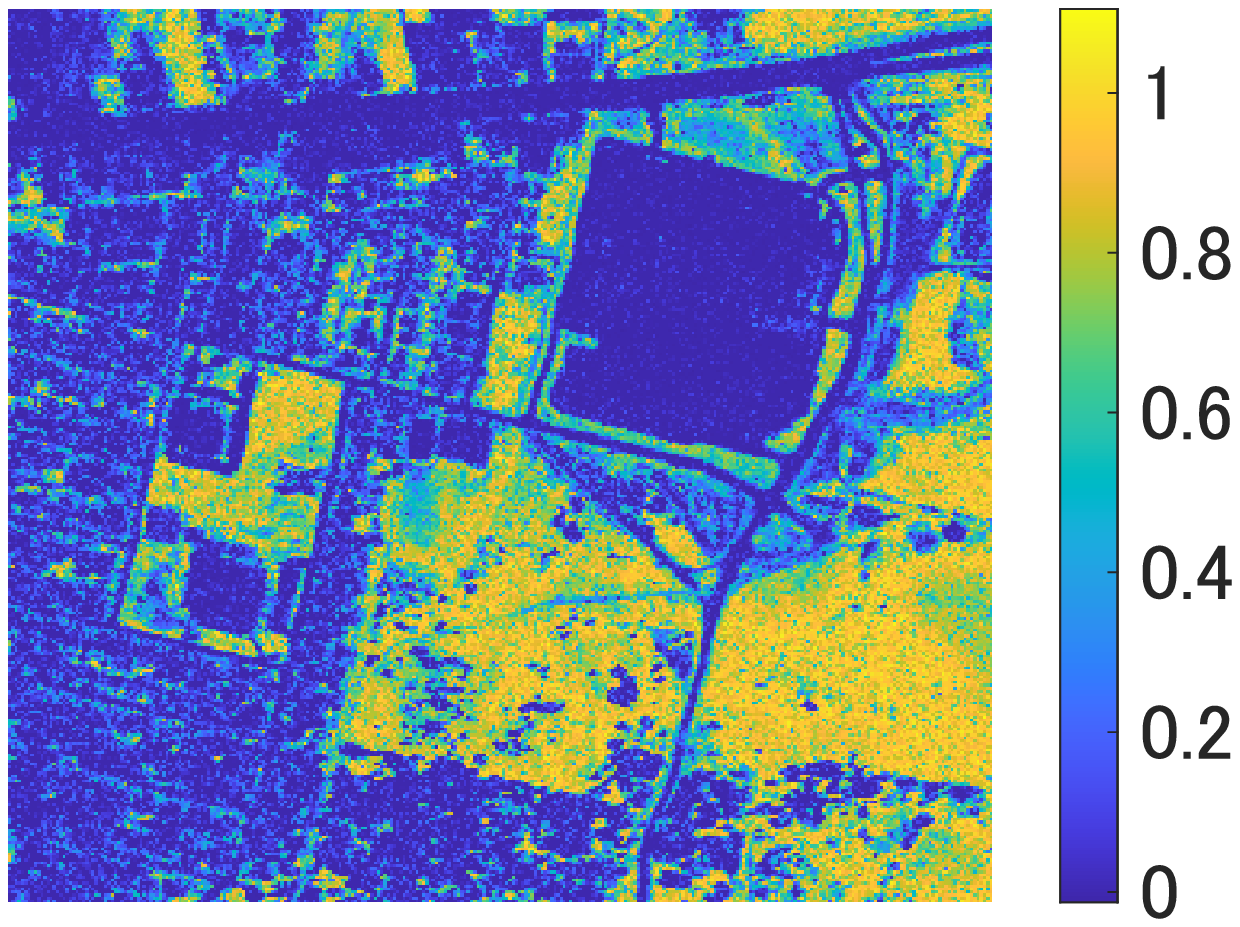}} 
		\end{minipage}
		\begin{minipage}{0.13\hsize}
			\centerline{\includegraphics[width=\hsize]{./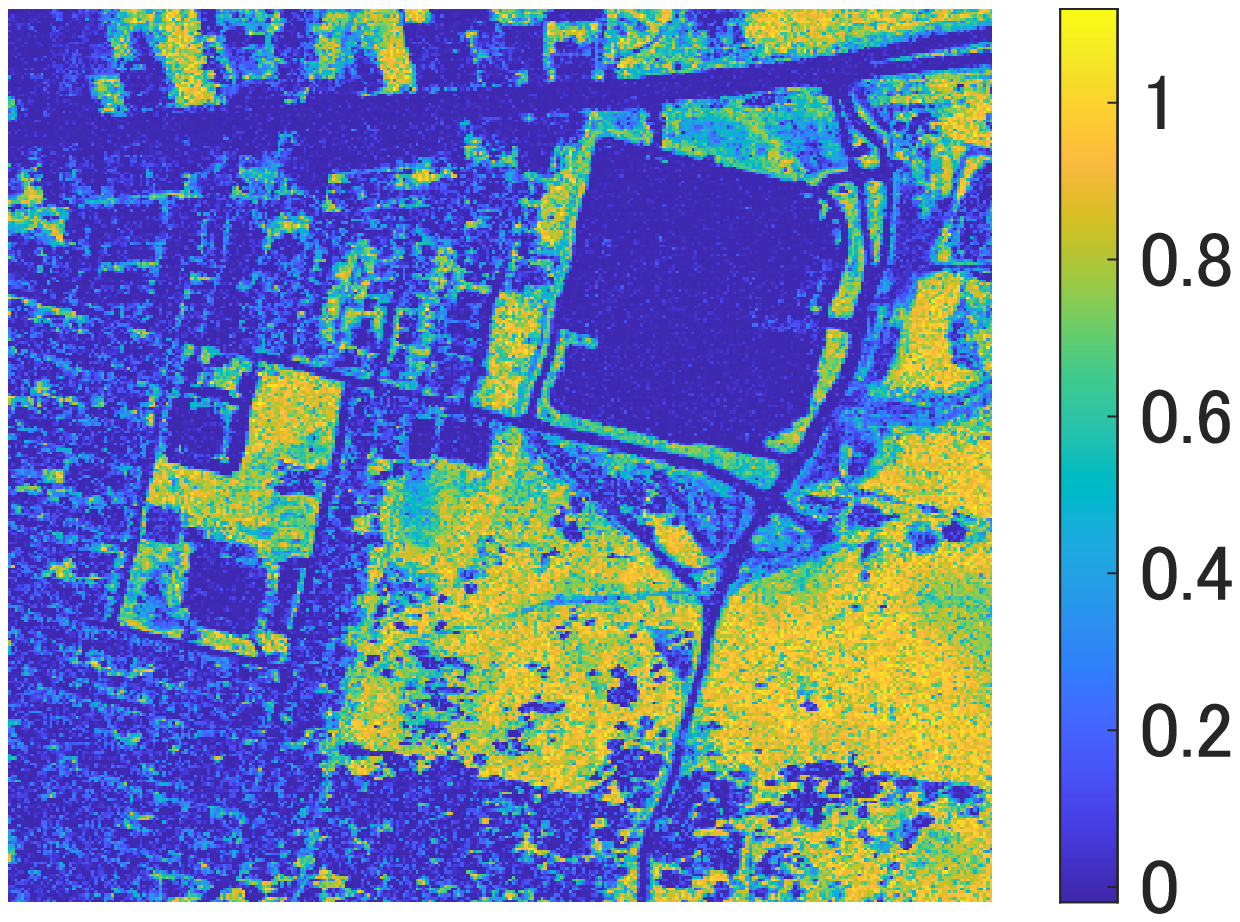}} 
		\end{minipage}
		\begin{minipage}{0.13\hsize}
			\centerline{\includegraphics[width=\hsize]{./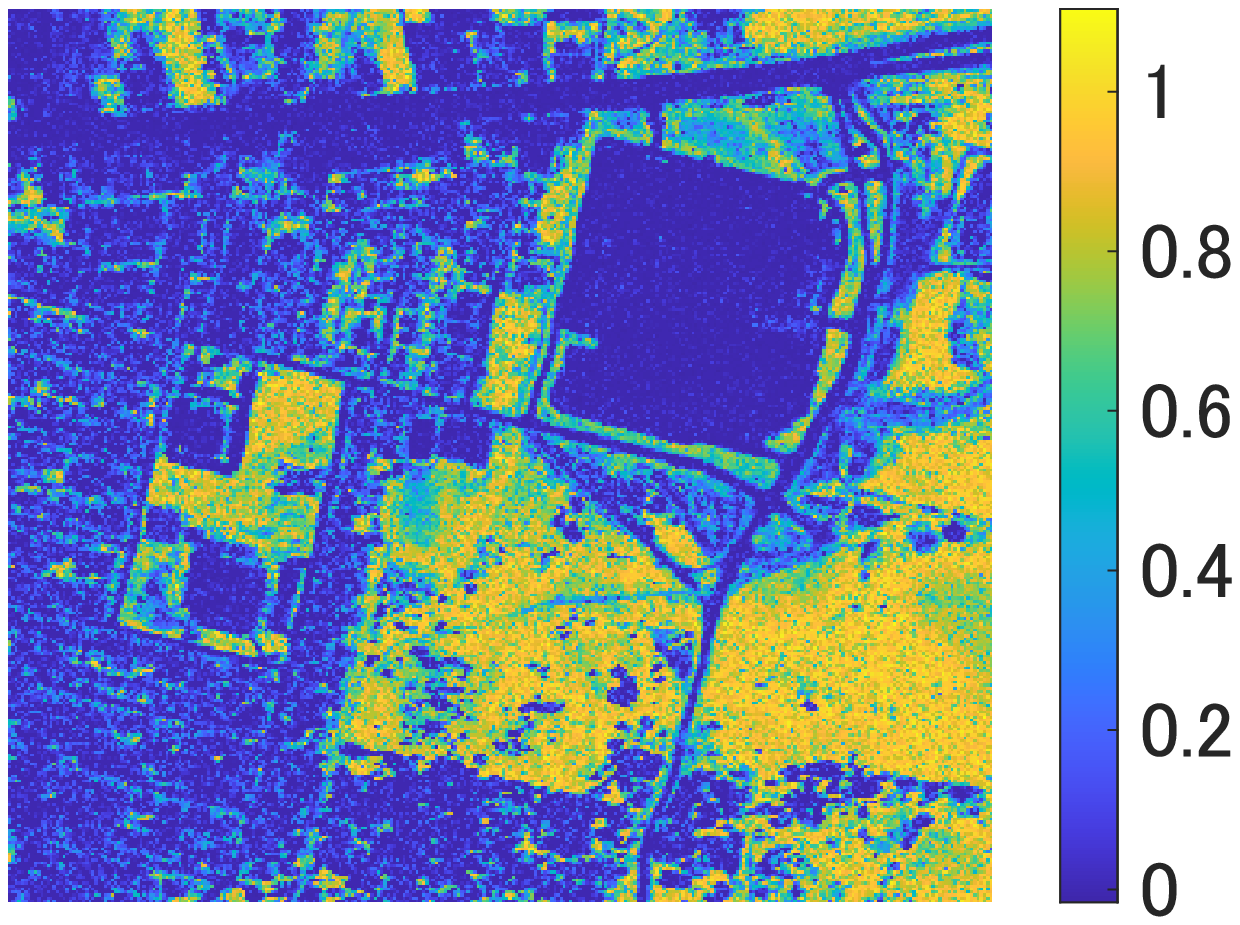}} 
		\end{minipage}
		\begin{minipage}{0.13\hsize}
			\centerline{\includegraphics[width=\hsize]{./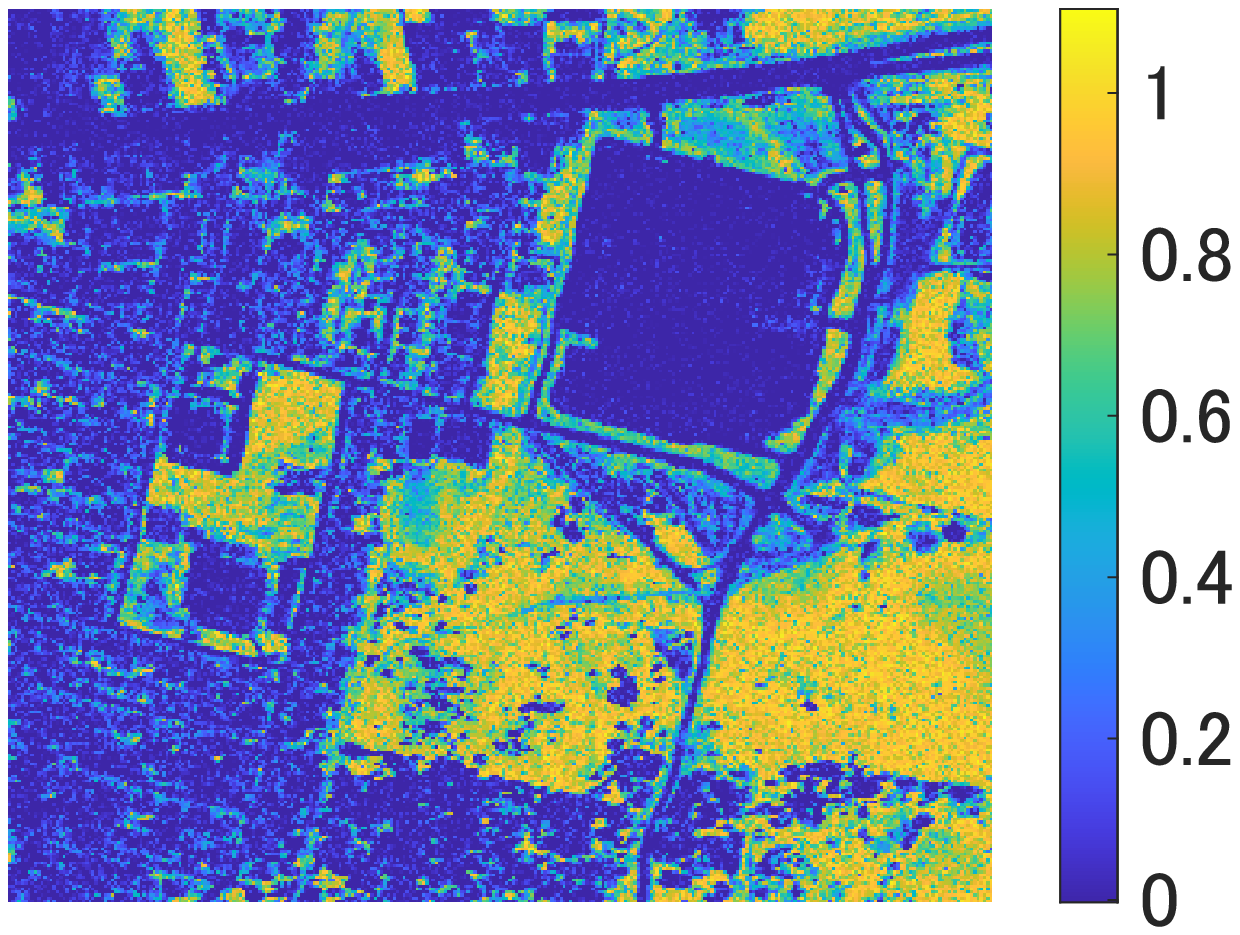}} 
		\end{minipage}
		\begin{minipage}{0.13\hsize}
			\centerline{\includegraphics[width=\hsize]{./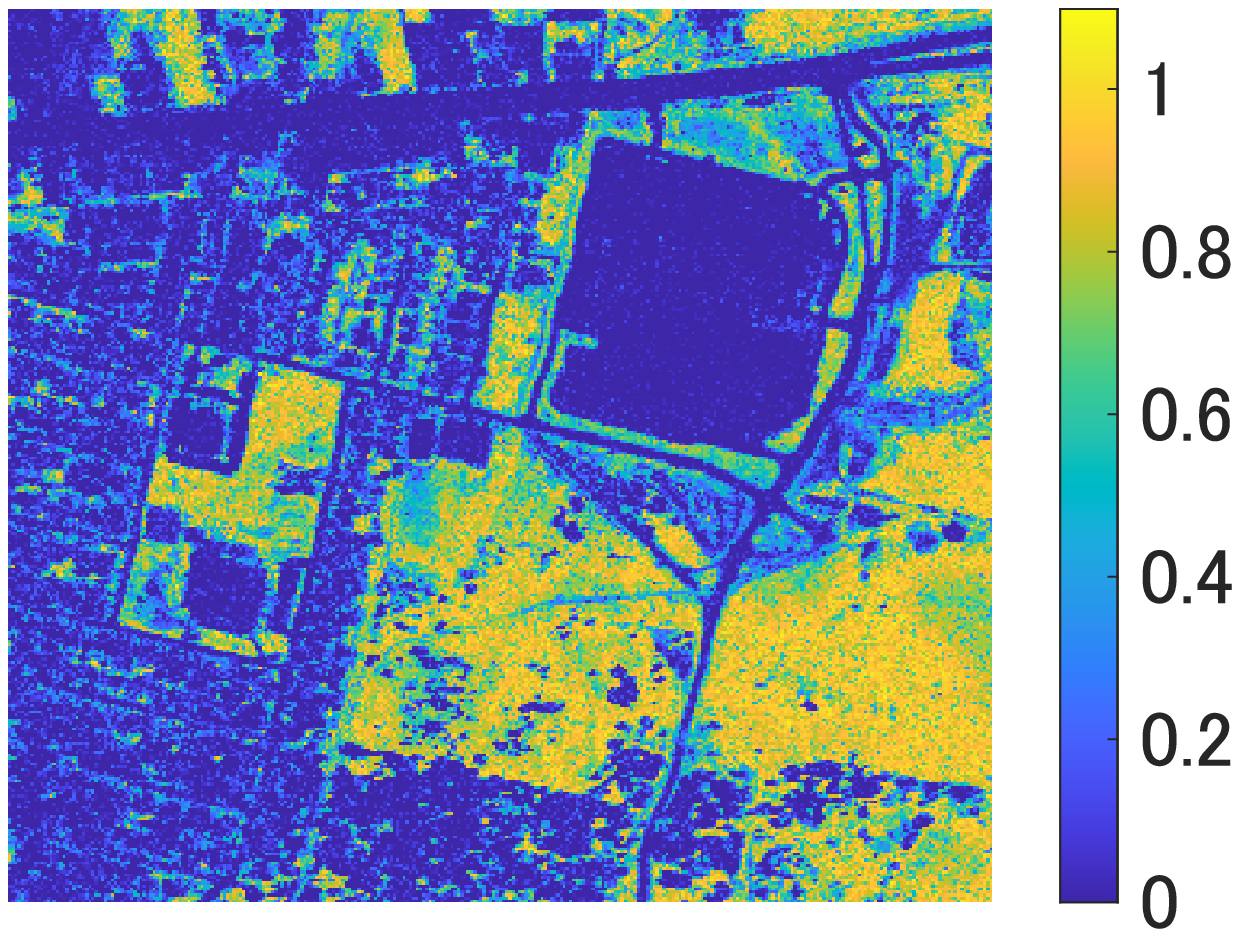}} 
		\end{minipage}
	
		\begin{minipage}{0.01\hsize}
			\centerline{\rotatebox{90}{roof}} 
		\end{minipage}
		\begin{minipage}{0.13\hsize}
			\centerline{\includegraphics[width=\hsize]{./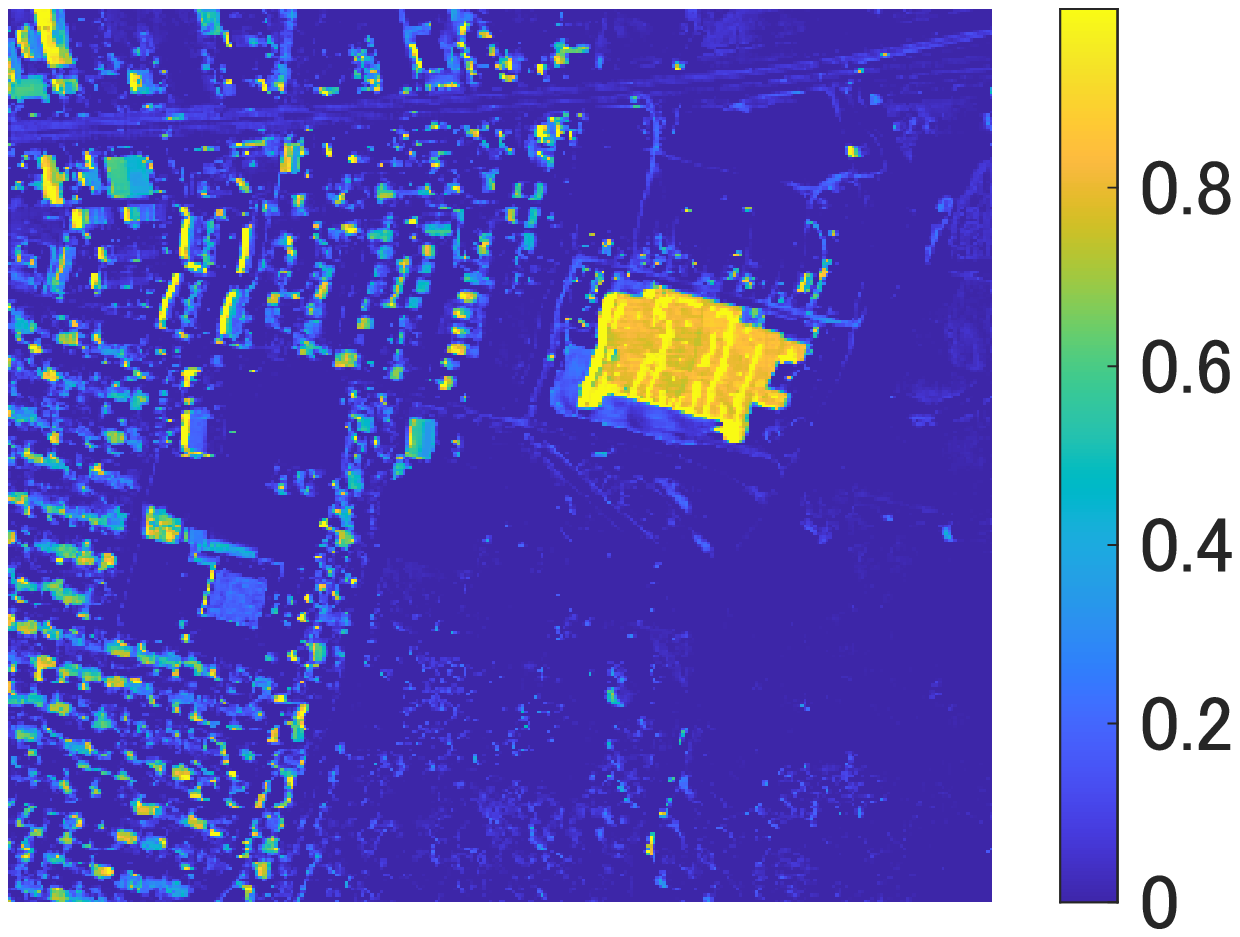}} 
		\end{minipage}
		\begin{minipage}{0.13\hsize}
			\centerline{\includegraphics[width=\hsize]{./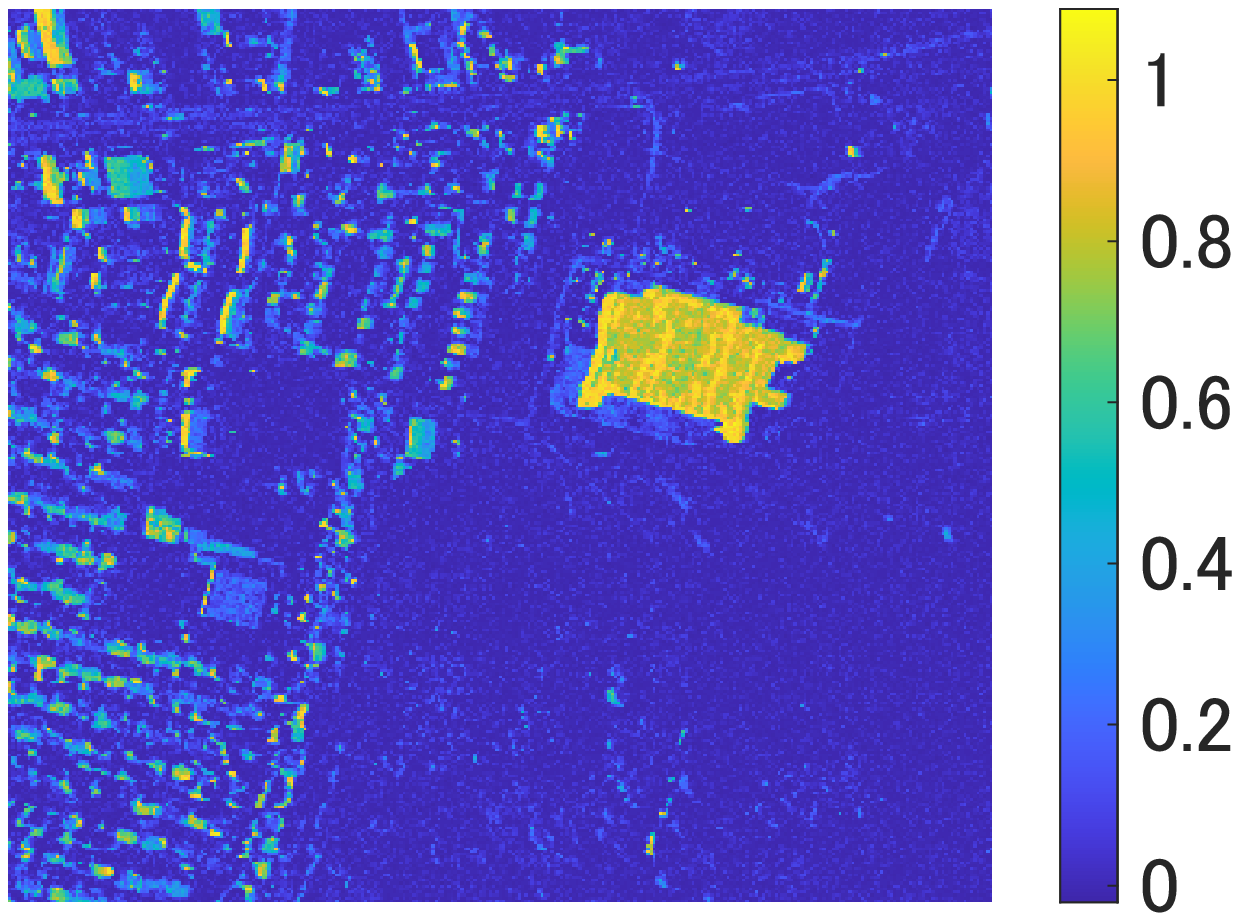}} 
		\end{minipage}
		\begin{minipage}{0.13\hsize}
			\centerline{\includegraphics[width=\hsize]{./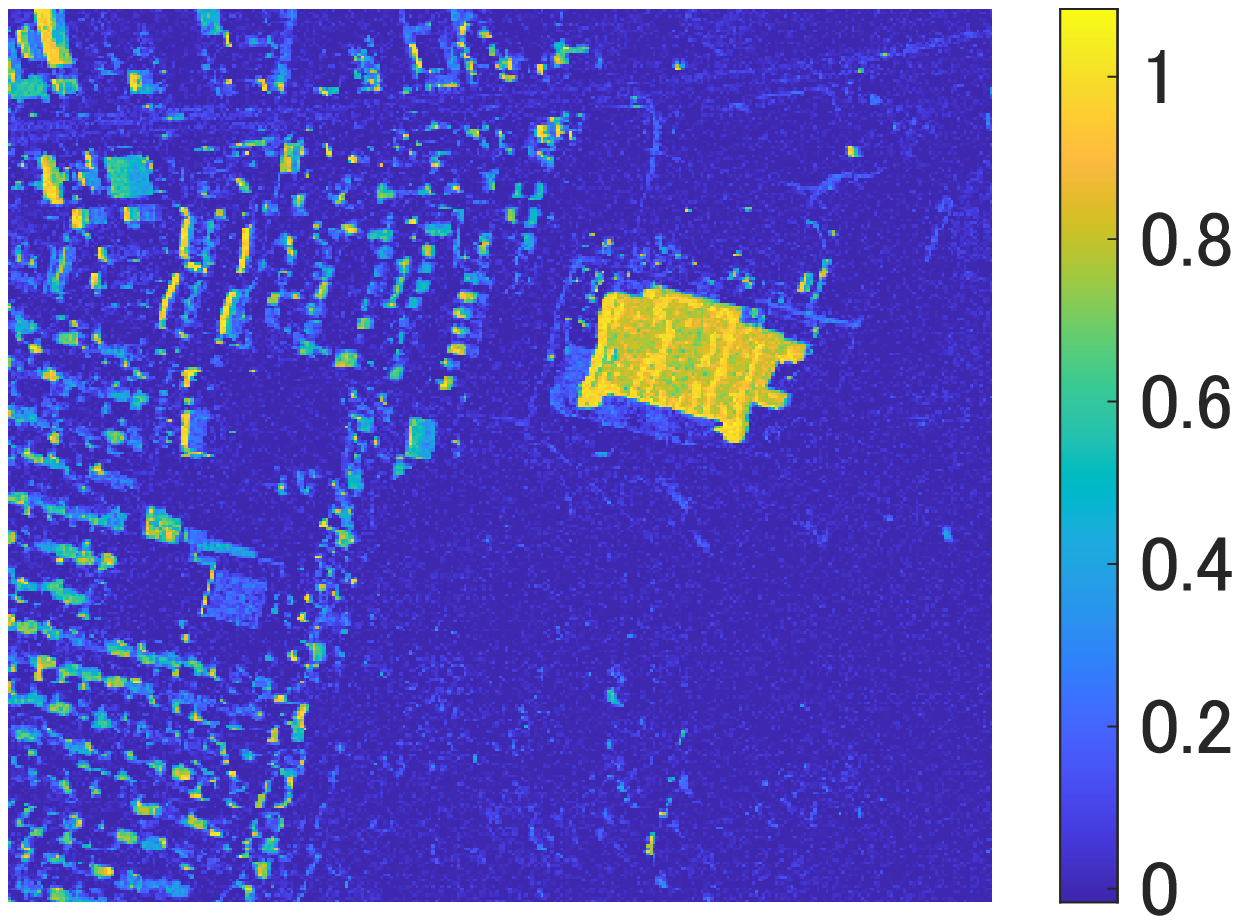}} 
		\end{minipage}
		\begin{minipage}{0.13\hsize}
			\centerline{\includegraphics[width=\hsize]{./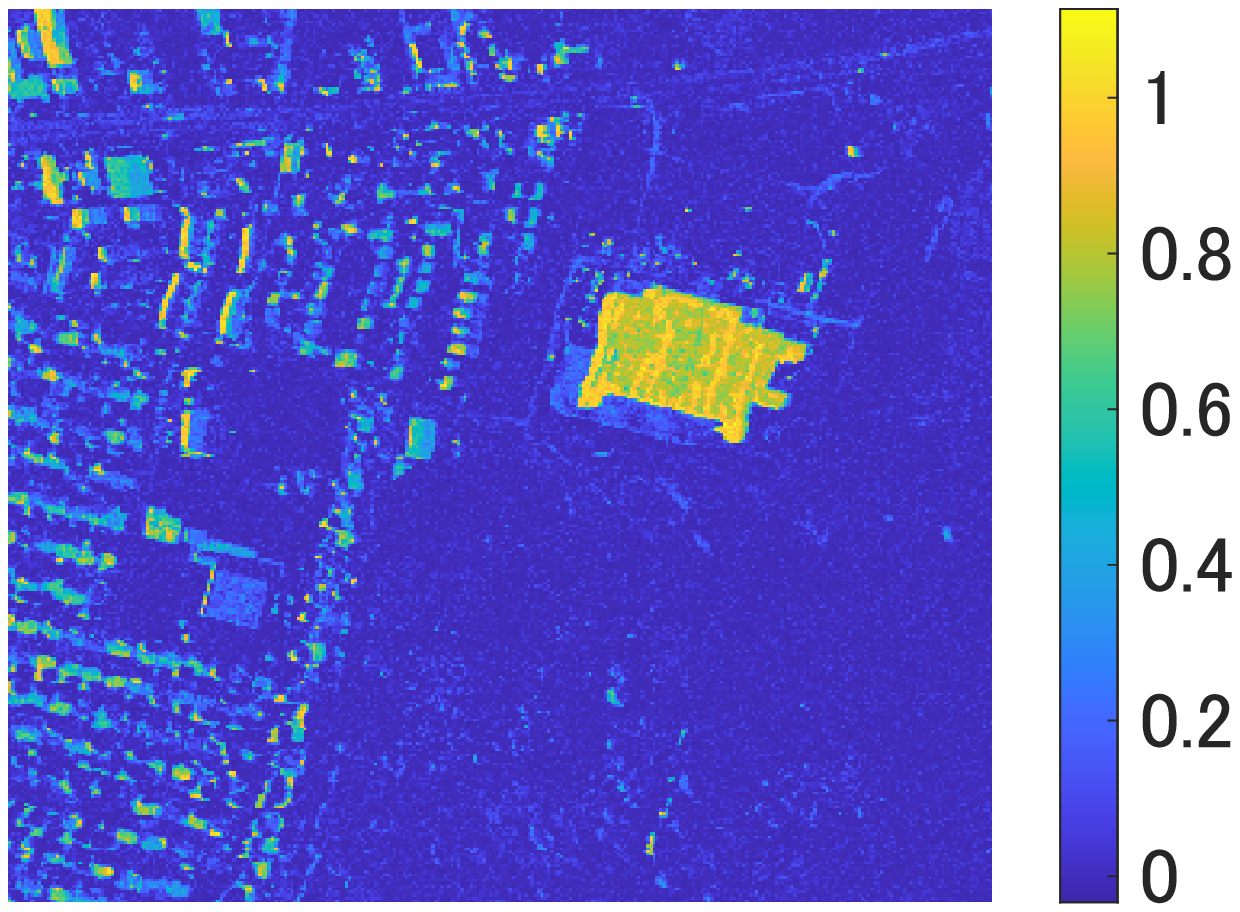}} 
		\end{minipage}
		\begin{minipage}{0.13\hsize}
			\centerline{\includegraphics[width=\hsize]{./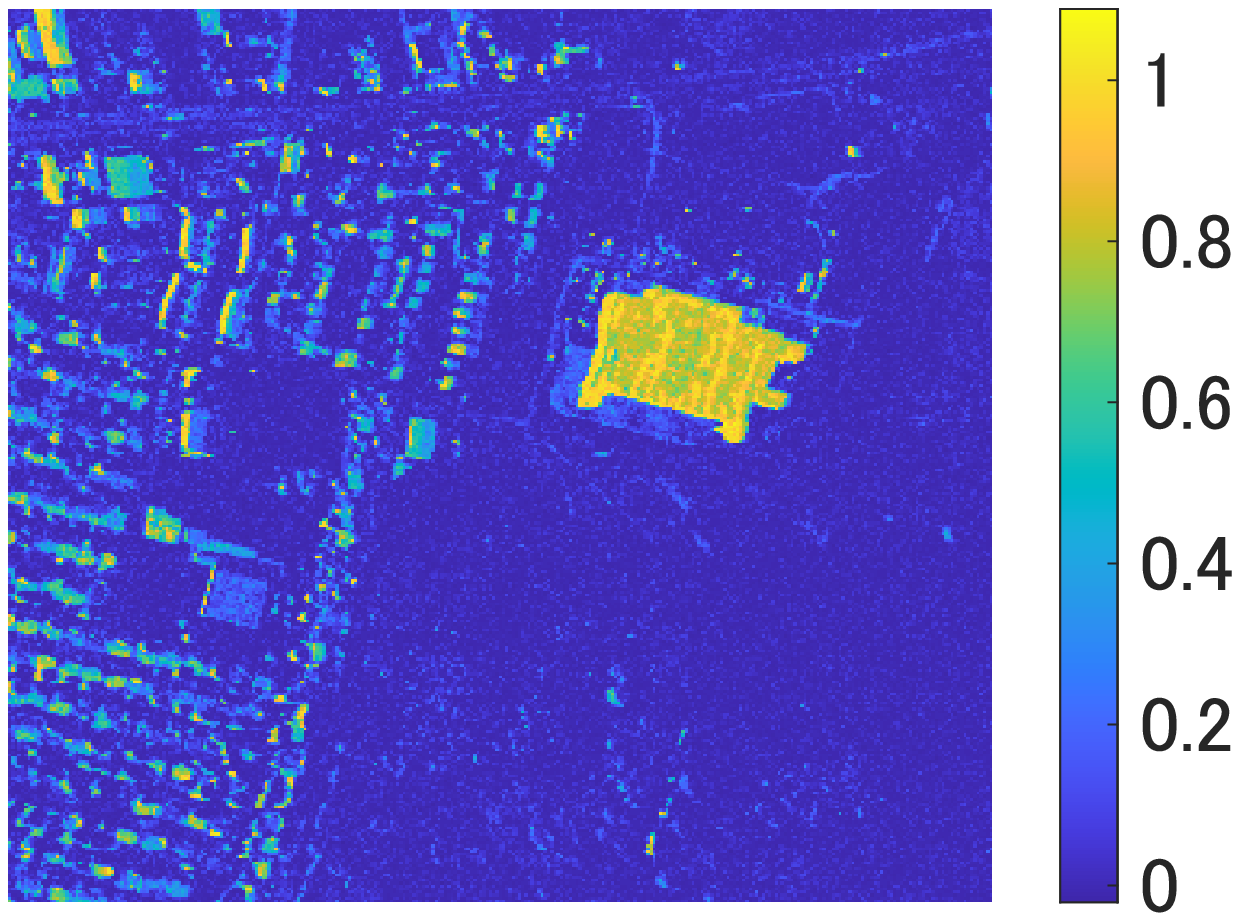}} 
		\end{minipage}
		\begin{minipage}{0.13\hsize}
			\centerline{\includegraphics[width=\hsize]{./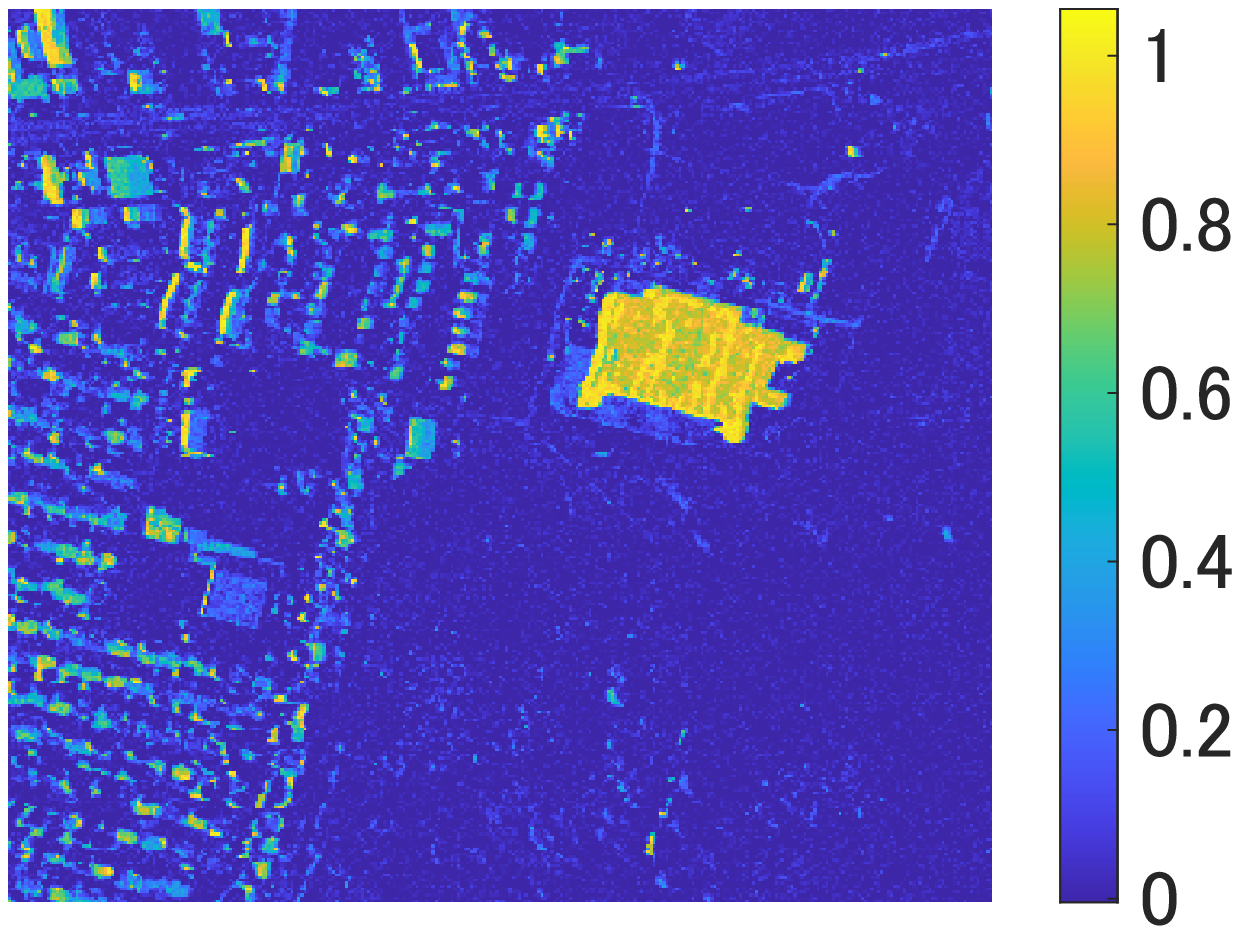}} 
		\end{minipage}
		\begin{minipage}{0.13\hsize}
			\centerline{\includegraphics[width=\hsize]{./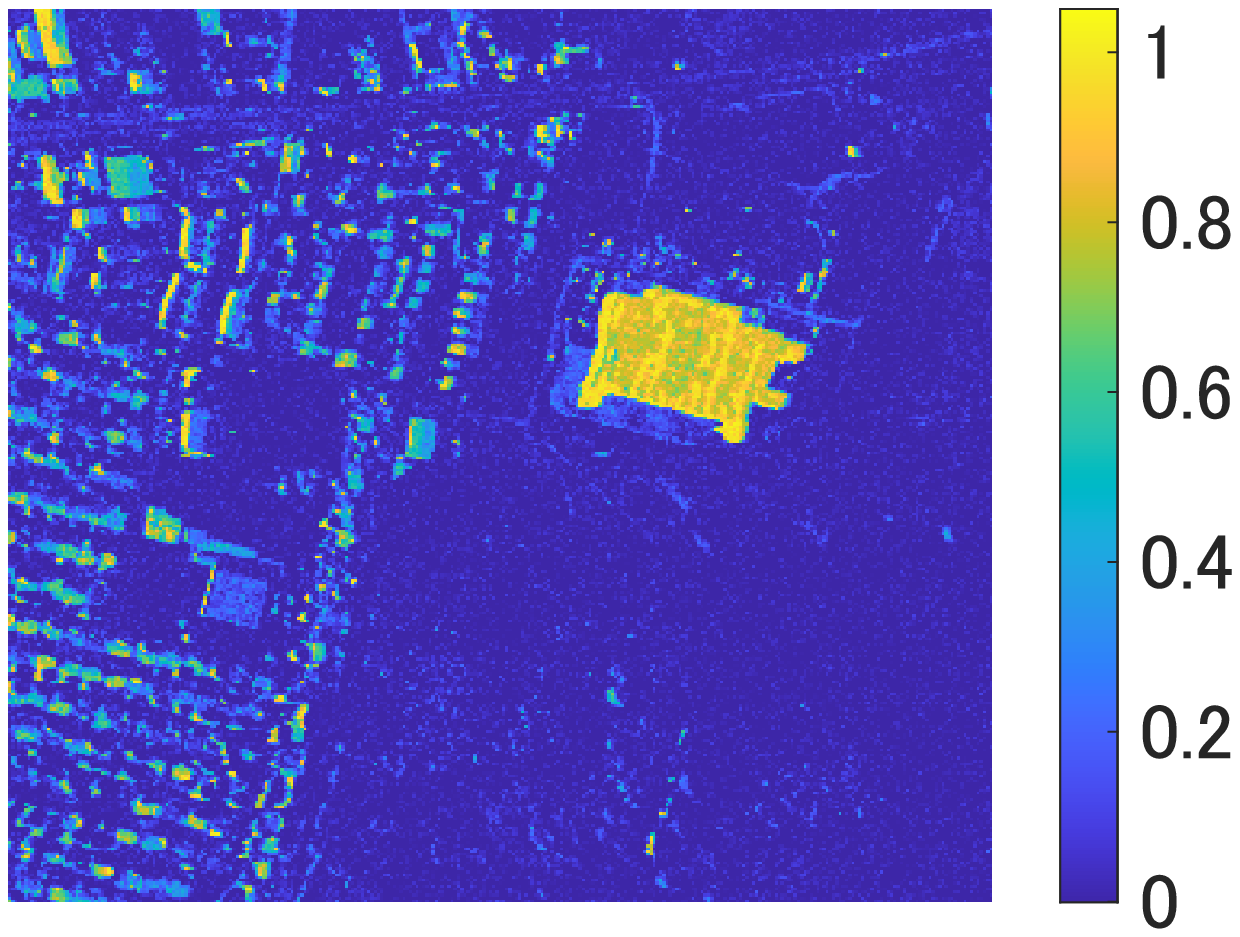}} 
		\end{minipage}
	
		\vspace{1mm}
		
		\begin{minipage}{0.01\hsize}
			\centerline{~} 
		\end{minipage}
		\begin{minipage}{0.13\hsize}
			\centerline{(a) SNR [dB]}
		\end{minipage}
		\begin{minipage}{0.13\hsize}
			\centerline{(b) $11.19$ [dB]}
		\end{minipage}
		\begin{minipage}{0.13\hsize}
			\centerline{(c) $11.43$ [dB]}
		\end{minipage}
		\begin{minipage}{0.13\hsize}
			\centerline{(d) $11.18$ [dB]}
		\end{minipage}
		\begin{minipage}{0.13\hsize}
			\centerline{(e) $11.22$ [dB]}
		\end{minipage}
		\begin{minipage}{0.13\hsize}
			\centerline{(f) $11.46$ [dB]}
		\end{minipage}
		\begin{minipage}{0.13\hsize}
			\centerline{(g) $11.45$ [dB]}
		\end{minipage}
	
		\vspace{1mm}

	\end{center}

	\vspace{-4mm}
	
	\caption{Abundance maps of HS unmixing results. (a): The ground truth abundance maps. (b): The abundance maps estimated by P-PDS with \SP~\cite{PDS_1} ($\ParamPDS_{1} = 0.001$). (c): The abundance maps estimated by P-PDS with \ASEW~\cite{DP-PDS}. (d): The abundance maps estimated by P-PDS with \SPDP~\cite{SPDP} ($\ParamSPDP = 0.01$). (e): The abundance maps estimated by P-PDS with \ONVWOnes (Ours). (f): The abundance maps estimated by P-PDS with \ONVWTwos (Ours). (g): The abundance maps estimated by P-PDS with \ONVWThrees (Ours).}
	\label{fig:results_Unm}
\end{figure*}

Fig.~\ref{fig:graph_MNR} plots iterations versus RMSE, Residual, and MPSNR and computational time versus RMSE, Residual, and MPSNR, respectively. 
In terms of iterations (Figs.~\ref{fig:graph_MNR} (a1), (b1), and (c1)), P-PDSs with \SPs ($\ParamPDS_{1}=0.01$), \SPs ($\ParamPDS_{1}=0.001$), \SPDPs ($\ParamSPDP = 0.01$), and \SPDPs ($\ParamSPDP = 0.001$) were very slow, and P-PDSs with \SPs ($\ParamPDS_{1}=1$), \SPs ($\ParamPDS_{1}=0.1$), \ASEW, \SPDPs ($\ParamSPDP = 1$), \SPDPs ($\ParamSPDP = 0.1$), \ONVWTwo, and \ONVWThrees were fast. 
For P-PDS with \ONVWOnes, the evolution of the MPSNR values was slightly slow, but the convergence of the RMSE and Residual values was not.
In terms of computational time (Figs.~\ref{fig:graph_MNR} (a2), (b2), and (c2)), although P-PDSs with \SP, \ASEW, and \ONVWs have the same computational complexity per iteration in $O$-notation, P-PDS with \ASEWs took longer than P-PDSs with \SPs and \ONVW.
When computing the analytic solutions of the proximity operators, P-PDSs with \SPs and \ONVWs require the multiplication of a scalar and a vector, while P-PDS with \ASEWs requires the element-wise multiplication of two vectors. 
Since the latter takes longer to run than the former, P-PDS with \ASEWs was longer in running time. 
P-PDSs with \SPDPs were very slow because they require the iterative algorithm to calculate the skewed proximity operator.

Fig.~\ref{fig:results_MNR} shows the denoising results and the MPSNR values [dB] obtained by P-PDS with \SPs ($\ParamPDS_{1}=0.1$), \ASEW, \SPDPs ($\ParamSPDP = 0.1$), \ONVWOne, \ONVWTwo, and \ONVWThree. 
The algorithm was run until satisfying the stopping criterion or reaching $10000$ iterations.
We can see that all results are almost the same in terms of the MPSNR and the visual qualities.

\subsection{Application to Hyperspectral Unmixing}
An HS image is a three-dimensional data cube that consists of two-dimensional spatial information and one-dimensional spectral information.
Compared to grayscale or RGB images, HS images offer more than several hundred bands, each of which contains specific unique wavelength characteristics of materials such as minerals, soils, and liquids.
Due to the trade-off between spatial resolution and wavelength resolution, HS sensors do not have a sufficient spatial resolution, resulting in containing multiple components (called endmembers) in a pixel \cite{keshava2002spectral}, which refers to as a mixel. 
The process of decomposing the mixels into endmembers and their abundance maps is called unmixing.
Unmixing has been actively studied in the remote sensing field because of its indispensability for analyzing HS images~\cite{ma2013signal,ghamisi2017advances}.
One of the popular unmixing methods is the constrained collaborative sparse regression problem~\cite{CLSUnSAL_iordache2013}, which has attracted attention as an optimization-based strategy for HS unmixing ~\cite{JSTV_Aggarwal_2016,RSSUn_TV_Wang_2019,ICoNMF_TV_Yuan_2020}.

\begin{figure*}[!t]
	\begin{center}
		\begin{minipage}{0.97\hsize}
			\centerline{\includegraphics[width=\hsize]{./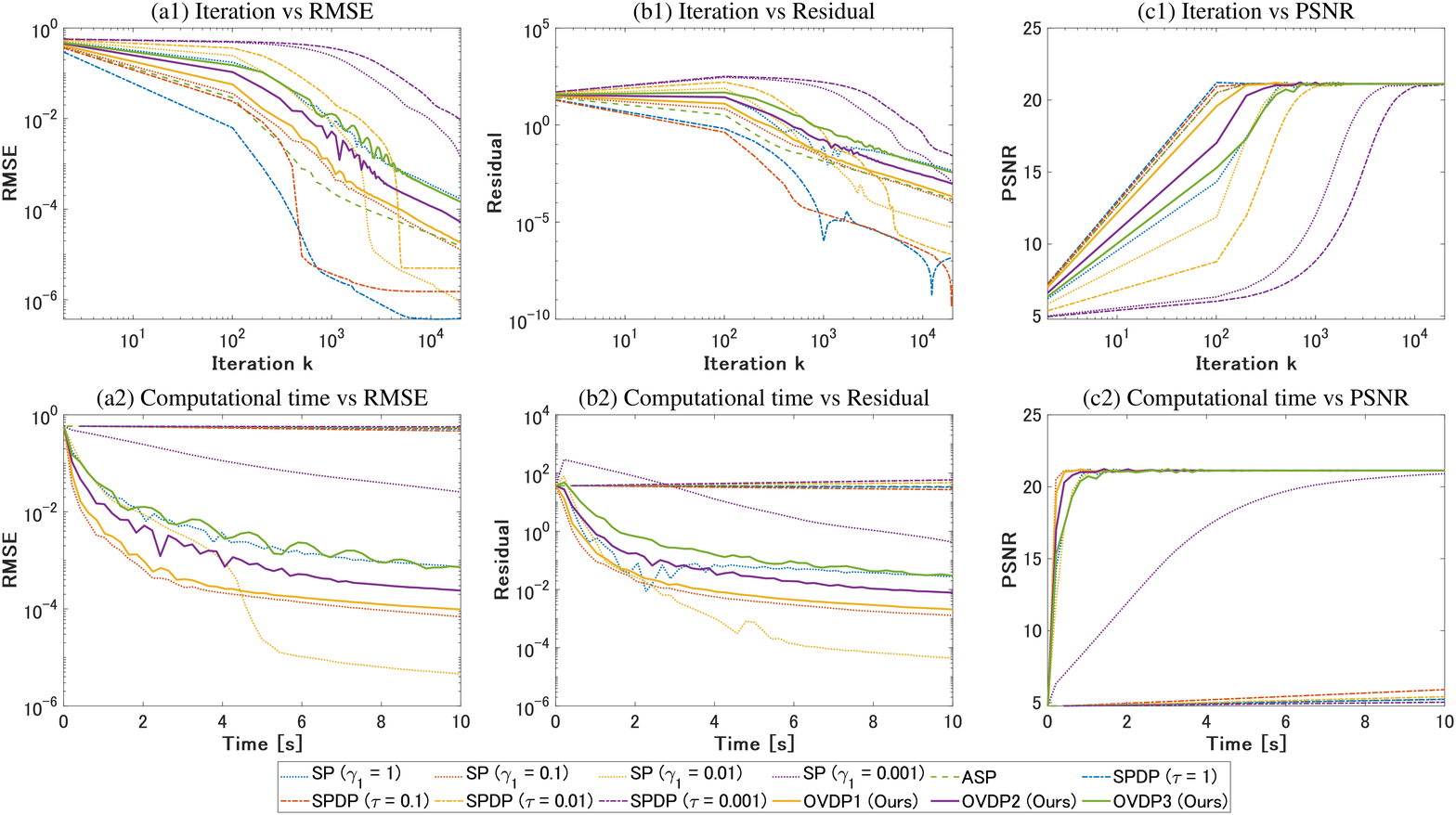}} 
		\end{minipage}
	\end{center}

	\vspace{-2mm}
	
	\caption{Convergence profiles of the graph singal recovery experiments. (a): Iterations/computational time versus RMSE. (b): Iterations/computational time versus Residual. (c): Iterations/computational time versus PSNR.}
	\label{fig:graph_GSR}
\end{figure*}

\subsubsection{Problem Formulation}
Let $\ObsHSIUnm_{\IndPixUnm} \in \RealNumSet^{\NumBandUnm \times 1}$ represent an  $\NumBandUnm$-dimensional $i$th pixel vector of an HS image with $\NumBandUnm$ spectral bands and $\MatEndmemberUnm = [\VecEmdmemberUnm_{1}, \ldots, \VecEmdmemberUnm_{\NumEndmember}] \in \RealNumSet^{\NumBandUnm \times \NumEndmember}$ be an endmember matrix that denotes a spectral library with $\NumEndmember$ spectral signatures.
The pixel $\ObsHSIUnm_{\IndPixUnm}$ can be modeled as the following form of linear combination: 
\begin{equation}
	\ObsHSIUnm_{\IndPixUnm} = \MatEndmemberUnm \AbundanceUnm_{\IndPixUnm} + \NoiseUnm_{\IndPixUnm},
\end{equation}
where $\AbundanceUnm_{\IndPixUnm}\in \RealNumSet^{\NumBandUnm \times 1}$ is an abundance map.
Introducing the extended endmember matrix $\MatEndmemberExpUnm = \mathrm{diag}(\MatEndmemberUnm, \ldots, \MatEndmemberUnm) \in \RealNumSet^{\NumPixUnm\NumBandUnm \times \NumPixUnm\NumEndmember}$, we can express an observed HS image $\ObsHSIUnm = [\ObsHSIUnm_{1}^{\top}, \ldots, \ObsHSIUnm_{\NumPixUnm}^{\top}]^{\top}$ as
\begin{equation}
	\ObsHSIUnm = \MatEndmemberExpUnm \AbundanceUnm + \NoiseUnm.
\end{equation}
Based on the above model, the constrained collaborative sparse regression problem of unmixing is formulated as the following convex optimization problem:
\begin{equation}
	\label{prob:CLSUnSAL}
	\min_{\AbundanceUnm} 
	\| \AbundanceUnm \|_{1, 2} \: 
	\st \: 
		\MatEndmemberExpUnm\AbundanceUnm \in \BallFidelUnm, 
		\AbundanceUnm \in \NNRealNumSet^{\NumVerPixUnm\NumHorPixUnm\NumBandUnm}.
\end{equation}
The first term is the mixed $\ell_{1, 2}$ norm, which is defined by
\begin{equation}
	\label{eq:l21norm_Unm}
	\| \AbundanceUnm \|_{1, 2} = \sum_{\IndEndmember = 1}^{\NumEndmember} \sqrt{\sum_{\IndPixUnm = 1}^{\NumPixUnm} [\AbundanceUnm_{\IndPixUnm}]_{\IndEndmember}^{2}}.
\end{equation}
The first constraint serves as data-fidelity with the $\ObsHSIUnm$-centered $\ell_{2}$-ball of the radius $\ParamFidelUnm > 0$.\footnote{The original constrained collaborative sparse regression formulation proposed in~\cite{CLSUnSAL_iordache2013} incorporates an $\ell_{2}$ data-fidelity term as a part of the objective function, whereas the formulation in~\eqref{prob:CLSUnSAL} imposes data fidelity as an $\ell_{2}$-ball constraint. The reason is similar to the case of the mixed noise removal experiment.} 
The second constraint enforces $\AbundanceUnm$ to belong to the nonnegative orthant $\NNRealNumSet^{\NumVerPixUnm\NumHorPixUnm\NumBandUnm}$.

By using the indicator functions (see Eq.~\eqref{eq:indicator_function}) of $\BallFidelUnm$ and $\NNRealNumSet^{\NumVerPixUnm\NumHorPixUnm\NumBandUnm}$, Prob.~\eqref{prob:CLSUnSAL} is reduced to Prob.~\eqref{prob:general_form_of_optimization} via the following reformulation:
\begin{align}
	\min_{\AbundanceUnm, \VarDualUnm_{1}, \VarDualUnm_{2}} \:
	& \| \AbundanceUnm \|_{1, 2}
	+ \FuncIndi{\BallFidelUnm}{\VarDualUnm_{1}} 
	+ \FuncIndi{\NNRealNumSet^{\NumVerPixUnm\NumHorPixUnm\NumBandUnm}}{\VarDualUnm_{2}} \nonumber \\
	\st \: 
	& \begin{cases}
		\VarDualUnm_{1} = \MatEndmemberExpUnm\AbundanceUnm, \\
		\VarDualUnm_{2} = \AbundanceUnm.
	\end{cases}
	\label{prob:CLSUnSAL_PDS}
\end{align}
Applying Algorithm~\ref{algo:P_PDS} to Prob.~\eqref{prob:CLSUnSAL_PDS}, we can obtain an optimal solution of Prob.~\eqref{prob:CLSUnSAL}. 
Since the functions $\| \cdot \|_{1, 2}$ and $\iota_{\BallFidelUnm}$ are not separable for each element of the input variable, an iterative algorithm is needed for the calculation of their skewed proximity operators relative to the metric induced by the preconditioners of \ASEWs and \SPDP.
Here, the preconditioners designed by \ONVWs are as in Tab.~\ref{tab:OVDP_Unm}.

\begin{table}[t]
	\begin{center}
		\caption{The Preconditioners by \ONVWs for Unmixing.}
		\label{tab:OVDP_Unm}
		\vspace{-2mm}
		\begin{tabular}{cccc}
			\toprule
			& $\Precon_{1,1}$ 
			& $\Precon_{2,1}$ & $\Precon_{2,2}$ \\
			\cmidrule(lr){2-4}
			
			\vspace{1mm}
			\ONVWOne & $\frac{1}{\NormOpNoResize{\MatEndmemberExpUnm}^{2} + 1^{2}}\mathbf{I}$ & $\mathbf{I}$ & $\mathbf{I}$ \\
			
			\vspace{1mm}
			\ONVWTwo & $\frac{1}{\NormOpNoResize{\MatEndmemberExpUnm} + 1}\mathbf{I}$ & $\frac{1}{\NormOpNoResize{\MatEndmemberExpUnm}}\mathbf{I}$ & $\mathbf{I}$ \\ 
			
			\ONVWThree & $\frac{1}{2}\mathbf{I}$ & $\frac{1}{\NormOpNoResize{\MatEndmemberExpUnm}^{2}}\mathbf{I}$ & $\mathbf{I}$ \\
			\bottomrule
		\end{tabular}
	\end{center}
	\vspace{-3mm}
\end{table}

\subsubsection{Experimental Results and Discussion}
For \SP, $\PreconOpNormSP$ in~\eqref{eq:SP_expriment} was set as 
\begin{equation}
	\label{eq:SP_GS_param}
	\PreconOpNormSP = \sqrt{\NormOpNoResize{\MatEndmemberExpUnm}^{2} + 1},
\end{equation}
because the following inequality holds due to the inequality of the operator norms of block matrices~\cite{norm_inequality}:
\begin{equation}
	\NormOp{\begin{bmatrix} \MatEndmemberExpUnm \\ \mathbf{I} \end{bmatrix}}^{2} 
	\leq \NormOpNoResize{\MatEndmemberExpUnm}^{2} + \NormOpNoResize{\mathbf{I}}^{2} 
	= \NormOpNoResize{\MatEndmemberExpUnm}^{2} + 1.
\end{equation}
For \SPDP, the preconditioners in~\eqref{eq:precon_SPDP_2} were used since the number of dual variables is two.

As the ground truth HS image, we used the urban dataset\footnote{http://www.tec.army.mil/Hypercube}, which has been widely used in the field of HS unmixing. 
The image consists of $307\times 307$ pixels with $210$ spectral bands.
In the image, six main endmembers can be observed in the scene: asphalt road, grass, tree, roof, metal, and dirt.
The observed data was generated by adding white Gaussian noise with the standard deviation $\StanDivGaussUnm = 0.05$. 
The parameter  $\ParamFidelUnm$ was set to $0.9\StanDivGaussMNR\sqrt{\NumVerPixMNR\NumHorPixMNR\NumBandMNR}$. 
For the quantitative evaluation of image qualities, we used the Signal-to-Noise Ratio (SNR)~\footnote{This evaluation metric is often referred to as the signal to reconstruction error in the leterature of HS unmixing (e.g.,~\cite{CLSUnSAL_iordache2013,RSSUn_TV_Wang_2019,ICoNMF_TV_Yuan_2020}).}:
\begin{equation}
	\label{eq:SRE}
	\mathrm{SNR}(\AbundanceUnm^{(\InnerIter)}) := 10 \log_{10}\left(\frac{\|\bar{\AbundanceUnm}\|_{2}}{\| \AbundanceUnm^{(\InnerIter)} - \bar{\AbundanceUnm}\|_{2}}\right),
\end{equation}
where $\AbundanceUnm^{(\InnerIter)}$ and  $\bar{\AbundanceUnm}$ are the estimated and ground true abundance maps, respectively.

\begin{figure*}[!t]
	\begin{center}
		\begin{minipage}{0.23\hsize}
			\centerline{\includegraphics[width=\hsize]{./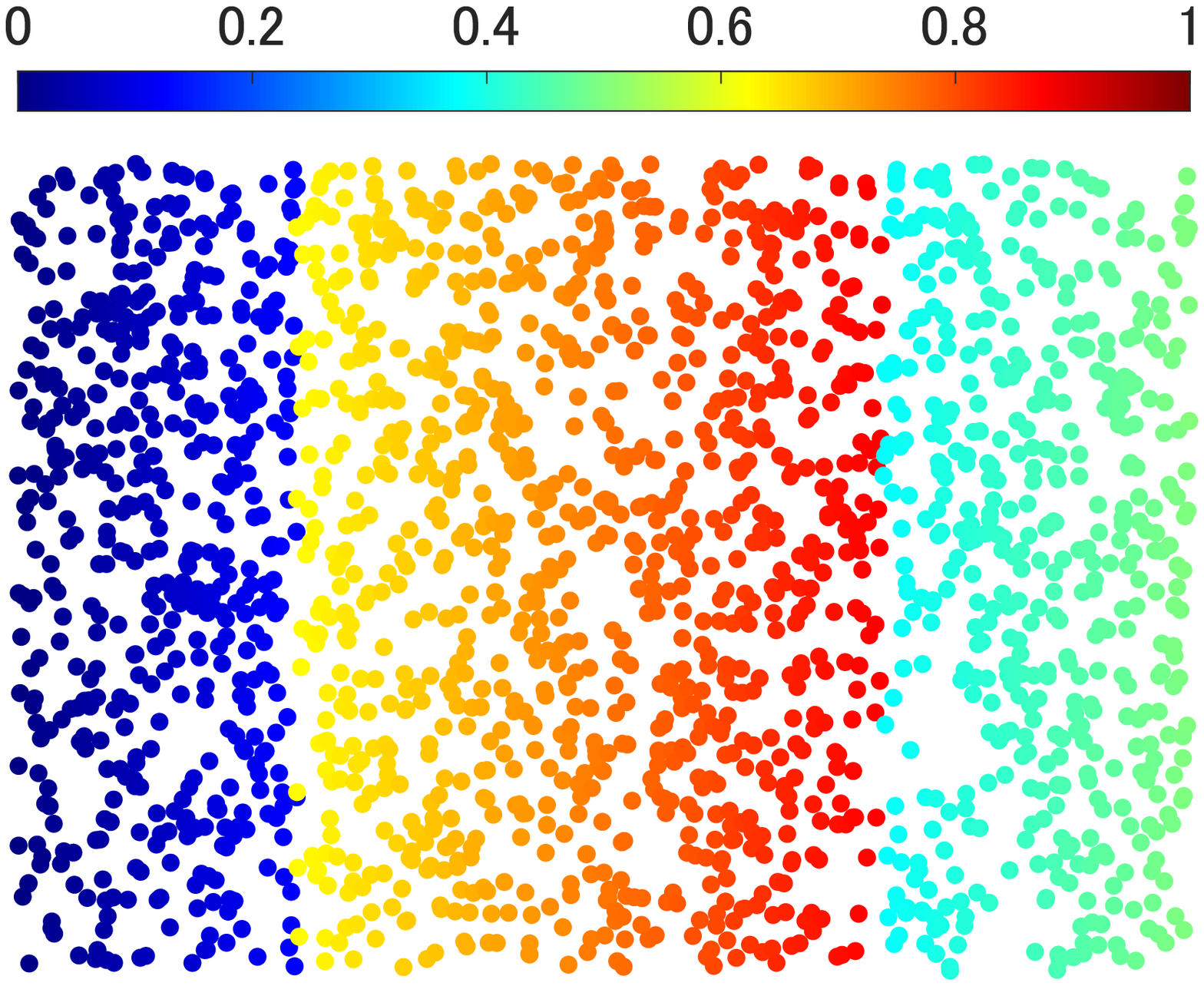}} 
		\end{minipage}
		\begin{minipage}{0.23\hsize}
			\centerline{\includegraphics[width=\hsize]{./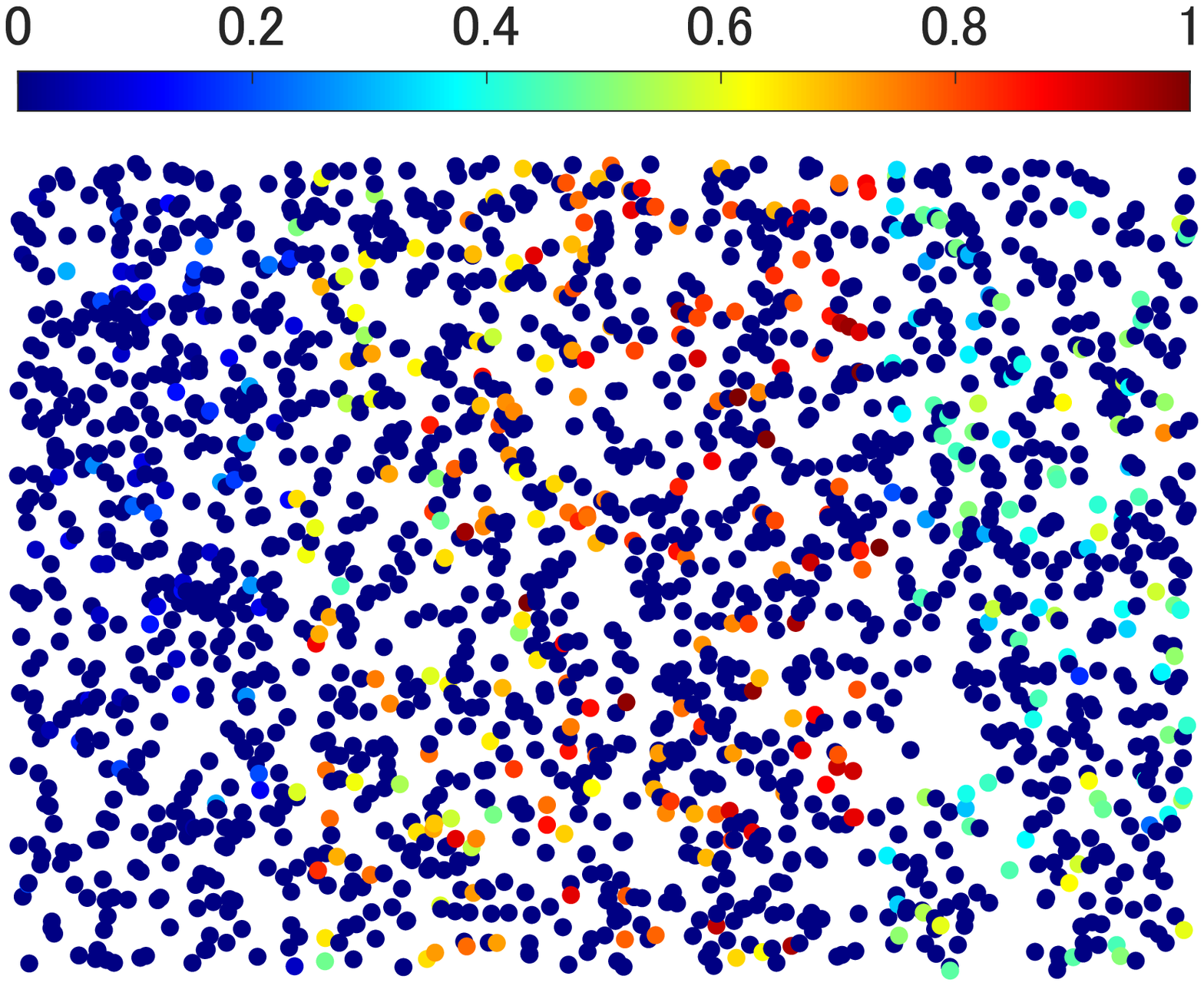}} 
		\end{minipage}
		\begin{minipage}{0.23\hsize}
			\centerline{\includegraphics[width=\hsize]{./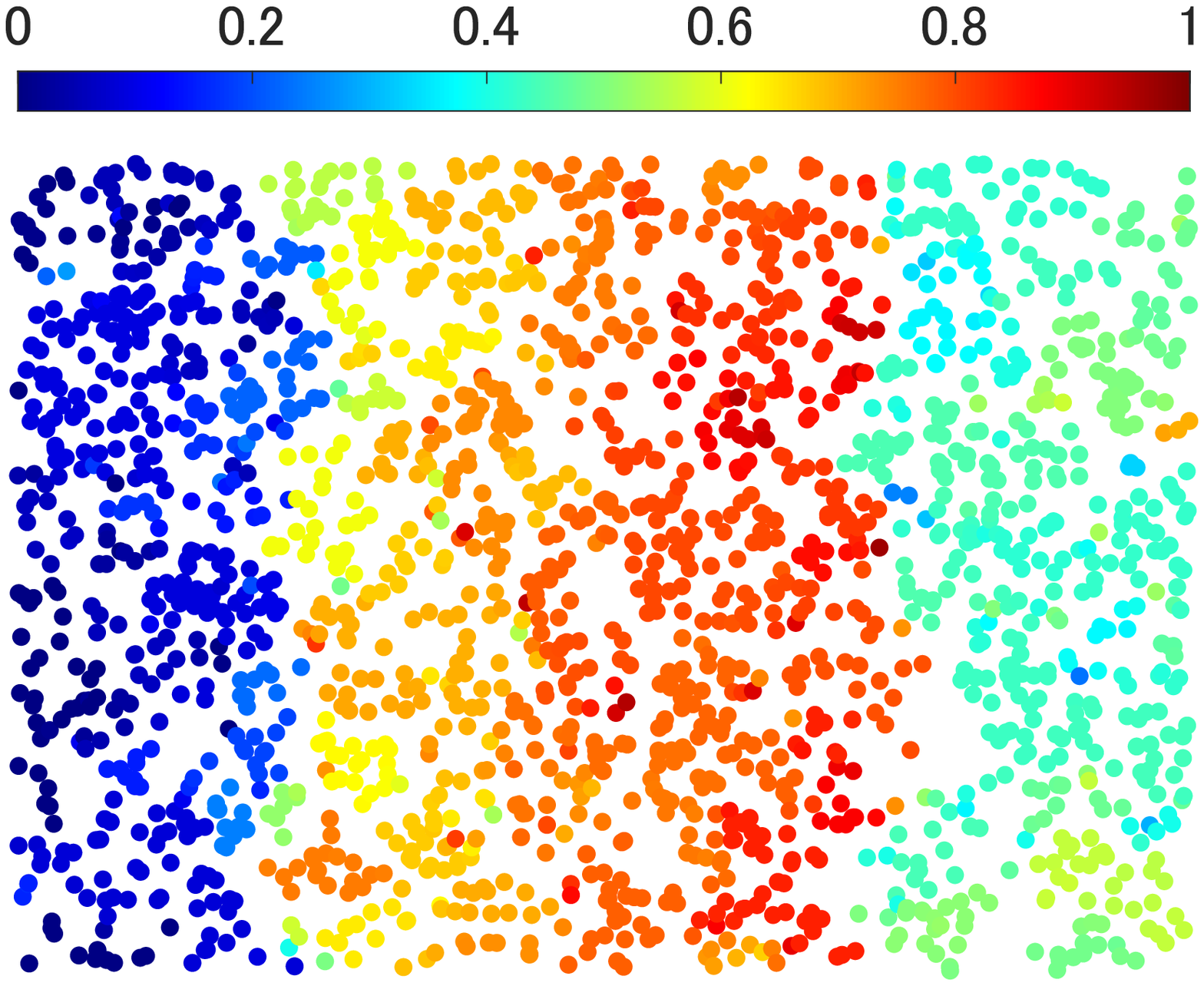}} 
		\end{minipage}
		\begin{minipage}{0.23\hsize}
			\centerline{\includegraphics[width=\hsize]{./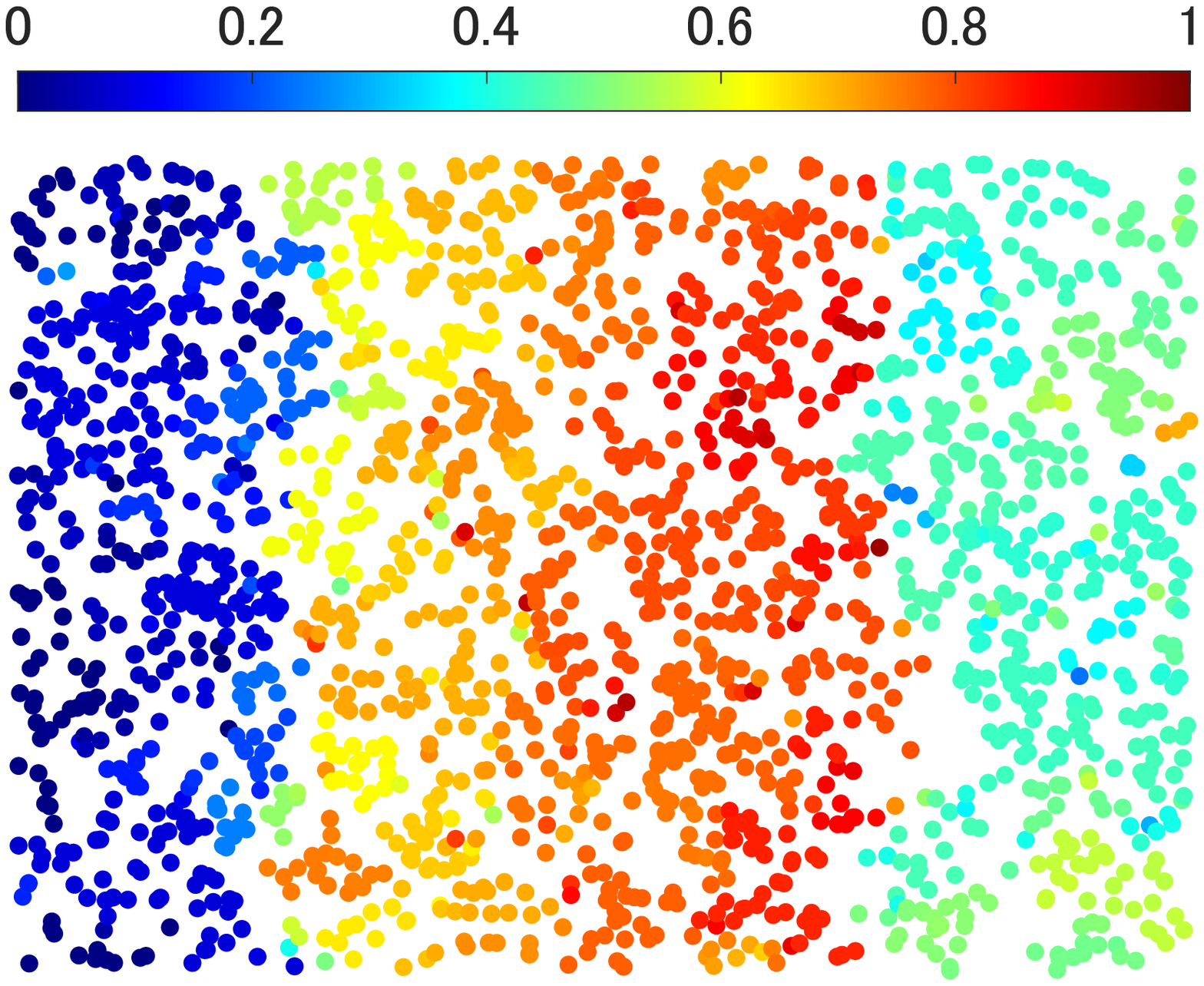}} 
		\end{minipage}
	
		\vspace{1mm}
	
		\begin{minipage}{0.23\hsize}
			\centerline{(a)}
		\end{minipage}
		\begin{minipage}{0.23\hsize}
			\centerline{(b) PSNR=$5.72$ [dB]}
		\end{minipage}
		\begin{minipage}{0.23\hsize}
			\centerline{(c) PSNR=$21.13$ [dB]}
		\end{minipage}
		\begin{minipage}{0.23\hsize}
			\centerline{(d) PSNR=$21.12$ [dB]}
		\end{minipage}
	
		\vspace{1mm}

		\begin{minipage}{0.23\hsize}
			\centerline{\includegraphics[width=\hsize]{./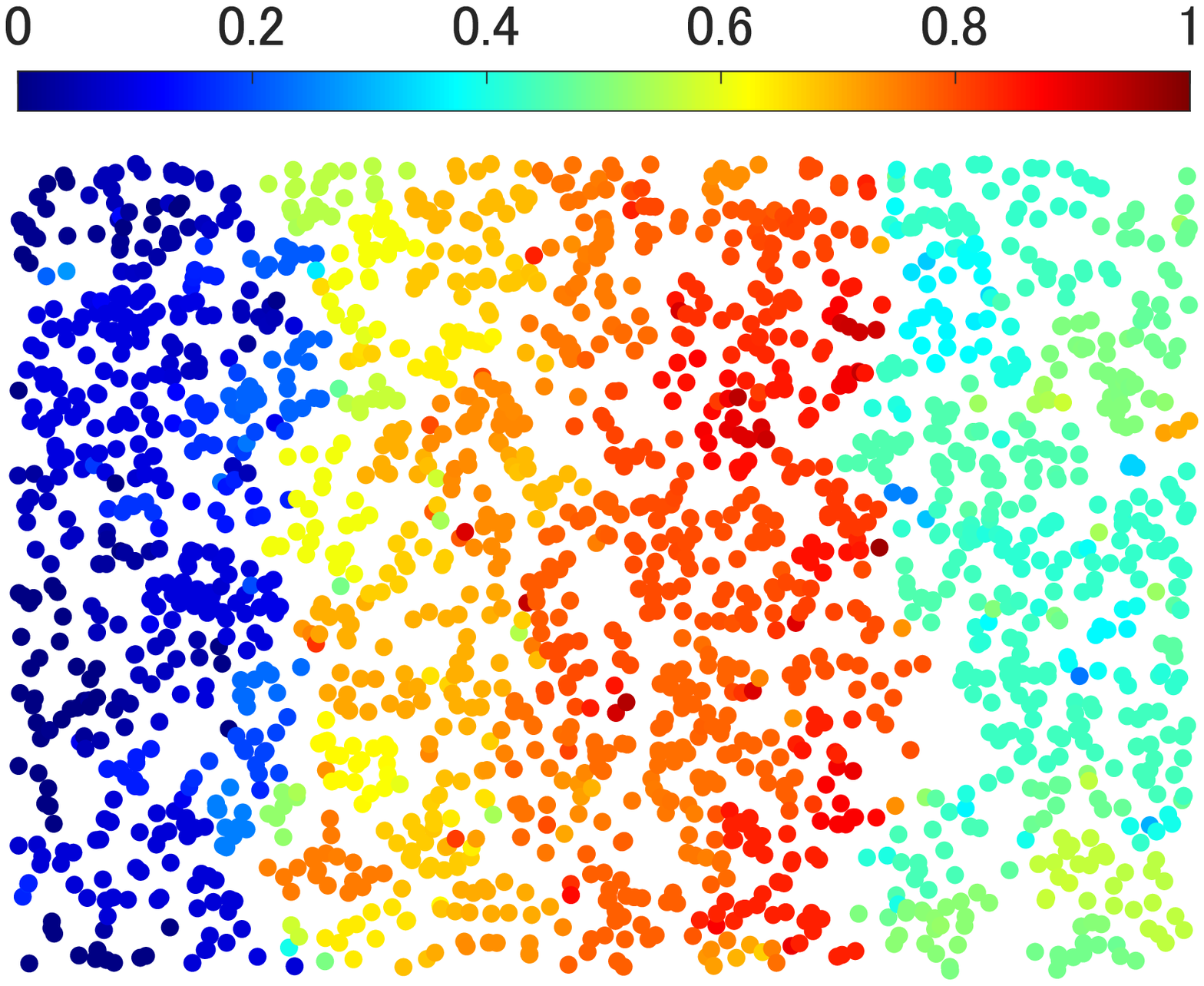}} 
		\end{minipage}
		\begin{minipage}{0.23\hsize}
			\centerline{\includegraphics[width=\hsize]{./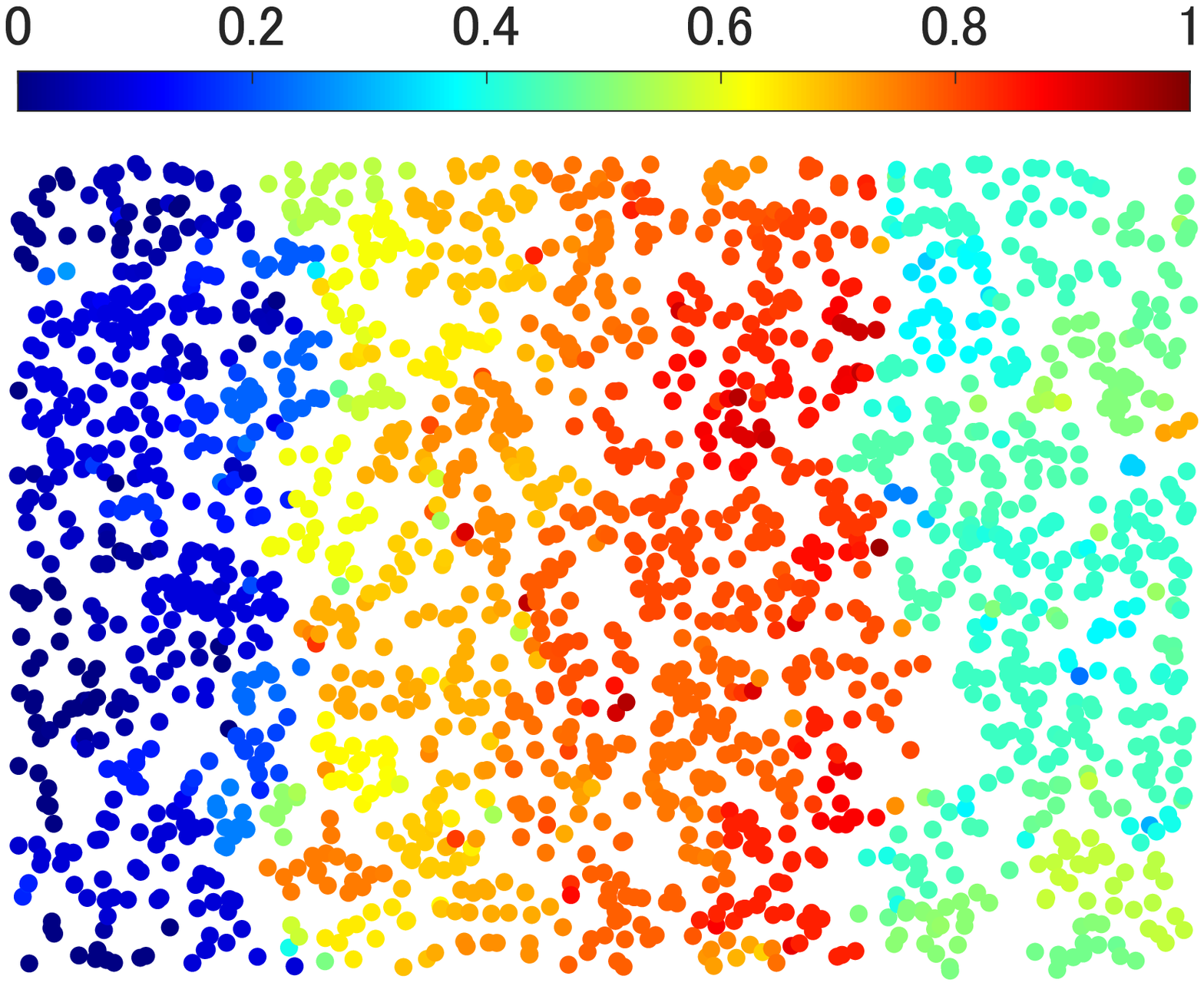}} 
		\end{minipage}
		\begin{minipage}{0.23\hsize}
			\centerline{\includegraphics[width=\hsize]{./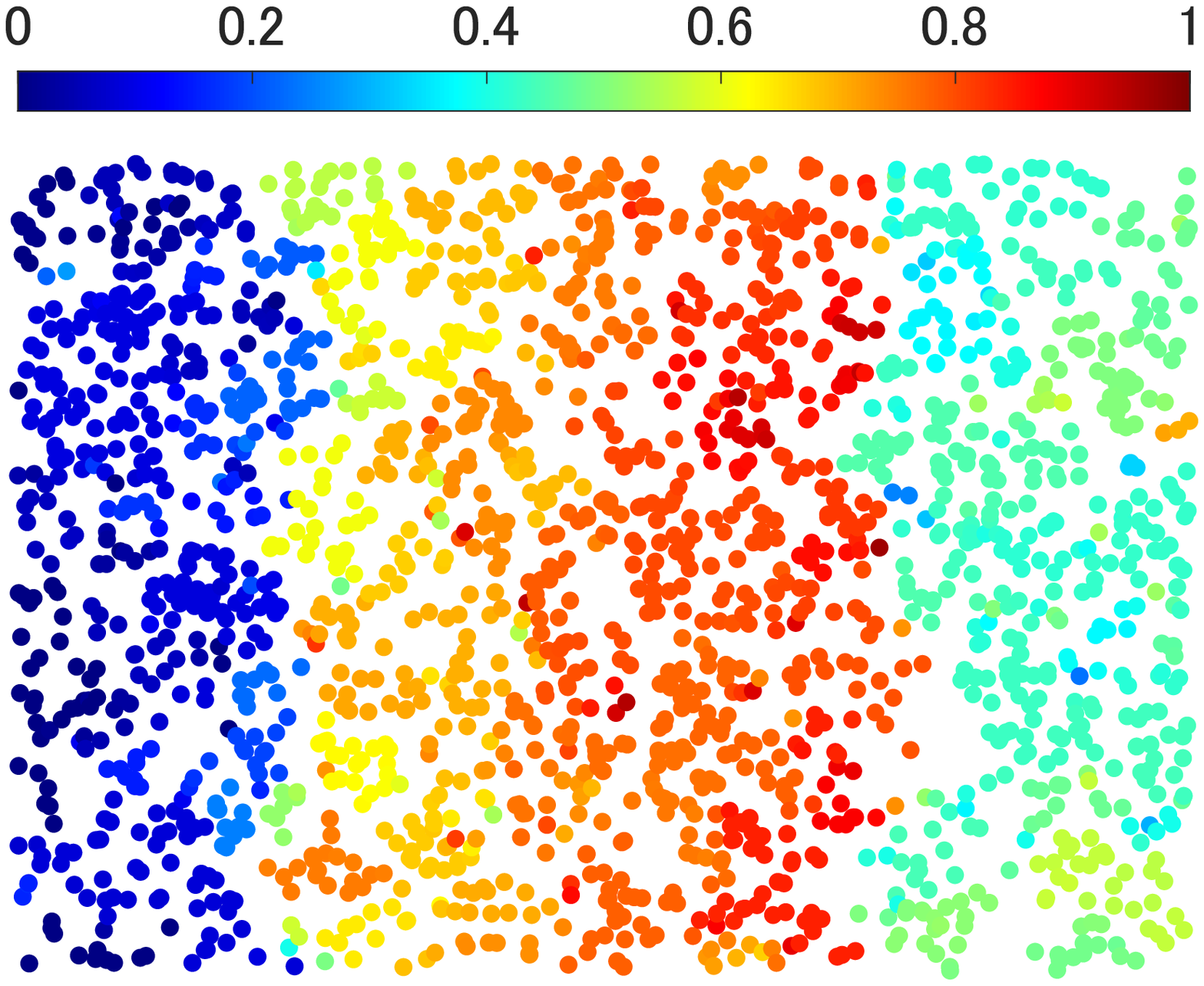}} 
		\end{minipage}
		\begin{minipage}{0.23\hsize}
			\centerline{\includegraphics[width=\hsize]{./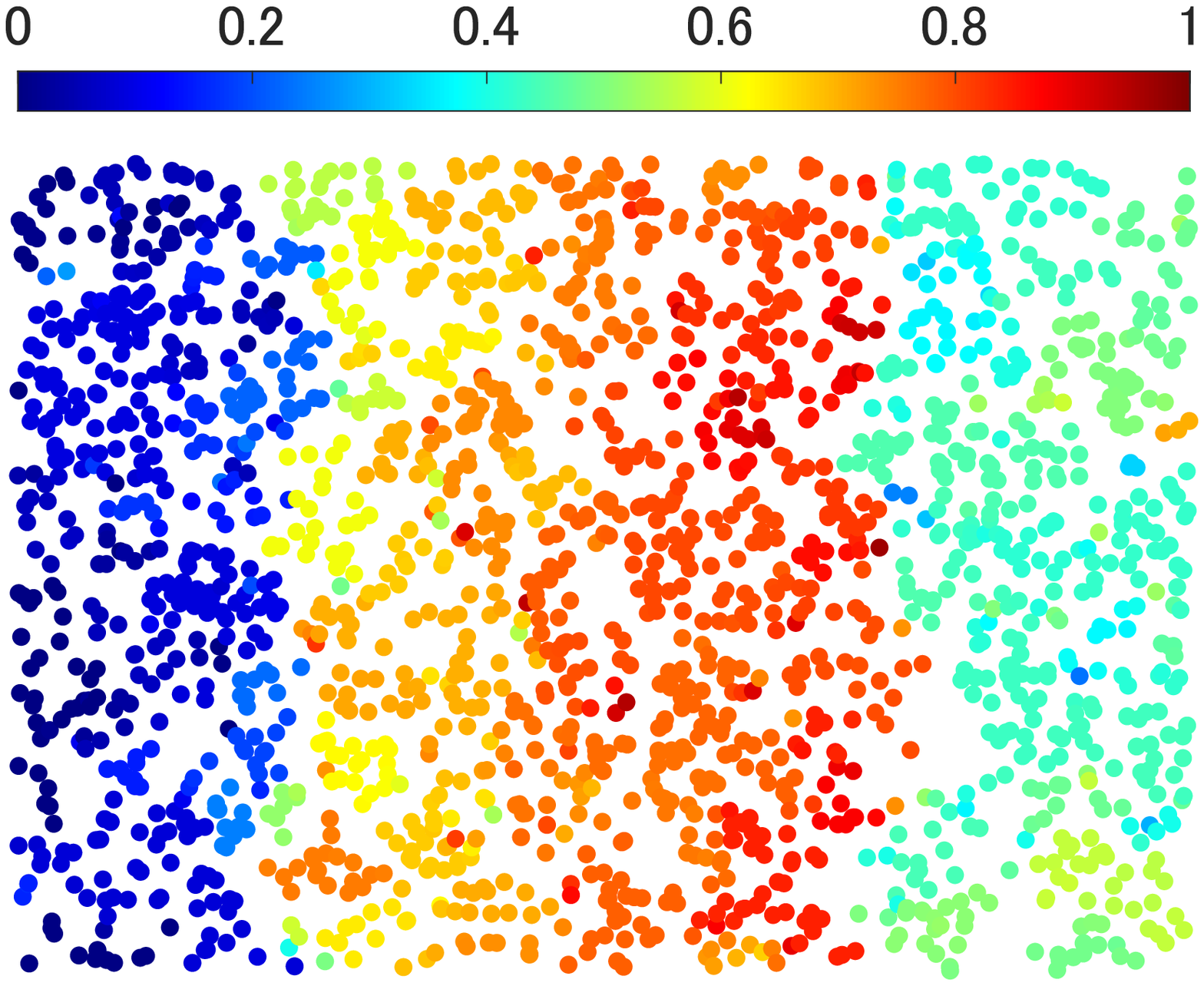}} 
		\end{minipage}
		
		\vspace{1mm}

		\begin{minipage}{0.23\hsize}
			\centerline{(e) PSNR=$21.12$ [dB]}
		\end{minipage}
		\begin{minipage}{0.23\hsize}
			\centerline{(f) PSNR=$21.14$ [dB]}
		\end{minipage}
		\begin{minipage}{0.23\hsize}
			\centerline{(g) PSNR=$21.12$ [dB]}
		\end{minipage}
		\begin{minipage}{0.23\hsize}
			\centerline{(h) PSNR=$21.12$ [dB]}
		\end{minipage}
	\end{center}

	\vspace{-4mm}
	
	\caption{Graph signal recovery results. (a): The ground truth signal. (b): The observed graph signal. (c): The graph signal estimated by P-PDS with \SP~\cite{PDS_1} ($\ParamPDS_{1} = 0.1$). (d): The graph signal estimated by P-PDS with \ASEW~\cite{DP-PDS}. (e): The graph signal estimated by P-PDS with \SPDP~\cite{SPDP} ($\ParamSPDP = 1$). (f): The graph signal estimated by P-PDS with \ONVWOnes (Ours). (g): The graph signal estimated by P-PDS with \ONVWTwos (Ours). (h): The graph signal estimated by P-PDS with \ONVWThrees (Ours).}
	\label{fig:results_GSR}
\end{figure*}

Fig.~\ref{fig:graph_Unm} plots iterations versus RMSE, Residual, and SNR and computational time versus RMSE, Residual, and SNR, respectively. 
In terms of iterations (Figs.~\ref{fig:graph_Unm} (a1), (b1), and (c1)), P-PDS with \SPDPs was very slow in all parameter cases.
P-PDSs with \SPs ($\ParamPDS_{1}=1$), \ONVWTwo, and \ONVWThrees were slightly slow, but P-PDSs with \SPs ($\ParamPDS_{1}=0.1$) and \ASEWs were not.
P-PDSs with \SPs ($\ParamPDS_{1}=0.01$), \SPs ($\ParamPDS_{1}=0.001$), and \ONVWOnes were fast. 
In terms of computational time (Figs.~\ref{fig:graph_Unm} (a2), (b2), and (c2)), P-PDS with \SPs and \ONVWs were similar to the results with respect to iterations. 
P-PDSs with \ASEWs and \SPDPs were very slow because they require the iterative algorithm to calculate the skewed proximity operator in each iteration of P-PDS.
At first glance, the curves generated by P-PDSs with \SPDPs ($\ParamSPDP = 1$, $0.1$, and $0.001$) may appear to converge to different SNRs. 
This is because they take enormous amounts of time to converge (in fact, the convergence times are too enormous to measure). 
Therefore, they do not converge to different SNRs.

Fig.~\ref{fig:results_Unm} shows the unmixing results and the SNR values [dB] obtained by P-PDS with \SPs ($\ParamPDS_{1}=0.001$), \ASEW, \SPDPs ($\ParamSPDP = 0.01$), \ONVWOne, \ONVWTwo, and \ONVWThree. 
The algorithm was run until satisfying the stopping criterion or reaching $10000$ iterations.
We can see that all results are almost the same in terms of the SNR and the visual qualities.

\subsection{Application to Graph Signal Recovery}
Graphs explicitly represent the irregular structures of data~\cite{GSP_Sandryhaila2013,GTV_Shumn_D_2013,GSP_review_Ortega2018}, such as traffic and sensor network data, geographical data, mesh data, and biomedical data.
The signals on the irregular structures are called graph signals.
Similar to classical signal processing, sampling of graph signals~\cite{GSSampling_Tanaka2020} is a leading research topic due to its numerous promising applications, for example, sensor placement, filter bank designs, traffic monitoring, and semi-supervised learning.
In graph signal recovery, which reconstructs original graph signals from sampled graph signals, it is assumed that graph signals have some properties, such as smoothness.
The smoothness of graph signals can be captured by graph total variation type regularizations~\cite{GTV_gilboa_2009,GTGV_Ono2015,GSP_Berger_P_2017}, which have been applied to various graph signal processing tasks~\cite{GTV_appl_LiZ_2017,GTV_appl_LiB_2020}.

\subsubsection{Problem Formulation}
We consider signals on weighted directed graphs $\SymbGraph = (\Vertex, \Edge, \MatWeight)$ with a vertex set $\Vertex = \{1, \ldots, \NumVertex\}$, an edge set $\Edge \subseteq \Vertex \times \Vertex$, and a weighted matrix $\MatWeight \in \RealNumSet^{\NumVertex \times \NumVertex}$.
The value $\MatWeightElem_{i,j}$ is designed to be large if the relation between vertices $i$ and $j$ is strong.
Graph signals are typically assumed to be smooth with respect to the graph $\SymbGraph$.
Based on the assumption, graph signal recovery methods often adopt the graph total variation (GTV)~\cite{GTV_Shumn_D_2013,GSP_Berger_P_2017}:
\begin{equation}
	\GTV{\VarOneGS} := \| \DiffGS\VarOneGS \|_{1, 2} = \sum_{\IndVertex = 1}^{\NumVertex}\|\VarTwoGS_{\IndVertex}\|_{2},
\end{equation}
where $\DiffGS$ is the graph difference operator defined as follows. 
Let $\DiffGS\VarOneGS = [\VarTwoGS_{1}^{\top}, \ldots, \VarTwoGS_{\NumVertex}^{\top}]$, then each $\VarTwoGS_{i}$ consists of the weighted differences between the graph signal value $\VarOneElemGS_{\IndVertex}$ at an $i$th vertex and the graph signal values $\VarOneElemGS_{\IndVertexCor}$ ($\forall \IndVertexCor \in \NeiVertexI{\IndVertex} := \{ k \in \Vertex \: | \: \MatWeightElem_{i,k} \neq 0 \}$) at its connected vertices $\NeiVertexI{\IndVertex}$, i.e., 
\begin{equation}
	[\VarTwoGS_{\IndVertex}]_{\IndVertexCor} := (\VarOneElemGS_{j} - \VarOneElemGS_{\IndVertex})\MatWeightElem_{\IndVertex,\IndVertexCor}, ~ (\forall j \in \NeiVertexI{\IndVertex}).
\end{equation}
By weighting the difference between $\VarOneElemGS_{\IndVertex}$ and $\VarOneElemGS_{\IndVertexCor}$ by $\MatWeightElem_{\IndVertex,\IndVertexCor}$, GTV can capture the graph signal smoothness that the difference of graph signal values is small as the relation of their vertices is strong.

\begin{table}[t]
	\begin{center}
		\caption{The Preconditioners by \ONVWs for Graph Signal Recovery.}
		\label{tab:OVDP_GSR}
		\vspace{-2mm}
		\begin{tabular}{cccc}
			\toprule
			& $\Precon_{1,1}$ 
			& $\Precon_{2,1}$ & $\Precon_{2,2}$ \\
			\cmidrule(lr){2-4}
			
			\vspace{1mm}
			\ONVWOne & $\frac{1}{\NormOp{\DiffGS}^{2} + 1^{2}}\mathbf{I}$ & $\mathbf{I}$ & $\mathbf{I}$ \\
			
			\vspace{1mm}
			\ONVWTwo & $\frac{1}{\NormOp{\DiffGS} + 1}\mathbf{I}$ & $\frac{1}{\NormOp{\DiffGS}}\mathbf{I}$ & $\mathbf{I}$ \\ 
			
			\ONVWThree & $\frac{1}{2}\mathbf{I}$ & $\frac{1}{\NormOp{\DiffGS}^{2}}\mathbf{I}$ & $\mathbf{I}$ \\
			\bottomrule
		\end{tabular}
	\end{center}
	\vspace{-3mm}
\end{table}


\begin{table*}[t]
	\begin{center}
		\caption{The Number of Iterations to Meet the Stopping Criteria. XXX* Means That the Method Requires More Than XXX Iterations.}
		\label{tab:iteration_results}
		\scalebox{0.85}{
			\vspace{-2mm}
			\begin{tabular}{ccccccccccccc}
				\toprule
				& \multicolumn{12}{c}{Methods} \\
				\cmidrule(lr){2-13}
				& \multicolumn{4}{c}{\SP} & \ASEW & \multicolumn{4}{c}{\SPDP} & \textbf{\ONVWOne} & \textbf{\ONVWTwo} & \textbf{\ONVWThree} \\
				\cmidrule(lr){2-5}\cmidrule(lr){7-10}
				& ($\ParamPDS_{1}$ = 1) & ($\ParamPDS_{1}$ = 0.1) & ($\ParamPDS_{1}$ = 0.01) & ($\ParamPDS_{1}$ = 0.001) &  & ($\ParamSPDP$ = 1) & ($\ParamSPDP$ = 0.1) & ($\ParamSPDP$ = 0.01) & ($\ParamSPDP$ = 0.001) & & & \\
				\midrule 
				Mixed noise removal &
				10000* & 4736 & 10000* & 10000* &
				3323 & 
				10000* & 1552 & 10000* & 10000* & 
				5470 & 3325 & 9755 \\
				Unmixing &
				2943 & 395 & 350 & 349 &
				300 & 
				10000* & 10000* & 10000* & 10000* & 
				350 & 525 & 4709 \\
				Graph signal recovery &
				3625 & 806 & 1937 & 10000* &
				448 & 
				194 & 396 & 4129 & 10000* & 
				998 & 1846 & 3546 \\
				Average &
				5523* & 1979 & 4096* & 6783* & 
				1357 & 
				6731* & 3983* & 8043* & 10000* & 
				2273 & 1899 & 6003 \\
				\bottomrule
			\end{tabular}
		}
	\end{center}
	\vspace{-3mm}
\end{table*}


\begin{table*}[t]
	\begin{center}
		\caption{Running time [s] to Meet the Stopping Criteria. XXX* Means That the Method Requires More Than XXX [s].}
		\label{tab:times_results}
		\scalebox{0.85}{
			\vspace{-2mm}
			\begin{tabular}{ccccccccccccc}
				\toprule
				& \multicolumn{12}{c}{Methods} \\
				\cmidrule(lr){2-13}
				& \multicolumn{4}{c}{\SP} & \ASEW & \multicolumn{4}{c}{\SPDP} & \textbf{\ONVWOne} & \textbf{\ONVWTwo} & \textbf{\ONVWThree} \\
				\cmidrule(lr){2-5}\cmidrule(lr){7-10}
				& ($\ParamPDS_{1}$ = 1) & ($\ParamPDS_{1}$ = 0.1) & ($\ParamPDS_{1}$ = 0.01) & ($\ParamPDS_{1}$ = 0.001) &  & ($\ParamSPDP$ = 1) & ($\ParamSPDP$ = 0.1) & ($\ParamSPDP$ = 0.01) & ($\ParamSPDP$ = 0.001) & & & \\
				\midrule 
				Mixed noise removal &
				1000* & 111.86 & 333.72 & 1000* &
				230.40 & 
				1000* & 1000* & 1000* & 1000* & 
				130.70 & 79.35 & 231.78 \\
				Unmixing &
				3.52 & 3.97 & 3.60 & 3.60 &
				64.44 & 
				1000* & 1000* & 1000* & 1000* & 
				11.38 & 5.44 & 46.13 \\
				Graph signal recovery &
				7.54 & 1.65 & 4.00 & 40.61 &
				1000* & 
				607.90 & 450.60 & 1000* & 1000* & 
				2.04 & 3.73 & 7.21 \\
				\bottomrule
			\end{tabular}
		}
	\end{center}
	\vspace{-3mm}
\end{table*}

Consider that an observed graph signal $\ObsGS\in\RealNumSet^{\NumSampVertex}$ is modeled by
\begin{equation}
	\label{eq:obs_model_GSR}
	\ObsGS = \MatSampGS \bar{\GTGS} + \NoiseGS,
\end{equation}
where $\bar{\GTGS}\in\RealNumSet^{\NumVertex}$, $\NoiseGS\in\RealNumSet^{\NumSampVertex}$, and $\MatSampGS\in\{0, 1\}^{\NumSampVertex \times \NumVertex}$ are the true graph signal of interest, random additive noise, and the sampling matrix, respectively.
Based on this observation model, the GTV regularized graph signal recovery problem is formulated as the following convex optimization problem~\cite{GSP_Berger_P_2017}:
\begin{equation}
	\label{prob:form_GSR}
	\min_{\GTGS} 
	\GTV{\DiffGS\GTGS} \: 
	\st \: 
	\MatSampGS \GTGS \in \BallFidelGS.
\end{equation}
The hard constraint guarantees the $\ell_{2}$ data fidelity to the observed signal $\ObsGS$ with the radius $\ParmFidelGS$.

By using the indicator function (see Eq.~\eqref{eq:indicator_function}) of $\BallFidelGS$, Prob.~\eqref{prob:form_GSR} is reduced to Prob.~\eqref{prob:general_form_of_optimization} via the following reformulation:
\begin{align}
	\min_{\GTGS, \VarDualGS_{1}, \VarDualGS_{2}} \: &
	\|\VarDualGS_{1}\|_{1, 2} 
	+ \FuncIndi{\BallFidelGS}{\VarDualGS_{2}} \nonumber \\
	\st \: & 
	\begin{cases}
		\VarDualGS_{1} = \DiffGS\GTGS, \\
		\VarDualGS_{2} = \MatSampGS\GTGS. 
	\end{cases}
	\label{prob:form_GSR_PDS}
\end{align}
Applying Algorithm~\ref{algo:P_PDS} to Prob.~\eqref{prob:form_GSR_PDS}, we can compute an optimal solution of Prob.~\eqref{prob:form_GSR}.
Since the function $\| \cdot \|_{1, 2}$ is not separable for each element of the input variable, an iterative algorithm is needed for the computation of their skewed proximity operators relative to the metric induced by the preconditioners of \ASEWs and \SPDPs in~\eqref{eq:precon_ASEW}.
Here, the preconditioners designed by \ONVWs are given as in Tab.~\ref{tab:OVDP_GSR}.
According to~\cite{GSP_Berger_P_2017}, an upper bound of the operator norm $\NormOp{\DiffGS}$ can be derived by
\begin{equation}
	\NormOp{\DiffGS} \leq 2 \max_{i \in \Vertex} \sum_{j\in\Vertex}(\MatWeightElem_{i,j}^{2} + \MatWeightElem_{j,i}^{2}).
\end{equation}
An upper bound of the norm of the sampling matrix is one, i.e., $\NormOp{\MatSampGS} = 1$.

\subsubsection{Experimental Results}
For \SP, $\PreconOpNormSP$ in~\eqref{eq:SP_expriment} was set as 
\begin{equation}
	\label{eq:SP_GS}
	\PreconOpNormSP = \sqrt{\NormOp{\DiffGS}^{2} + 1},
\end{equation}
because the following inequality holds due to the inequality of the operator norms of block matrices~\cite{norm_inequality}:
\begin{equation}
	\NormOp{\begin{bmatrix} \DiffGS \\ \MatSampGS \end{bmatrix}}^{2} 
	\leq \NormOp{\DiffGS}^{2} + \NormOp{\MatSampGS}^{2}
	\leq \NormOp{\DiffGS}^{2} + 1.
\end{equation}
The preconditioners by \ASEWs in~\eqref{eq:precon_ASEW} for Prob.~\eqref{prob:form_GSR} are
\begin{align}
	[\Precon_{1, 1}]_{i, i} 
	& = \frac{1}{\sum_{j = 1}^{\NumVertex\NumVertex}|\MatWeightElem_{i, j}| + \sum_{k = 1}^{\NumSampVertex}\MatSampElemGS_{k, i}}, \: (\forall i = 1, \ldots, \NumVertex), \nonumber \\
	[\Precon_{2, 1}]_{i, i} 
	& = \frac{1}{2\sum_{j = 1}^{\NumVertex}|\MatWeightElem_{j, i}|}, \: (\forall i = 1, \ldots, \NumVertex\NumVertex), \nonumber \\
	[\Precon_{2, 2}]_{i, i} 
	& = 1, \: (\forall i = 1, \ldots, \NumSampVertex).
	\label{eq:ASP_GSR}
\end{align}
For \SPDP, the preconditioners in~\eqref{eq:precon_SPDP_2} were used since the number of dual variables is two.

We constructed a random sensor graph $\SymbGraph$ by using GSPBox~\cite{perraudin2014gspbox}, then generated a noiseless piece-wise smooth graph signal on the graph with $\NumVertex = 2000$ vertices.
The observed graph signal was obtained by adding white Gaussian noise with $0.1$ of the standard deviation $\StanDivGaussGS$ and by sampling it with $0.2$ of the sampling rate ($\NumSampVertex = 0.2\NumVertex$).
The parameter $\ParmFidelGS$ was set as $\ParmFidelGS = 0.9\StanDivGaussGS\sqrt{\NumSampVertex}$.
For the quantitative evaluation of recovery qualities, we used the Peak Signal-to-Noise Ratio (PSNR):
\begin{equation}
	\mathrm{PSNR}
	:= 10\log_{10}\left(\frac{\NumVertex}{\|\bar{\mathbf{u}}-\mathbf{u}^{(\InnerIter)}\|_{2}^{2}}\right),
\end{equation}

Fig.~\ref{fig:graph_GSR} plots iterations versus RMSE, Residual, and PSNR and computational time versus RMSE, Residual, and PSNR, respectively. 
In terms of iterations (Figs.~\ref{fig:graph_GSR} (a1), (b1), and (c1)), P-PDSs with \SPs ($\ParamPDS_{1}=0.001$) and \SPDPs ($\ParamSPDP = 0.001$) were very slow.
P-PDSs with \SPs ($\ParamPDS_{1}=1$), \SPDPs ($\ParamSPDP = 0.01$), \ONVWThrees were not slow but not fast.
P-PDSs with \SPs ($\ParamPDS_{1}=0.1$), \SPs ($\ParamPDS_{1}=0.01$), \ASEW, \SPDPs ($\ParamSPDP = 1$), \SPDPs ($\ParamSPDP = 0.1$), \ONVWOne, and \ONVWTwos were fast. 
In terms of computational time (Figs.~\ref{fig:graph_GSR} (a2), (b2), and (c2)), P-PDS with \SPs and \ONVWs were similar to the results with respect to iterations. 
P-PDSs with \ASEWs and \SPDPs were very slow because they require the iterative algorithm to calculate the skewed proximity operator.

Fig.~\ref{fig:results_GSR} shows the recovery results and the PSNR values [dB] obtained by P-PDS with \SPs ($\ParamPDS_{1}=0.1$), \ASEW, \SPDPs ($\ParamSPDP = 1$), \ONVWOne, \ONVWTwo, and \ONVWThree. 
The algorithm was run until satisfying the stopping criterion or reaching $10000$ iterations.
We can see that all results are almost the same in terms of the PSNR and the visual qualities.

\subsection{Discussion}
For discussion based on numerical values, we compare the number of iterations (Tab.~\ref{tab:iteration_results}) and running time (Tab.~\ref{tab:times_results}) to satisfy the stopping criteria in Tab.~\ref{tab:convergence_conditions}.

The appropriate parameter for \SPs ($\ParamPDS_{1}$) varied depending on the optimization problem and were  $0.1$ for mixed noise removal, $0.01$ and $0.001$ for unmixing, and $0.1$ and $0.01$ for graph signal recovery. 
If $\ParamPDS_{1}$ is adjusted appropriately, as in the case of the unmixing experiments ($\ParamPDS_{1}$ = $0.01$ and $0.001$), P-PDS with \SPs can converge faster than the automatic preconditioner design methods (\ASEWs and \ONVW). 
However, no parameter results in fast convergence for any optimization problem, and the convergence might be extremely slow, such as at $0.01$ and $0.001$ for mixed noise removal, at $1$ for unmixing, and at $0.001$ for graph signal recovery. 
Therefore, $\ParamPDS_{1}$ needs to be manually adjusted according to each problem.

P-PDS with \ASEWs was the best in terms of the average number of iterations, and P-PDS with PDP ($\ParamSPDP$ is adjusted) resulted in a small number of iterations to converge for both graph signal recovery and mixed noise removal. 
However, for the unmixing experiments, P-PDS with PDP required a more significant number of iterations to converge than P-PDS with SP ($\ParamPDS_{1}$ = $0.01$ and $0.001$) and \ONVW. 
We speculate that this is because the optimization problem of unmixing is relatively complicated; it involves an endmember matrix, while the optimization problems of mixed noise removal and graph signal recovery only include relatively simple difference operators and random sampling matrices in their optimization problems. 
Although P-PDSs with \ASEWs and \SPDPs were fast in the number of iterations, they took a much longer running time to converge. 
This is due to the fact that they require iterative algorithms such as FISTA to compute the skewed proximity operator in each iteration of P-PDS.
Incidentally, since the internal iterations of FISTA vary depending on the task and parameters (e.g., $\tau$), the execution time of P-PDS may be long relative to the number of iterations to convergence. 
For example, P-PDS with \SPDPs ($\tau = 1$) required fewer iterations but a longer running time than P-PDS with \SPDPs ($\tau = 0.1$). 
In addition, P-PDS with \ASEWs took a very long running time per iteration in the graph signal recovery experiment, while it took a short running time in the unmixing experiments.

P-PDSs with \ONVWs achieved good convergence speed in both the number of iterations and the running time thanks to a diagonal preconditioning technique based on the problem structure. 
In addition, they maintain the proximability of the functions, resulting in fast running time.
P-PDS with \ONVWTwos was fast on average in the number of iterations. 
Moreover, P-PDS with \ONVWTwos produced the fastest result in terms of running time for the mixed noise removal experiment. 
P-PDS with \ONVWOnes was faster than P-PDS with \ONVWTwos and \ONVWThrees for the unmixing and graph signal recovery experiments. 
Futhermore, the preconditioners of \ONVWs can be easily calculated in the mixed noise removal case whose optimization problem incorporates the linear operators implemented not as explicit matrices.

These results indicate the following conclusions. 
\begin{itemize}
	\item \SPs and \SPDPs are effective for cases where preconditioners are easily adjusted. In particular, PDP is very effective for the cases where the structure of an optimization problem is simple and the calculation of an inner iteration is efficient.
	\item \ASEWs is applicable to the cases where the structure of an optimization problem is simple, the calculation of an inner iteration is efficient, and the optimization problem only contains linear operators implemented as the represented matrix.
	\item Our \ONVWs can determine effective preconditioners regardless of whether or not the above conditions are satisfied.
	Specifically, for the signal estimation problem that can be handled by \ASEW, our \ONVWs was several hundred times faster than \ASEW.
	\item In addition, P-PDS with our \ONVWs required fewer iterations on average than P-PDSs with \SPs or \SPDP, which require manual adjustments.
\end{itemize}   

\section{Conclusion}
\label{sec:conclusion}
We have proposed \ONVW, which automatically and easily designs preconditioners in a variable-wise manner when a given optimization problem incorporates linear operators represented not as explicit matrices. We also proved the convergence of P-PDS with \ONVW. Applications of our method to three signal estimation tasks have been provided with experimental comparison, where we have shown that our method achieved the fast convergence speed on average and raised the examples of signal processing tasks that \ONVWs is effective to be applied.

\appendix[Proof of Lemma~\ref{lem:matrix_decomp}]
\label{append:proof_lemma1}
\begin{proof}
	Let $r$ be the rank of $\mathbf{A}$ and $\sigma_{1}(\mathbf{A}), \ldots, \sigma_{r}(\mathbf{A})$ be the singular values of $\mathbf{A}$. Then, $\mathbf{A}$ can be decomposed as
	\begin{equation}
		\mathbf{A} = \mathbf{U} \mathbf{\Sigma} \mathbf{V}^{*},
	\end{equation}
	where $\mathbf{U}\in\RealNumSet^{m\times r}$ and $\mathbf{V} \in \RealNumSet^{n\times r}$ satisfy $\mathbf{U}^{*}\mathbf{U} = \mathbf{I}$ and $\mathbf{V}^{*}\mathbf{V} = \mathbf{I}$. 
	Then, we introduce an $r \times r$ unitary matrix $\mathbf{W}$ and define $\mathbf{B}$ and $\mathbf{C}$ as
	\begin{equation}
		\mathbf{B} = \mathbf{U} \mathbf{\Sigma}^{1 - \ParamONP} \mathbf{W}^{*}, 
		\mathbf{C} = \mathbf{W} \mathbf{\Sigma}^{\ParamONP} \mathbf{V}^{*},
	\end{equation}
	where $\mathbf{\Sigma}^{1 - \ParamONP} = \mathrm{diag}(\sigma_{1}(\mathbf{A})^{1 - \ParamONP}, \ldots, \sigma_{r}(\mathbf{A})^{1 - \ParamONP})$ and $\mathbf{\Sigma}^{\ParamONP} = \mathrm{diag}(\sigma_{1}(\mathbf{A})^{\ParamONP}, \ldots, \sigma_{r}(\mathbf{A})^{\ParamONP})$.
	It is clear that $\mathbf{A} = \mathbf{BC}$.
	In turn, we obtain from the definition that
	\begin{align}
		\| \mathbf{Bx} \|_{2}^{2} 
		& = \|\mathbf{\Sigma}^{1 - \ParamONP}\mathbf{W}^{*}\mathbf{x}\|_{2}^{2} \nonumber \\
		& \leq \sigma_{1}(\mathbf{A})^{2 - 2\ParamONP} \|\mathbf{W}^{*}\mathbf{x}\|_{2}^{2} \nonumber \\ 
		& = \sigma_{1}(\mathbf{A})^{2 - 2\ParamONP} \|\mathbf{x}\|_{2}^{2}.
	\end{align}
	Hence
	\begin{equation}
		\NormOp{\mathbf{B}} = \sup_{\mathbf{x} \neq \ZeroElem} \frac{\|\mathbf{Bx}\|_{2}}{\|\mathbf{x}\|_{2}} = \sigma_{1}(\mathbf{A})^{1 - \ParamONP}.
	\end{equation}
	Arguing similarly, $\mathbf{C}$ satisfies $\NormOp{\mathbf{C}} = \sigma_{1}(\mathbf{A})^{\ParamONP}$.
	
	\begin{flushright} $\square$ \end{flushright}
\end{proof}

%





\ifCLASSOPTIONcaptionsoff
  \newpage
\fi



%
\bibliographystyle{IEEEtran}


\begin{thebibliography}{10}
\providecommand{\url}[1]{#1}
\csname url@samestyle\endcsname
\providecommand{\newblock}{\relax}
\providecommand{\bibinfo}[2]{#2}
\providecommand{\BIBentrySTDinterwordspacing}{\spaceskip=0pt\relax}
\providecommand{\BIBentryALTinterwordstretchfactor}{4}
\providecommand{\BIBentryALTinterwordspacing}{\spaceskip=\fontdimen2\font plus
\BIBentryALTinterwordstretchfactor\fontdimen3\font minus
  \fontdimen4\font\relax}
\providecommand{\BIBforeignlanguage}[2]{{%
\expandafter\ifx\csname l@#1\endcsname\relax
\typeout{** WARNING: IEEEtran.bst: No hyphenation pattern has been}%
\typeout{** loaded for the language `#1'. Using the pattern for}%
\typeout{** the default language instead.}%
\else
\language=\csname l@#1\endcsname
\fi
#2}}
\providecommand{\BIBdecl}{\relax}
\BIBdecl

\bibitem{parikh2014proximal}
N.~Parikh and S.~Boyd, ``Proximal algorithms,'' \emph{Found. Trends Mach
  Learn.}, vol.~1, no.~3, pp. 127--239, 2014.

\bibitem{combettes2021fixed}
P.~L. Combettes and J.-C. Pesquet, ``Fixed point strategies in data science,''
  \emph{IEEE Trans. Signal Process.}, vol.~69, pp. 3878--3905, 2021.

\bibitem{PDS_1}
A.~Chambolle and T.~Pock, ``A first-order primal-dual algorithm for convex
  problems with applications to imaging,'' \emph{J. Math. Imag. Vis.}, vol.~40,
  no.~1, pp. 120--145, 2010.

\bibitem{PDS_appl_Condat_2014}
L.~Condat, ``A generic proximal algorithm for convex optimization—application
  to total variation minimization,'' \emph{IEEE Signal Process. Lett.},
  vol.~21, no.~8, pp. 985--989, Aug. 2014.

\bibitem{AdaptivePDS_Goldstein_2015}
\BIBentryALTinterwordspacing
T.~Goldstein, M.~Li, and X.~Yuan, ``Adaptive primal-dual splitting methods for
  statistical learning and image processing,'' in \emph{Proc. Advances Neural
  Inf. Process. Syst.}, C.~Cortes, N.~Lawrence, D.~Lee, M.~Sugiyama, and
  R.~Garnett, Eds., vol.~28.\hskip 1em plus 0.5em minus 0.4em\relax Curran
  Associates, Inc., 2015, pp. 2089--2097. [Online]. Available:
  \url{https://proceedings.neurips.cc/paper/2015/file/cd758e8f59dfdf06a852adad277986ca-Paper.pdf}
\BIBentrySTDinterwordspacing

\bibitem{PDS_appl_Ono_2015}
S.~Ono and I.~Yamada, ``Hierarchical convex optimization with primal-dual
  splitting,'' \emph{IEEE Trans. Signal Process.}, vol.~63, no.~2, pp.
  373--388, Jan. 2015.

\bibitem{PDS_review_2015}
N.~Komodakis and J.-C. Pesquet, ``Playing with duality: An overview of recent
  primal-dual approaches for solving large-scale optimization problems,''
  \emph{IEEE Signal Process. Mag.}, vol.~32, no.~6, pp. 31--54, Nov. 2015.

\bibitem{ono2017primal}
S.~Ono, ``Primal-dual plug-and-play image restoration,'' \emph{IEEE Signal
  Process. Lett.}, vol.~24, no.~8, pp. 1108--1112, Aug. 2017.

\bibitem{PDS_Linesearch_Malitsky_2018}
Y.~Malitsky and T.~Pock, ``A first-order primal-dual algorithm with
  linesearch,'' \emph{SIAM J. Optim.}, vol.~28, no.~1, pp. 411--432, 2018.

\bibitem{PDS_appl_Kyochi_2021}
S.~Kyochi, S.~Ono, and I.~Selesnick, ``Epigraphical relaxation for minimizing
  layered mixed norms,'' \emph{IEEE Trans. Signal Process.}, vol.~69, pp.
  2923--2938, 2021.

\bibitem{GPDS_He_2022}
B.~He, F.~Ma, S.~Xu, and X.~Yuan, ``A generalized primal-dual algorithm with
  improved convergence condition for saddle point problems,'' \emph{SIAM J.
  Imag. Sci.}, vol.~15, no.~3, pp. 1157--1183, 2022.

\bibitem{PDS_Linesearch_gold_Chang_2022}
X.-K. Chang, J.~Yang, and H.~Zhang, ``Golden ratio primal-dual algorithm with
  linesearch,'' \emph{SIAM J. Optim.}, vol.~32, no.~3, pp. 1584--1613, 2022.

\bibitem{PDS_2}
L.~Condat, ``A primal-dual splitting method for convex optimization involving
  lipschitzian, proximable and linear composite terms,'' \emph{J. Opt. Theory
  Appl.}, vol. 158, no.~2, pp. 460--479, 2013.

\bibitem{PDS_3}
B.~C. Vu, ``A splitting algorithm for dual monotone inclusions involving
  cocoercive operators,'' \emph{Adv. Comput. Math.}, vol.~38, no.~3, pp.
  667--681, 2013.

\bibitem{DP-PDS}
T.~{Pock} and A.~{Chambolle}, ``Diagonal preconditioning for first order
  primal-dual algorithms in convex optimization,'' in \emph{IEEE Int. Conf.
  Comput. Vis. (ICCV)}, Nov. 2011, pp. 1762--1769.

\bibitem{DP_PDS_grad}
\BIBentryALTinterwordspacing
M.~Wen, J.~Peng, C.~Zhu, S.~Yue, and Y.~Tang, ``A preconditioning technique for
  first-order primal-dual splitting method in convex optimization,''
  \emph{Math. Problems Eng.}, vol. 2017, 2017. [Online]. Available:
  \url{https://www.hindawi.com/journals/mpe/2017/3694525/}
\BIBentrySTDinterwordspacing

\bibitem{pmlr-v108-ye20a}
\BIBentryALTinterwordspacing
Z.~Ye, T.~M\"ollenhoff, T.~Wu, and D.~Cremers, ``Optimization of graph total
  variation via active-set-based combinatorial reconditioning,'' in \emph{Proc.
  23rd Int. Conf. Artif. Intell. Statist.}, ser. Proceedings of Machine
  Learning Research, S.~Chiappa and R.~Calandra, Eds., vol. 108.\hskip 1em plus
  0.5em minus 0.4em\relax PMLR, 26--28 Aug. 2020, pp. 657--668. [Online].
  Available: \url{https://proceedings.mlr.press/v108/ye20a.html}
\BIBentrySTDinterwordspacing

\bibitem{SPDP}
Y.~Liu, Y.~Xu, and W.~Yin, ``Acceleration of primal–dual methods by
  preconditioning and simple subproblem procedures,'' \emph{J. Sci. Comput.},
  vol.~86, no.~21, pp. 1--34, Jan. 2021.

\bibitem{ono_2015}
S.~Ono and I.~Yamada, ``Signal recovery with certain involved convex
  data-fidelity constraints,'' \emph{IEEE Trans. Signal Process.}, vol.~63,
  no.~22, pp. 6149--6163, Nov. 2015.

\bibitem{COMBETTES201317}
\BIBentryALTinterwordspacing
P.~L. Combettes and B.~C. V{\~v}, ``Variable metric quasi-fej{\'e}r
  monotonicity,'' \emph{Nonlin. Anal.}, vol.~78, pp. 17--31, Feb. 2013.
  [Online]. Available:
  \url{https://www.sciencedirect.com/science/article/pii/S0362546X12003616}
\BIBentrySTDinterwordspacing

\bibitem{BStephen_quasiNewton_2019}
\BIBentryALTinterwordspacing
S.~Becker, J.~Fadili, and P.~Ochs, ``On quasi-{N}ewton forward-backward
  splitting: {P}roximal calculus and convergence,'' \emph{SIAM J. Optim.},
  vol.~29, no.~4, pp. 2445--2481, 2019. [Online]. Available:
  \url{https://doi.org/10.1137/18M1167152}
\BIBentrySTDinterwordspacing

\bibitem{chambolle2010introduction}
A.~Chambolle, V.~Caselles, D.~Cremers, M.~Novaga, and T.~Pock, ``An
  introduction to total variation for image analysis,'' in \emph{Theoretical
  foundations and numerical methods for sparse recovery}.\hskip 1em plus 0.5em
  minus 0.4em\relax de Gruyter, 2010, pp. 263--340.

\bibitem{difference_operator_1}
\BIBentryALTinterwordspacing
K.~Bredies and M.~Holler, ``Higher-order total variation approaches and
  generalisations,'' \emph{Inverse Problems}, vol.~36, no.~12, p. 123001, Dec.
  2020. [Online]. Available: \url{https://doi.org/10.1088/1361-6420/ab8f80}
\BIBentrySTDinterwordspacing

\bibitem{Frame1}
J.~Kovacevic and A.~Chebira, ``Life beyond bases: The advent of frames (part
  {I}),'' \emph{IEEE Signal Process. Mag.}, vol.~24, no.~4, pp. 86--104, Jul.
  2007.

\bibitem{Frame2}
J.-F. Cai, H.~Ji, Z.~Shen, and G.-B. Ye, ``Data-driven tight frame construction
  and image denoising,'' \emph{Appl. Comput. Harmon. Anal.}, vol.~37, no.~1,
  pp. 89--105, 2014.

\bibitem{Frame3}
A.~Parekh and I.~W. Selesnick, ``Convex denoising using non-convex tight frame
  regularization,'' \emph{IEEE Signal Process. Lett.}, vol.~22, no.~10, pp.
  1786--1790, Oct. 2015.

\bibitem{Proximity_Operator_Repository}
G.~Chierchia, E.~Chouzenoux, P.~L. Combettes, and J.-C. Pesquet, ``The
  proximity operator repository,'' \emph{User's guide
  http://proximityoperator.net/download/guide.pdf (accessed October 3rd,
  2021)}, 2020.

\bibitem{P-PDS_naganuma_2022}
K.~{Naganuma} and S.~{Ono}, ``Operator-norm-based variable-wise diagonal
  preconditioning for automatic stepsize selection of a primal-dual splitting
  algorithm,'' in \emph{Proc. Eur. Signal Process. Conf. (EUSIPCO)}, Aug. 2022,
  pp. 2041--2045.

\bibitem{G_Moreau_Identity_2013}
P.~L. Combettes and N.~N. Reyes, ``Moreau’s decomposition in banach spaces,''
  \emph{Math. Program.}, vol. 139, pp. 103--114, Jun. 2013.

\bibitem{FISTA}
A.~Beck and T.~M., ``A fast iterative shrinkage-thresholding algorithm for
  linear inverse problems,'' \emph{SIAM J. Imag. Sci.}, vol.~2, no.~1, pp.
  183--202, 2009.

\bibitem{MNR_LFRR_Zheng2020}
Y.-B. Zheng, T.-Z. Huang, X.-L. Zhao, T.-X. Jiang, T.-H. Ma, and T.-Y. Ji,
  ``Mixed noise removal in hyperspectral image via low-fibered-rank
  regularization,'' \emph{IEEE Trans. Geosci. Remote Sens.}, vol.~58, no.~1,
  pp. 734--749, Jan. 2020.

\bibitem{MNR_LRTF_Zheng2020}
Y.-B. Zheng, T.-Z. Huang, X.-L. Zhao, Y.~Chen, and W.~He,
  ``Double-factor-regularized low-rank tensor factorization for mixed noise
  removal in hyperspectral image,'' \emph{IEEE Trans. Geosci. Remote Sens.},
  vol.~58, no.~12, pp. 8450--8464, Dec. 2020.

\bibitem{MNR_Zhang2022}
L.~Zhang, Y.~Qian, J.~Han, P.~Duan, and P.~Ghamisi, ``Mixed noise removal for
  hyperspectral image with ${l_{0}}$-${l_{1 - 2}}$sstv regularization,''
  \emph{IEEE J. Sel. Topics Appl. Earth Observ. Remote Sens.}, vol.~15, pp.
  5371--5387, Jun. 2022.

\bibitem{ghamisi2017advances}
P.~Ghamisi, N.~Yokoya, J.~Li, W.~Liao, S.~Liu, J.~Plaza, B.~Rasti, and
  A.~Plaza, ``Advances in hyperspectral image and signal processing: A
  comprehensive overview of the state of the art,'' \emph{IEEE Geosci. Remote
  Sens. Mag.}, vol.~5, no.~4, pp. 37--78, 2017.

\bibitem{HSClassification_review_Audebert2019}
N.~Audebert, B.~Le~Saux, and S.~Lefevre, ``Deep learning for classification of
  hyperspectral data: A comparative review,'' \emph{IEEE Geosci. Remote Sens.
  Mag.}, vol.~7, no.~2, pp. 159--173, 2019.

\bibitem{HSAnomaly_review_Su2022}
H.~Su, Z.~Wu, H.~Zhang, and Q.~Du, ``Hyperspectral anomaly detection: A
  survey,'' \emph{IEEE Geosci. Remote Sens. Mag.}, vol.~10, no.~1, pp. 64--90,
  2022.

\bibitem{SSTV}
H.~K. {Aggarwal} and A.~{Majumdar}, ``Hyperspectral image denoising using
  spatio-spectral total variation,'' \emph{IEEE Geosci. Remote Sens. Lett.},
  vol.~13, no.~3, pp. 442--446, Feb. 2016.

\bibitem{SSTV_LRTD}
H.~Fan, C.~Li, Y.~Guo, G.~Kuang, and J.~Ma, ``Spatial-spectral total variation
  regularized low-rank tensor decomposition for hyperspectral image
  denoising,'' \emph{IEEE Trans. Geosci. Remote Sens.}, vol.~56, no.~10, pp.
  6196--6213, Oct. 2018.

\bibitem{LRMRSSTV}
W.~{He}, H.~{Zhang}, H.~{Shen}, and L.~{Zhang}, ``Hyperspectral image denoising
  using local low-rank matrix recovery and global spatial-spectral total
  variation,'' \emph{IEEE J. Sel. Topics Appl. Earth Observ. Remote Sens.},
  vol.~11, no.~3, pp. 713--729, Mar. 2018.

\bibitem{GLSSTV}
T.~{Ince}, ``Hyperspectral image denoising using group low-rank and
  spatial-spectral total variation,'' \emph{IEEE Access}, vol.~7, pp.
  52\,095--52\,109, Apr. 2019.

\bibitem{SSTV_hybrid}
M.~Wang, Q.~Wang, J.~Chanussot, and D.~Hong, ``$l_{0}$-$l_{1}$ hybrid total
  variation regularization and its applications on hyperspectral image mixed
  noise removal and compressed sensing,'' \emph{IEEE Trans. Geosci. Remote
  Sens.}, vol.~59, no.~9, pp. 7695--7710, Sep. 2021.

\bibitem{FC_naganuma_2022}
K.~Naganuma and S.~Ono, ``A general destriping framework for remote sensing
  images using flatness constraint,'' \emph{IEEE Trans. Geosci. and Remote
  Sens.}, vol.~60, pp. 1--16, Feb. 2022, {A}rt no. 5525016.

\bibitem{GSSTV_Takemoto2022}
S.~Takemoto, K.~Naganuma, and S.~Ono, ``Graph spatio-spectral total variation
  model for hyperspectral image denoising,'' \emph{IEEE Geoscience and Remote
  Sensing Letters}, vol.~19, pp. 1--5, Jul. 2022, {A}rt no. 6012405.

\bibitem{CSALSA}
M.~Afonso, J.~Bioucas-Dias, and M.~Figueiredo, ``An augmented {L}agrangian
  approach to the constrained optimization formulation of imaging inverse
  problems,'' \emph{IEEE Trans. Image Process.}, vol.~20, no.~3, pp. 681--695,
  Mar. 2011.

\bibitem{EPIpre}
G.~Chierchia, N.~Pustelnik, J.-C. Pesquet, and B.~Pesquet-Popescu,
  ``Epigraphical projection and proximal tools for solving constrained convex
  optimization problems,'' \emph{Signal, Image Video Process.}, vol.~9, no.~8,
  pp. 1737--1749, 2015.

\bibitem{ono_2019}
S.~Ono, ``Efficient constrained signal reconstruction by randomized
  epigraphical projection,'' in \emph{Proc. IEEE Int. Conf. Acoust., Speech
  Signal Process. (ICASSP)}.\hskip 1em plus 0.5em minus 0.4em\relax IEEE, 2019,
  pp. 4993--4997.

\bibitem{OP_Diff_Chambolle_2004}
A.~Chambolle, ``An algorithm for total variation minimization and
  applications,'' \emph{J. Math. Imag. Vis.}, vol.~20, pp. 89--97, 2004.

\bibitem{norm_inequality}
R.~Bhatia and F.~Kittaneh, ``Norm inequalities for partitioned operators and an
  application,'' \emph{Math. Ann.}, vol. 287, no.~4, pp. 719--726, 1990.

\bibitem{MoffettField}
``{AVIRIS},'' \url{https://aviris.jpl.nasa.gov/data/free_data.html}.

\bibitem{keshava2002spectral}
N.~Keshava and J.~F. Mustard, ``Spectral unmixing,'' \emph{IEEE Signal Process.
  Mag}, vol.~19, no.~1, pp. 44--57, 2002.

\bibitem{ma2013signal}
W.~Ma, J.~M. Bioucas-Dias, T.~Chan, N.~Gillis, P.~Gader, A.~Plaza,
  A.~Ambikapathi, and C.~Chi, ``A signal processing perspective on
  hyperspectral unmixing: Insights from remote sensing,'' \emph{IEEE Signal
  Process. Mag.}, vol.~31, no.~1, pp. 67--81, 2013.

\bibitem{CLSUnSAL_iordache2013}
M.-D. Iordache, J.~M. Bioucas-Dias, and A.~Plaza, ``Collaborative sparse
  regression for hyperspectral unmixing,'' \emph{IEEE Trans. Geosci. Remote
  Sens.}, vol.~52, no.~1, pp. 341--354, Jan. 2013.

\bibitem{JSTV_Aggarwal_2016}
H.~K. Aggarwal and A.~Majumdar, ``Hyperspectral unmixing in the presence of
  mixed noise using joint-sparsity and total variation,'' \emph{IEEE J. Sel.
  Topics Appl. Earth Observ. Remote Sens.}, vol.~9, no.~9, pp. 4257--4266, Sep.
  2016.

\bibitem{RSSUn_TV_Wang_2019}
J.-J. Wang, T.-Z. Huang, J.~Huang, H.-X. Dou, L.-J. Deng, and X.-L. Zhao,
  ``Row-sparsity spectral unmixing via total variation,'' \emph{IEEE J. Sel.
  Topics Appl. Earth Observ. Remote Sens.}, vol.~12, no.~12, pp. 5009--5022,
  Dec. 2019.

\bibitem{ICoNMF_TV_Yuan_2020}
Y.~Yuan, Z.~Zhang, and Q.~Wang, ``Improved collaborative non-negative matrix
  factorization and total variation for hyperspectral unmixing,'' \emph{IEEE J.
  Sel. Topics Appl. Earth Observ. Remote Sens.}, vol.~13, pp. 998--1010, Mar.
  2020.

\bibitem{GSP_Sandryhaila2013}
A.~Sandryhaila and J.~M.~F. Moura, ``Discrete signal processing on graphs,''
  \emph{IEEE Trans. Signal Process.}, vol.~61, no.~7, pp. 1644--1656, 2013.

\bibitem{GTV_Shumn_D_2013}
D.~I. Shuman, S.~K. Narang, P.~Frossard, A.~Ortega, and P.~Vandergheynst, ``The
  emerging field of signal processing on graphs: Extending high-dimensional
  data analysis to networks and other irregular domains,'' \emph{IEEE Signal
  Process. Mag.}, vol.~30, no.~3, pp. 83--98, May 2013.

\bibitem{GSP_review_Ortega2018}
A.~Ortega, P.~Frossard, J.~Kovačević, J.~M.~F. Moura, and P.~Vandergheynst,
  ``Graph signal processing: Overview, challenges, and applications,''
  \emph{Proc. the IEEE}, vol. 106, no.~5, pp. 808--828, 2018.

\bibitem{GSSampling_Tanaka2020}
Y.~Tanaka, Y.~C. Eldar, A.~Ortega, and G.~Cheung, ``Sampling signals on graphs:
  From theory to applications,'' \emph{IEEE Signal Process. Mag.}, vol.~37,
  no.~6, pp. 14--30, 2020.

\bibitem{GTV_gilboa_2009}
G.~Gilboa and S.~Osher, ``Nonlocal operators with applications to image
  processing,'' \emph{Multiscale Model. Simul.}, vol.~7, no.~3, pp. 1005--1028,
  2009.

\bibitem{GTGV_Ono2015}
S.~Ono, I.~Yamada, and I.~Kumazawa, ``Total generalized variation for graph
  signals,'' in \emph{Proc. IEEE Int. Conf. Acoust., Speech Signal Process.
  (ICASSP)}, 2015, pp. 5456--5460.

\bibitem{GSP_Berger_P_2017}
P.~Berger, G.~Hannak, and G.~Matz, ``Graph signal recovery via primal-dual
  algorithms for total variation minimization,'' \emph{IEEE J. Sel. Topics
  Signal Process.}, vol.~11, no.~6, pp. 842--855, Sep. 2017.

\bibitem{GTV_appl_LiZ_2017}
Z.~Li, F.~Malgouyres, and T.~Zeng, ``Regularized non-local total variation and
  application in image restoration,'' \emph{J. Math. Imag. Vis.}, vol.~59,
  no.~2, pp. 296--317, 2017.

\bibitem{GTV_appl_LiB_2020}
B.~Li, Y.-K. Lai, and P.~L. Rosin, ``Sparse graph regularized mesh color edit
  propagation,'' \emph{IEEE Trans. Image Process.}, vol.~29, pp. 5408--5419,
  2020.

\bibitem{perraudin2014gspbox}
N.~{Perraudin}, J.~{Paratte}, D.~{Shuman}, L.~{Martin}, V.~{Kalofolias},
  P.~{Vandergheynst}, and D.~K. {Hammond}, ``{GSPBOX: A toolbox for signal
  processing on graphs},'' \emph{ArXiv e-prints}, 2014.

\end{thebibliography}


%

%

\begin{IEEEbiography}[{\includegraphics[width=1in,height=1.25in,clip,keepaspectratio]{./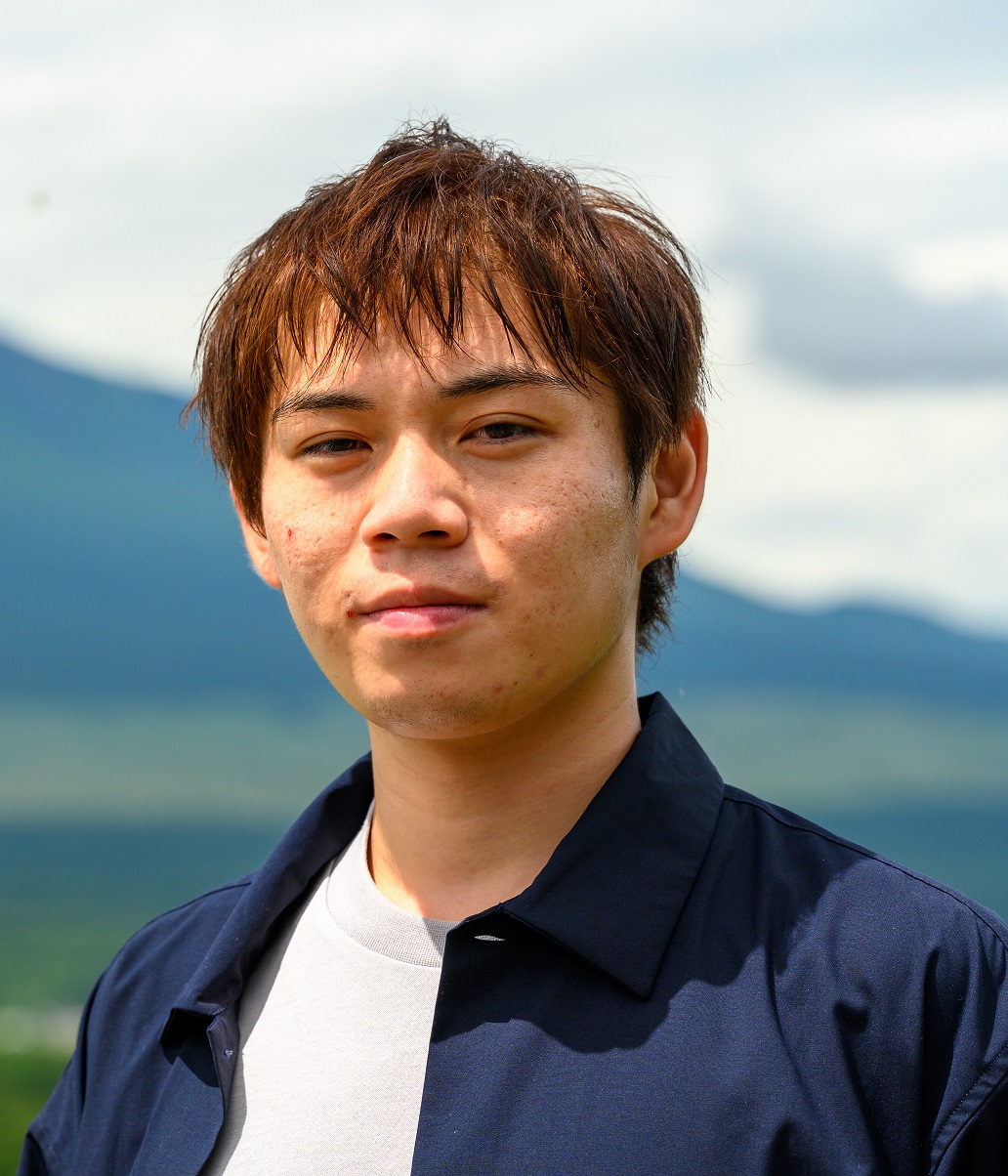}}]{Kazuki Naganuma}
Kazuki Naganuma (S’21) received a B.E. degree and M.E. degree in Information and Computer Sciences in 2020 from the Kanagawa Institute of Technology and from the Tokyo Institute of Technology, respectively.

He is currently pursuing an Ph.D. degree at the Department of Computer Science in the Tokyo Institute of Technology. His current research interests are in signal and image processing and optimization theory.
\end{IEEEbiography}

\begin{IEEEbiography}[{\includegraphics[width=1in,height=1.25in,clip,keepaspectratio]{./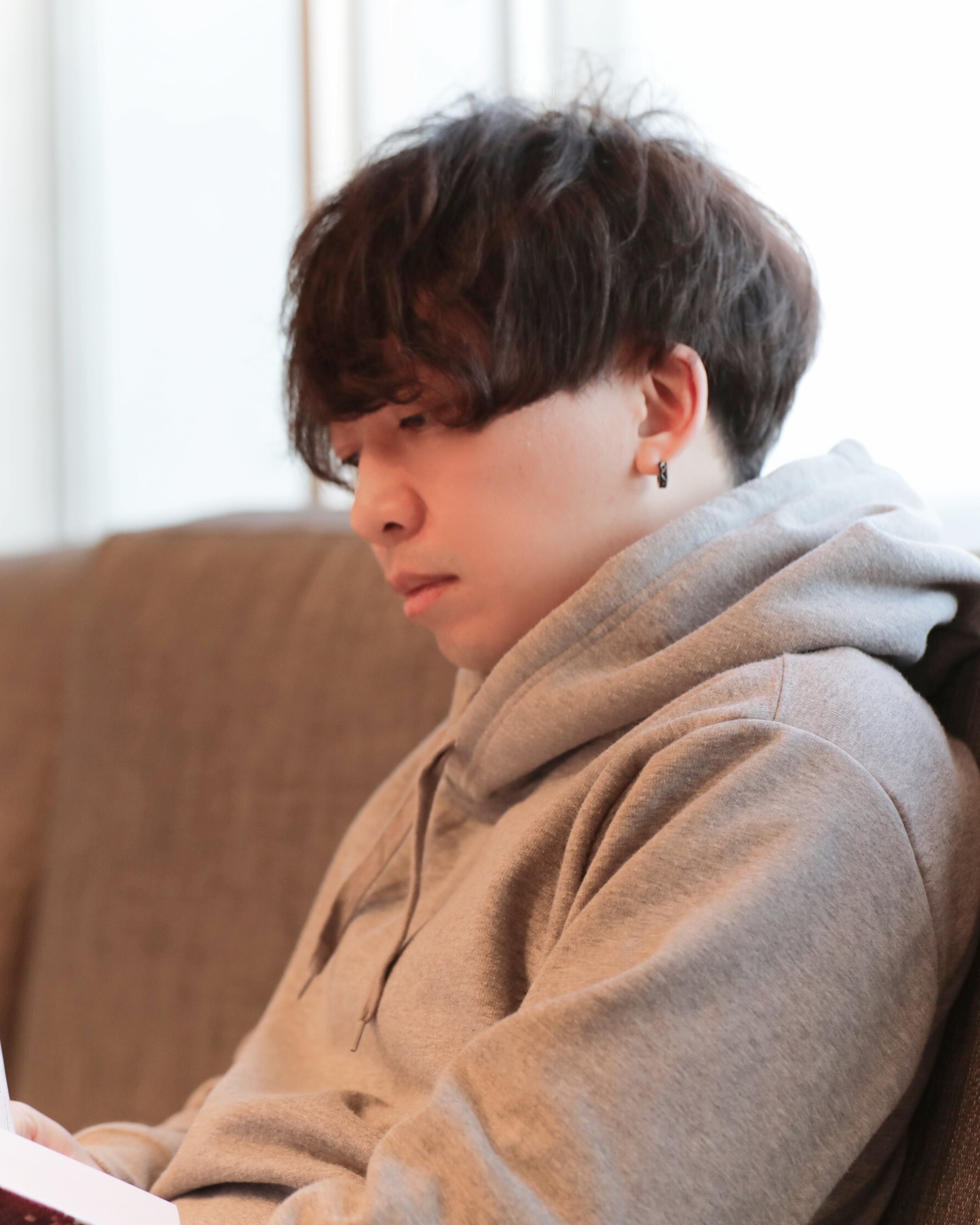}}]{Shunsuke Ono}
(S’11–M’15) received a B.E. degree in Computer Science in 2010 and M.E. and Ph.D. degrees in Communications and Computer Engineering in 2012 and 2014 from the Tokyo Institute of Technology, respectively.

From April 2012 to September 2014, he was a Research Fellow (DC1) of the Japan Society for the Promotion of Science (JSPS). He is currently an Associate Professor in the Department of Computer Science, School of Computing, Tokyo Institute of Technology. From October 2016 to March 2020 and from October 2021 to present, he was/is a Researcher of Precursory Research for Embryonic Science and Technology (PRESTO), Japan Science and Technology Corporation (JST), Tokyo, Japan. His research interests include signal processing, image analysis, remote sensing, mathematical optimization, and data science.

Dr. Ono received the Young Researchers’ Award and the Excellent Paper Award from the IEICE in 2013 and 2014, respectively, the Outstanding Student Journal Paper Award and the Young Author Best Paper Award from the IEEE SPS Japan Chapter in 2014 and 2020, respectively, the Funai Research Award from the Funai Foundation in 2017, the Ando Incentive Prize from the Foundation of Ando Laboratory in 2021, and the Young Scientists’ Award from MEXT in 2022. He has been an Associate Editor of IEEE TRANSACTIONS ON SIGNAL AND INFORMATION PROCESSING OVER NETWORKS since 2019.
\end{IEEEbiography}





\end{document}